\pdfoutput=1
\RequirePackage{fix-cm}
\documentclass[11pt]{article}
\usepackage[margin=1in]{geometry}
\usepackage{mathtools,amssymb,amsthm}
\usepackage[varbb]{newpxmath}
\usepackage{microtype}
\usepackage{graphicx}
\usepackage{capt-of,float,needspace,multicol,booktabs,array}
\usepackage[algo2e,ruled]{algorithm2e}
\usepackage[nolist]{acronym}
\newacro{FTRL}{Follow-the-Regularized-Leader}
\usepackage[dvipsnames,svgnames]{xcolor}
\usepackage{tikz}
\usetikzlibrary{arrows.meta,positioning}
\usepackage{tcolorbox}
\tcbuselibrary{skins,breakable}
\usepackage{pgfplots}
\usepgfplotslibrary{groupplots}
\pgfplotsset{compat=1.18}
\usepackage{hyperref}
\definecolor{blueGrotto}{HTML}{059DC0}
\definecolor{royalBlue}{HTML}{057DCD}
\definecolor{navyBlue}{HTML}{0B579C}
\hypersetup{
  colorlinks=true,
  urlcolor=blueGrotto,
  linkcolor=royalBlue,
  citecolor=navyBlue
}
\newtagform{referenceblue}[\textcolor{royalBlue}]{\textcolor{black}{(}}{\textcolor{black}{)}}
\usetagform{referenceblue}

\usepackage[
  backref=true,
  backend=biber,
  natbib=true,
  style=alphabetic,
  sorting=alphabeticlabel,
  sortcites=true,
  minbibnames=3,
  maxbibnames=999,
  mincitenames=4,
  maxcitenames=4,
  minalphanames=4,
  maxalphanames=4
]{biblatex}
\DeclareSortingTemplate{alphabeticlabel}{
  \sort[final]{\field{labelalpha}}
  \sort{\field{year}}
  \sort{\field{title}}
}
\AtBeginRefsection{\GenRefcontextData{sorting=ynt}}
\AtEveryCite{\localrefcontext[sorting=ynt]}
\usepackage{xurl}
\theoremstyle{plain}
\newtheorem{theorem}{Theorem}
\newtheorem{lemma}{Lemma}
\newtheorem{proposition}{Proposition}
\newtheorem{corollary}{Corollary}

\newtheorem{maintheorem}[theorem]{Theorem}
\newtheorem*{gapfact}{Fact (Gap Independent Set; Dinur--Safra)}
\theoremstyle{definition}
\newtheorem{definition}{Definition}
\newtheorem{assumption}{Assumption}

\theoremstyle{remark}
\newtheorem{remark}{Remark}
\newenvironment{proofof}[1]{\begin{proof}[Proof of #1]}{\end{proof}}

\definecolor{brightblue}{RGB}{0,90,210}
\tcolorboxenvironment{maintheorem}{
  enhanced, breakable, lines before break=3, frame hidden, sharp corners,
  colback=brightblue!4,
  borderline west={1.2pt}{0pt}{brightblue},
  left=4pt, right=4pt, top=4pt, bottom=4pt, boxsep=0pt,
  oversize,
  before skip=7pt plus 2pt, after skip=7pt plus 2pt
}

\title{The Complexity of Strategizing Against No-Regret Learners:\\
Pure When It Is Easy, Hard Even When Mixed}
\date{}
\author{
Yiheng Su\\
University of Wisconsin--Madison\\
\texttt{ysu24@cs.wisc.edu}
\and
Emmanouil-Vasileios Vlatakis-Gkaragkounis\\
University of Wisconsin--Madison\\
\texttt{vlatakis@wisc.edu}
}
\hypersetup{
  pdftitle={The Complexity of Strategizing Against No-Regret Learners: Pure When It Is Easy, Hard Even When Mixed},
  pdfauthor={Yiheng Su and Emmanouil-Vasileios Vlatakis-Gkaragkounis}
}

\begin{document}
\pagenumbering{roman}
\maketitle
\bigskip
\bigskip
\thispagestyle{empty}
% !TEX root = ./main.tex
\begin{abstract}
We study the computational complexity of clairvoyant
strategy optimization against a known online learner
in finite-horizon repeated games. In general-sum games
with bounded payoffs, unless $\mathsf{P}=\mathsf{NP}$, no polynomial-time
optimizer can achieve a reward within $cT$
of the reward of the optimal pure-action
sequence for an absolute constant $c>0$, even
when allowed to output explicitly represented mixed
strategies at every round. Our hardness result
strengthens prior work along two complementary axes:
it extends to a broader class of
learning dynamics, including widely used mean-based algorithms—answering
an open question of \citet{assos2025intractability} for learners
beyond multiplicative weights—and it permits mixed outputs
while retaining the weaker pure-sequence benchmark. Thus,
neither mixed strategies nor algorithmic specificity removes the
computational obstruction in general-sum games. In sharp
contrast, zero-sum games exhibit the reverse phenomenon:
deterministic pure actions suffice for efficiently recovering
the leading surplus guarantees previously obtained
through mixed play against \ac{FTRL} opponents.

Thus, the two regimes form a striking
mirror image: mixed play cannot overcome general-sum
hardness even against a pure benchmark, whereas
pure play retains the full surplus guarantees
of mixed strategies in zero-sum games.
\end{abstract}
\acresetall

\clearpage
\tableofcontents
\thispagestyle{empty}
\clearpage
\pagenumbering{arabic}

% !TEX root = ../main.tex
\section{Introduction}
\label{sec:introduction}
Learning in games is usually studied in
the \emph{forward direction}: fix the environment and
ask whether the learner converges or incurs
little regret. We ask how to shape
that behavior. An \emph{optimizer} knows both players'
payoffs and the learner's update rule, but
today's actions affect tomorrow's responses. Sacrificing reward
now may yield greater gains later.
\par\begingroup
\setlength{\topsep}{5pt}
\setlength{\partopsep}{0pt}
\begin{center}
\emph{Can
the optimizer efficiently find a near-optimal sequence
of actions?}
\end{center}
\endgroup

A no-regret guarantee does not settle this
question. It compares the learner with a
single action chosen in hindsight, against the
same payoff sequence. It says little about
choosing that sequence strategically. Against rules such
as multiplicative weights (Hedge/MWU), steering can outperform
the one-shot Stackelberg commitment benchmark by an
amount proportional to the interaction's duration \citep{braverman2017sellingnoregretbuyer,deng2019strategizing}.
Yet computing near-optimal steering against Hedge/MWU is
hard, even to within a fixed loss
per round \citep{assos2025intractability}.

Could more freedom help? A \emph{pure action}
selects one available action; a \emph{mixed action}
assigns a probability distribution over them. Mixing
lets the optimizer combine several directions of
influence in each round. In one-shot commitment
problems, this freedom supports efficient optimization by
linear programming \citep{conitzer2006commitment}.
\par\begingroup
\setlength{\topsep}{5pt}
\setlength{\partopsep}{0pt}
\begin{center}
\begin{minipage}{0.82\linewidth}
\centering
\emph{Does allowing mixed actions make near-optimal steering
of known learning dynamics computationally tractable?}
\end{minipage}
\end{center}
\endgroup
Our work gives a sweeping answer. \emph{In
general-sum games, mixed play does not remove
the hardness. In zero-sum games, pure play
already suffices.} Our hardness result and constructive
guarantees explain the contrast in our title:
pure when it is easy, hard even
when mixed.

\subsection{Our contributions}
The optimizer plans over a finite horizon
against a known deterministic learning rule. Its
task is to compute a high-reward sequence
in time polynomial in the horizon and
the game description.

\paragraph{General-sum games: hard even when mixed.}
Our hardness result strengthens prior work along
two axes: it gives the optimizer more
freedom and covers a broader class of
learners. First, the optimizer may use mixed
actions at every round, while the benchmark
remains the best pure sequence. Even then,
unless $\mathsf{P}=\mathsf{NP}$, no polynomial-time optimizer can guarantee
a reward within a fixed loss per
round of this benchmark (Theorem~\ref{thm:main}). With bounded
payoffs, the gap grows linearly with the
horizon. This strengthens the Hedge/MWU result of
\citet{assos2025intractability}, which allows only pure optimizer
outputs. All four pure/mixed output--benchmark combinations are
hard (Corollary~\ref{cor:four-tasks}).

Second, the reduction uses only a quantitative
\emph{mean-based condition} \citep{braverman2017sellingnoregretbuyer}: an action whose cumulative
payoff falls sufficiently far behind another receives
little probability. Under the theorem's rate requirement,
hardness holds separately for every fixed deterministic
learner satisfying this condition, even when the
optimizer is tailored to its rule. Beyond
Hedge/MWU, this covers Euclidean, Tsallis, and log-barrier
regularization under the learning-rate conditions in Appendix~\ref{app:learner-classes},
answering the question of \citet{assos2025intractability} about other
learners.

\paragraph{Zero-sum games: pure play already gets close.}
When one player's gain is the other's
loss, the game value supplies a baseline:
the reward per round the optimizer can
secure against any opponent. At each round,
a pure best response to the learner's
current distribution earns at least this value.
The learner's regret bound caps how far
any trajectory can rise above it. This
pure strategy is therefore within the learner's
regret bound of the optimal mixed trajectory,
and hence of the pure optimum (Proposition~\ref{prop:app-pure-best-response}).
Against any no-regret learner, its average shortfall
vanishes. Computing the sequence is efficient whenever
the learner's responses can be computed efficiently.

\paragraph{Beyond the game value.}
The finer question is whether pure play
also retains the \emph{surplus} above the game-value
baseline previously obtained through mixed play. Answering
it requires looking inside the learner's update,
so we specialize to separable \ac{FTRL}, including
Hedge/MWU and Euclidean and Tsallis regularization. Following
\citet{su2026nocoin}, we distinguish two mechanisms. \emph{Passive surplus}
accrues while the learner abandons actions that
are suboptimal against a max-min strategy. \emph{Active
surplus} comes from its delayed response to
changing play and can remain available when
the passive mechanism vanishes, as in Rock--Paper--Scissors.
For a fixed game and regularizer, their
characteristic scales are $1/\eta_T$ and $\eta_TT$, respectively,
where $\eta_T$ is the step size. We
recover both leading rates with deterministic pure
sequences, up to rounding losses, under the
respective assumptions of Theorems~\ref{thm:app-pure-passive} and~\ref{thm:app-pure-active}.

\paragraph{Technical challenges: extraction and scheduling.}
For hardness, mixing lets the optimizer spread
its effort among many actions and interleave
them throughout the interaction. Our reduction must
turn any sufficiently rewarding trajectory into a
solution of a hard combinatorial problem. We
start from gap Independent Set \citep{dinursafra2005vertexcover}: vertices
become choices that must be activated and
then harvested for reward. Additional learner actions
act as alarms, limiting reward when incompatible
vertices are activated. The key step extracts
an independent set directly from cumulative optimizer
action counts, whether integral or fractional. Grouping
rounds into intervals of comparable setup counts
makes this possible. The mean-based condition then
bounds reward by the extracted set's size,
without simulating the learner; the source gap
becomes a linear gap in reward.

For zero-sum surplus, the challenge is to
turn a mixture into a schedule. Blocks
and alternations with the same action counts
can produce different histories. The passive lower
bound depends only on the learner's terminal
score, so rounding the target counts suffices:
any ordering retains the guarantee, with a
loss depending only on the mixture's support
size. Active surplus requires control throughout the
trajectory. Cumulative rounding \citep{golz2025apportionment} keeps each action's count
within one unit of its target at
every prefix, while pure updates expose payoff
differences that the mixture averages away. A
derandomization step selects a deterministic sequence retaining
the active-surplus bound.

\subsection{Related work}
Table~\ref{tab:results-comparison} compares these guarantees with prior work.
Bold entries highlight the broader learner classes
and stronger pure/mixed guarantees. Appendix~\ref{app:extended-related-work} discusses the
wider literature.
% !TEX root = ../main.tex
\begin{table}[!htbp]
\centering
\small
\setlength{\tabcolsep}{2.5pt}
\setlength{\aboverulesep}{1.5pt}
\setlength{\belowrulesep}{1.5pt}
\setlength{\abovecaptionskip}{5pt}
\renewcommand{\arraystretch}{1.0}
\newcommand{\resultyes}{\textcolor{green!45!black}{$\checkmark$}}
\newcommand{\resultno}{\textcolor{red!75!black}{$\times$}}
\begin{tabular}{@{}
  >{\raggedright\arraybackslash}p{.28\linewidth}
  >{\raggedright\arraybackslash}p{.17\linewidth}
  >{\raggedright\arraybackslash}p{.085\linewidth}
  >{\raggedright\arraybackslash}p{.155\linewidth}
  >{\raggedright\arraybackslash}p{\dimexpr.31\linewidth-8\tabcolsep\relax}@{}}
\toprule
\textbf{Work} & \textbf{Learner} & \textbf{Output}
  & \textbf{Benchmark} & \textbf{Guarantee} \\
\midrule
\multicolumn{5}{@{}l}{\textit{General-sum games}} \\
\citet{assos2024maximizing}
  & Fictitious play & Pure & Pure optimum
  & \resultno\ Constant additive gap \\
\citet{assos2025intractability}
  & Hedge/MWU & Pure & Pure optimum
  & \resultno\ Linear additive gap \\
\textbf{This work} (Thm.~\ref{thm:main})
  & \textbf{Mean-based}, including Hedge & \textbf{Mixed} & Pure optimum (hence mixed)
  & \resultno\ Linear additive gap \\
\midrule
\multicolumn{5}{@{}l}{\textit{Zero-sum games}} \\
\citet{assos2024maximizing}
  & Hedge/MWU & Mixed & Mixed optimum
  & \resultyes\ Additive $O(\eta T)$ error \\
\textbf{This work} (Prop.~\ref{prop:app-pure-best-response})
  & \textbf{No-regret} & \textbf{Pure} & Mixed optimum (hence pure)
  & \resultyes\ Additive regret bound \\
\citet{su2026nocoin}
  & Separable \acs{FTRL} & Mixed & Game value
  & \resultyes\ Passive/active surplus \\
\textbf{This work} (Thms.~\ref{thm:app-pure-passive}--\ref{thm:app-pure-active})
  & Separable \acs{FTRL} & \textbf{Pure} & Game value
  & \resultyes\ Passive/active surplus \\
\bottomrule
\end{tabular}

\caption{\small Discrete-time guarantees under the stated assumptions.
$\eta$ is the step size. \resultyes\ constructive
result; \resultno\ hardness unless $\mathsf{P}=\mathsf{NP}$.
Optima cover full sequences; errors/gaps concern total
reward. Since $R_{\mathrm{pure}}^\star\le R_{\mathrm{mix}}^\star$, hardness against the
pure optimum extends to the mixed optimum
(Corollary~\ref{cor:four-tasks}); a guarantee against the mixed optimum
also covers the pure optimum (Proposition~\ref{prop:app-pure-best-response}).
Passive/active surplus guarantees are separate; ours retain
their rates up to rounding losses. Mixed
output includes pure actions.}
\label{tab:results-comparison}

\end{table}

\paragraph{Organization.}
Section~\ref{sec:model-result} presents the model. Sections~\ref{sec:hardness} and~\ref{app:zero-sum} give
the general-sum hardness and zero-sum guarantees; Section~\ref{sec:discussion}
discusses their scope. The appendix begins with
a proof roadmap. Appendices~\ref{app:extended-related-work}--\ref{app:open-questions} expand the related work
and future directions.

% !TEX root = ../main.tex
\section{Preliminaries}
\label{sec:model-result}

\subsection{Repeated-game model and benchmarks}

\begingroup
\setlength{\abovedisplayskip}{3pt}
\setlength{\belowdisplayskip}{3pt}
\setlength{\abovedisplayshortskip}{0pt}
\setlength{\belowdisplayshortskip}{3pt}

Our problem follows the classical paradigm of
two-player finite general-sum repeated games. The optimizer
and learner have action sets $\mathcal{A}$
and $\mathcal{B}$, of sizes $n:=|\mathcal{A}|$ and
$m:=|\mathcal{B}|$. Their respective payoff matrices are
$A,B\in\mathbb R^{n\times m}$. For a finite action
set $\mathcal C$, we write $\Delta(\mathcal C)$
for its probability simplex. A \emph{mixed action}
is a distribution in $\Delta(\mathcal C)$; a
\emph{pure action} selects a single $c\in\mathcal C$,
identified with the point mass $e_c$.
At each round
$t$, the optimizer chooses a mixed
action $x_t\in\Delta(\mathcal{A})$, while the learner plays
\[
  y_t=\mathsf{L}_T
  \bigl(t,B^{\mathsf T}x_0,\ldots,B^{\mathsf T}x_{t-1}\bigr)
  \in\Delta(\mathcal{B})
\]
according to a known deterministic full-information learning
rule $\mathsf{L}_T$. The optimizer receives reward $x_t^{\mathsf
T}Ay_t$ and seeks to maximize its
cumulative reward over the horizon $T$.

To track the learner's accumulated payoffs, we
use $S_t:=\sum_{s<t}B^{\mathsf T}x_s$ before round $t$, with
$S_0=0$. We collect mixed trajectories in $\mathcal
X_T:=\Delta(\mathcal{A})^T$ and pure trajectories in $\Pi_T:=\{e_a:a\in\mathcal{A}\}^T\subseteq\mathcal X_T$,
where the powers denote Cartesian products. A
trajectory $x=(x_0,\ldots,x_{T-1})$ earns
\begin{equation*}
  R(x,T):=\sum_{t=0}^{T-1}x_t^{\mathsf T}Ay_t,
  \tag{Cumulative Reward}\label{eq:trajectory-reward}
\end{equation*}
where each $y_t$ is the learner's response
to the preceding actions in $x$. We
use $\pi$ when emphasizing pure play, so
that $\pi_t=e_{a_t}$ for some $a_t\in\mathcal{A}$. The
resulting pure and mixed benchmarks are
\[
  R_{\mathrm{pure}}^\star(T):=\max_{\pi\in\Pi_T}R(\pi,T),
  \qquad
  R_{\mathrm{mix}}^\star(T):=\sup_{x\in\mathcal X_T}R(x,T).
\]
\endgroup
The pure maximum exists because $\Pi_T$ is
finite. Although $\mathcal X_T$ is compact, a
general learning rule need not induce a
continuous reward function, so the mixed supremum
need not be attained. For continuous
response rules, including the \ac{FTRL} model
below, it is a maximum. The
inclusion $\Pi_T\subseteq\mathcal X_T$ gives $R_{\mathrm{mix}}^\star(T)\ge R_{\mathrm{pure}}^\star(T)$.
The learner's \emph{external regret} is the difference
between the payoff of its best fixed
action in hindsight and its actual cumulative
payoff. We use $\operatorname{Regret}_{\mathsf L}(T)\ge0$ for a
uniform bound on this difference:
\[
  \max_{b\in\mathcal B}\sum_{t<T}x_t^{\mathsf T}B(e_b-y_t)
  \le \operatorname{Regret}_{\mathsf L}(T)
  \qquad\text{for every optimizer trajectory }x.
\]
We call the learner \emph{no-regret} when this
bound is $o(T)$. The fixed-action comparator is
evaluated on the same sequence of payoff vectors.

\begin{remark}[Trajectory benchmarks]
\label{rem:benchmarks}
Both benchmarks optimize an entire sequence, allowing
the optimizer's action to change at every
round and evaluating each sequence against the
learner's responses it induces. Observe that this
offline planning objective differs from the fixed-action
comparator in the learner's regret bound.
A regret guarantee alone does not
control the optimizer's trajectory benchmarks in
general-sum games. In zero-sum games, however,
it bounds both benchmarks through the
game value, as shown in Section~\ref{app:zero-sum}.
\end{remark}

\subsection{Learner models}
\label{sec:learner-models}
Our general-sum hardness result applies to every
fixed deterministic learner satisfying the quantitative \emph{mean-based
condition} below, at the rate in Theorem~\ref{thm:main}.
This condition alone addresses the question of
\citet[Section~5]{assos2025intractability} about hardness beyond multiplicative weights.

\paragraph{Mean-based learners.}

Mean-based learning, introduced by \citet{braverman2017sellingnoregretbuyer}, formalizes the
principle that an action whose cumulative payoff
falls sufficiently far behind another action receives
little probability. The parameter $\gamma_T$ controls both
the score gap $\gamma_TT$ and the
remaining probability on an inferior action.

\begin{definition}
A learner is $\gamma_T$-\emph{mean-based} if, at every
round and for every payoff history,
$y_t(b)\le\gamma_T$ whenever $S_t(b')>S_t(b)+\gamma_TT$, for all $b,b'\in\mathcal{B}$.
\end{definition}

Beyond this condition, we impose no structural
restriction on the learning rule: it may
depend on the horizon, the current time,
the ordered payoff history, and arbitrary
fixed internal memory. Thus the hardness
proof needs neither a specific update formula
nor a separate regret assumption.

The basic zero-sum approximation bound uses only
a regret guarantee. Recovering the finer
surplus gains, however, calls for a closer
look at the learner's update rule. Our
analysis therefore uses the regularizer explicitly and
specializes these surplus results to separable \ac{FTRL}.

\paragraph{Separable \acs{FTRL}.}
\label{sec:ftrl-preliminaries}
The geometry of the regularizer determines the
learner's response, as in \citet{giannou2021regularizedrates} and the
strategic-surplus analysis of \citet{su2026nocoin}. To make this
structure precise, we use the following convention.
\begin{definition}[Separable regularizer]
\label{def:separable-regularizer}
A \emph{separable regularizer} is a finite, continuous, strictly
convex function $h:\Delta(\mathcal{B})\to\mathbb R$ of the form
$h(y)=\sum_{b\in\mathcal{B}}\theta(y_b)$, where the kernel
$\theta\in C([0,1])\cap C^2((0,1))$ satisfies
$\theta''(u)>0$ for every $u\in(0,1)$.
\end{definition}

Given a separable regularizer $h$, the learner balances linear payoff
against regularization through the response map
\begin{equation*}
  Q_h(z)
  :=\operatorname*{arg\,max}_{y\in\Delta(\mathcal{B})}
  \bigl\{\langle z,y\rangle-h(y)\bigr\}.
  \tag{\acs{FTRL} response}
  \label{eq:app-ftrl-response}
\end{equation*}
At horizon $T$, a separable \ac{FTRL} learner
with learning rate $\eta_T>0$, fixed across rounds, plays
$y_t=Q_h(\eta_TS_t)$. Definition~\ref{def:separable-regularizer}
ensures that the response is unique and continuous.

This class includes several familiar learning rules.
Negative entropy, $\theta(u)=u\log u$ with $0\log0=0$,
gives Hedge/MWU. The Euclidean kernel $\theta(u)=u^2/2$ gives
the projection of the scaled cumulative scores
onto the simplex. Negative Tsallis entropy uses
$\theta_q(u)=-u^q/[q(1-q)]$ for $q\in(0,1)$.
Appendix~\ref{app:learner-classes} derives their mean-based guarantees and the
corresponding learning-rate conditions.

% !TEX root = ../main.tex
\section{General-sum hardness: mixed control does not help}
\label{sec:hardness}

\subsection{Main result}
We first state the computational obstruction for
general-sum games. The optimizer may choose a
mixed action at every round, while the
benchmark is restricted to pure trajectories.
\begin{maintheorem}
\label{thm:main}
Fix $\beta\in(0,1]$ and a rational $0<\zeta<\beta/2$. There
is a polynomial-time computable family of zero-initialized
rational games with $n,m\le T^\beta$, $A\in[0,1]^{n\times m}$,
and $B\in[-1,1]^{n\times m}$, together with an
absolute constant $c>0$ with the following
property. For any fixed deterministic $\gamma_T$-mean-based
learner family satisfying $\gamma_TT=o(T^{1-\zeta})$, unless $\mathsf{P}=\mathsf{NP}$,
no polynomial-time optimizer returning an explicitly
encoded rational mixed trajectory can guarantee
\[
  R_{\mathrm{alg}}(T)\ge R_{\mathrm{pure}}^\star(T)-cT+o(T)
\]
on every game in the family at
all sufficiently large reduction horizons, where the
$o(T)$ error is uniform over game inputs.
\end{maintheorem}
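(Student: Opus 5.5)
We reduce from gap Independent Set (the Dinur--Safra fact). Given a graph $G=(V,E)$ on $N$ vertices, it is NP-hard to distinguish $\alpha(G)\ge \rho_1 N$ from $\alpha(G)\le \rho_0 N$ for constants $\rho_1>\rho_0$, even on bounded-degree instances. We choose the horizon $T$ polynomially large in $N$, say $N=\lceil T^{\beta'}\rceil$ with $\beta'<\beta$ and small enough relative to $\zeta$. We then build a game with $n,m\le T^\beta$ and entries in $[0,1]$ and $[-1,1]$, and analyze it through two bounds:
\begin{itemize}
\item a \emph{completeness} bound: if $\alpha(G)\ge\rho_1N$, some pure trajectory earns at least $\kappa_1T-o(T)$;
\item a \emph{soundness} bound: for every mixed trajectory $x$, $R(x,T)\le \kappa_0 T+o(T)$ whenever $\alpha(G)\le\rho_0N$.
\end{itemize}
With $c<(\kappa_1-\kappa_0)/2$, a polynomial-time optimizer meeting the guarantee would separate the two cases by comparing its reward against $(\kappa_0+\kappa_1)T/2$. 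Since the $o(T)$ terms are uniform over inputs, this threshold test is valid at all large horizons, which decides gap IS.

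\textbf{Construction.} Each vertex $v$ gets two optimizer actions, $\mathrm{set}_v$ and $\mathrm{harvest}_v$, and one learner action $\ell_v$. Each edge $\{u,v\}$, or more economically each vertex neighborhood, gets an alarm learner action $\mathrm{al}_{uv}$. There is also a default learner action $\ell_0$, which has a constant head start engineered through a padding optimizer action.
\begin{itemize}
\item Playing $\mathrm{set}_v$ raises $S(\ell_v)$. It also raises $S(\mathrm{al}_{uv})$ for incident edges, but only at half the rate, so that an alarm overtakes $\ell_0$ only if both endpoints were set.
\item $\mathrm{harvest}_v$ pays the optimizer $1$ against $\ell_v$ and $0$ otherwise.
\item Any optimizer action pays $0$ against an alarm, and the alarms absorb all harvesting.
\end{itemize}
Setup costs a fraction of the horizon proportional to the number of activated vertices, each activation costing $\Theta(T/N)$ rounds. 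The remaining time is split among harvests, and the reward scales like $|I|\cdot$(harvest time per vertex). To make reward linear in $|I|/N$ rather than saturating, each vertex's harvest is capped: $\ell_v$ stays ahead only for $\Theta(T/N)$ harvest rounds, after which harvesting $v$ pushes $\ell_0$ back ahead. Completeness then follows by activating an independent set sequentially and harvesting each vertex. The learner places mass $\ge 1-m\gamma_T$ on the leader whenever the leader's lead exceeds $\gamma_TT=o(T^{1-\zeta})\ll T/N$. The latter inequality is exactly where $\zeta$ and $\beta$ enter, and the mean-based condition alone certifies the reward.

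\textbf{Soundness and extraction (the main obstacle).} Fix an arbitrary rational mixed trajectory $x$. We partition $[0,T)$ into $\mathrm{poly}(N)$ intervals in which cumulative set-counts change by less than the mean-based margin $\gamma_TT$. We call $v$ \emph{active} during an interval if its cumulative count $\sum_{s<t}x_s(\mathrm{set}_v)$ exceeds the activation threshold. The active set on any interval where no alarm leads is then independent, since an alarm would lead by more than $\gamma_TT$ if two adjacent vertices were active. On intervals where an alarm leads, the mean-based condition bounds the reward by $m\gamma_T$ per round. On the other intervals, harvest reward against $\ell_v$ accrues only for active $v$, and each vertex's total harvestable reward is capped at $O(T/N)$. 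So the total reward is at most $O(T/N)\cdot|I^\ast|+O(\text{setup})+o(T)$, where $I^\ast$ is the union of the active sets.

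The delicate point is that activation is monotone in the cumulative counts, so $I^\ast$ equals the final active set. That set must be independent: otherwise an alarm leads from some time onward, and from then on reward is lost. Reward before that time is bounded by the active set at that time, which is independent. Taking the largest independent active set over the prefix gives $R(x,T)\le(\kappa_0+o(1))T$ when $\alpha(G)\le\rho_0N$.

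The technical work is making the thresholds robust to fractional counts and mixed harvesting, and tuning the rates so that the margins dominate $\gamma_TT$ while all payoffs stay bounded. I expect to resolve this by choosing the gadget rates so that every lead gap is a constant multiple of $T/N$.
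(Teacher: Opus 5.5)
Your skeleton matches the paper's: gap Independent Set, activate/harvest gadgets, edge alarms, and mean-based certification. The soundness step, however, fails for the gadget you describe, for two reasons.

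\emph{First, the per-vertex cap does not hold.} Your cap is measured against the set count, and the optimizer controls that count. $S(\ell_v)$ rises one-for-one with $\mathrm{set}_v$, while each incident alarm rises at half that rate, and $\ell_0$ has a head start only if padding is paid for. Consider a trajectory that plays no padding, plays $\mathrm{set}_v$ for about $2T/3$ rounds, and then harvests $v$. It keeps $\ell_v$ more than $\gamma_TT$ ahead of every other learner action for roughly $T/3-\gamma_TT$ harvest rounds. So a single vertex earns $\Omega(T)$ on every instance. This makes ``each vertex's total harvestable reward is capped at $O(T/N)$'' false, and the $O(\text{setup})$ term in your bound is not $o(T)$. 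The same happens for any alarm rising strictly slower than $\ell_v$ in $\mathrm{set}_v$. The missing ingredient is the paper's local alarm $d_v$. It tracks activation one-for-one but carries a larger setup penalty, so by \eqref{eq:score-identities} $S_t(b_v)-S_t(d_v)=X_t(p)/(2N)-X_t(h_v)$. Hence a near-leading vertex has harvested at most $X_t(p)/(2N)+\gamma_TT\le T/(2N)+\gamma_TT$, however heavily it was activated.

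\emph{Second, ``active'' is not monotone.} In a zero-initialized game the head start comes from the optimizer's own padding, which a trajectory may interleave at any rate. Activeness must therefore be measured relative to the current setup level. As that level grows, an early-activated vertex becomes relatively small, and its neighbor can later become near-leading. An alarm overtaking does not end reward either: it suppresses only rewarding actions more than $\gamma_TT$ below it, and the optimizer can raise the setup level or activate other vertices further. So neither your final active set nor your largest independent prefix set controls the reward. The paper instead groups rounds into geometric setup intervals $[L_j,\tfrac32L_j)$, where the thresholds $9/10$ and $3/5$ keep each union $I_j$ independent (Lemma~\ref{lem:independent}). It bounds interval $j$'s harvest by $2L_j|I_j|$ and sums the geometric series at a constant-factor loss, giving $R(x,T)\le6|I_{\mathrm{alg}}|T/N+o(T)$.

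Separately, your decision rule compares the optimizer's reward with $(\kappa_0+\kappa_1)T/2$. Evaluating $R(x,T)$ requires the learner's responses $y_t$. The theorem covers an arbitrary fixed deterministic mean-based learner with no simulability assumption, so this test need not run in polynomial time. The paper avoids this by proving soundness per trajectory, in terms of an independent set $I_{\mathrm{alg}}$ extracted in polynomial time from the optimizer's cumulative counts alone. It then decides by comparing $|I_{\mathrm{alg}}|$ with $N/(64r)$. On NO instances the test is automatically correct because $I_{\mathrm{alg}}$ is independent. On YES instances, completeness, the hypothetical guarantee, and soundness together force $|I_{\mathrm{alg}}|>N/(64r)$. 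Your soundness statement is a reward bound on NO instances, not a bound in terms of an efficiently extracted set, so it would need to be recast in this form even after the gadget is repaired.
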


Since every pure trajectory is a
mixed trajectory and $R_{\mathrm{pure}}^\star(T)\le R_{\mathrm{mix}}^\star(T)$, the
theorem also rules out the other
pure/mixed output--benchmark combinations (Corollary~\ref{cor:four-tasks} in Appendix~\ref{app:four-tasks}).

\begin{remark}[Fixed learner and mean-based promise]
\label{rem:learner-promise}
The learner family is fixed independently of the game instance, and the optimizer may be tailored to its known rule. The reduction uses only the quantitative mean-based promise in Theorem 1, without evaluating the learner or encoding a varying learning rule in the input.
\end{remark}
\begin{remark}[Computational model]
\label{rem:computational-model}
The optimizer is a uniform deterministic Turing
machine whose input consists of the explicitly
listed rational payoff matrices $A,B$ and the
horizon $T$. It may additionally receive a
rational bound $\bar\gamma_T\ge\gamma_T$ on the mean-based parameter,
promised to be
valid for the fixed learner. Any claimed
guarantee must hold for every valid supplied
bound. The hardness remains valid with this
information: at the reduction horizons, one
can supply $\bar\gamma_T:=1/\lceil T^\zeta\rceil$, which is
computable in polynomial time, has $O(\log
T)$ bits, and is valid for
all sufficiently large $T$ because $\gamma_T=o(T^{-\zeta})$.

Integers are encoded in binary, and rational
numbers by a binary numerator and a
positive binary denominator. Running time is measured
as a function of $T$ and the
total input encoding length, including any supplied
bound $\bar\gamma_T$; polynomial time means polynomial in
these quantities. An explicitly encoded rational mixed
trajectory lists every coordinate of every $x_t$,
and its total bit length is charged
to the optimizer's running time. To evaluate
the optimizer's output, we use $R_{\mathrm{alg}}(T)$ for
the reward $R(x,T)$ of the trajectory
$x$ it returns. Throughout, we use
$\mathsf{P}$ and $\mathsf{NP}$ for the standard
deterministic and nondeterministic polynomial-time complexity classes.
\end{remark}

\begin{remark}[Comparison with prior work]
\label{rem:prior-work}
Theorem~\ref{thm:main} holds for every fixed deterministic $\gamma_T$-mean-based
learner family with $\gamma_TT=o(T^{1-\zeta})$. This addresses a
question in the concluding section of \citet{assos2025intractability},
who prove linear additive hardness for Hedge/MWU
with pure optimizer outputs and ask what
happens for learners other than MWU. Our
optimizer may output mixed actions while the
benchmark is the best pure sequence, so
its task is easier and the hardness
result is stronger. Hedge/MWU with $\eta_T=T^{-(1-\alpha)}$
is covered: the negative-entropy case of
Appendix~\ref{app:learner-classes} with a rational $\zeta<\min(\alpha,\beta/2)$ recovers
their result from a different source
problem (Corollary~\ref{cor:setwise-hardness}); the same appendix treats
Euclidean, negative Tsallis, and log-barrier regularization.
\end{remark}

\subsection{Source problem: Independent Set on co-partite graphs}
\label{sec:souce-problem}
Following \citet[Definition~2.1]{dinursafra2005vertexcover}, a $(K,r)$-\emph{co-partite graph} is a
graph $G=([K]\times R,E)$ with $|R|=r$ such that,
for every $i\in[K]$, the $r$ vertices $\{i\}\times
R$ form a clique. Edges between different
such cliques are arbitrary. Thus $G$ has
$N:=Kr$ vertices, partitioned into $K$ cliques of
size $r$. We write $\operatorname{IS}(G)$ for the
maximum size of an independent set in
$G$. Since every independent set contains at
most one vertex from each clique, $\operatorname{IS}(G)\le
K$. We use the following constant-gap problem
as the source of our reduction.

\phantomsection
\begin{gapfact}
\label{fact:gap-is}
There exists an absolute constant integer
$r\ge 2$ such that it is
NP-hard, given a $(K,r)$-co-partite graph $G$
on $N=Kr$ vertices, to distinguish between
\begin{equation*}
  \mathrm{YES:}\qquad
  \operatorname{IS}(G)=\frac{N}{r},
  \qquad\qquad
  \mathrm{NO:}\qquad
  \operatorname{IS}(G)\le \frac{N}{128r}.
  \tag{GIS}
  \label{eq:source-gap}
\end{equation*}
\end{gapfact}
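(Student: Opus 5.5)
We will derive the fact from the PCP theorem through the FGLSS construction, padding every clique to a common constant size $r$. In spirit this is the co-partite gap theorem of \citet{dinursafra2005vertexcover} (their Independent Set hardness on co-partite graphs), specialised to soundness fraction $\varepsilon=1/128$ and to pairwise conflicts.

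\emph{Step 1: a PCP with small soundness.} By the PCP theorem, $\mathsf{NP}$ has a nonadaptive verifier with logarithmic randomness, a constant number $q_0$ of Boolean queries, perfect completeness, and soundness $1/2$. Repeat it sequentially and independently $k:=7$ times and accept iff all runs accept. This gives a verifier with $O(\log n)$ randomness, $q:=7q_0$ queries, perfect completeness, and soundness $2^{-7}=1/128$. The randomness remains logarithmic, so the random strings can be enumerated in polynomial time.

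\emph{Step 2: the FGLSS graph.}
\begin{itemize}
\item For each random string $\rho$, let $V_\rho$ be the set of accepting local views. A view is an assignment to the at most $q$ queried positions that the verifier accepts, so $|V_\rho|\le 2^{q}$.
\item Make each $V_\rho$ a clique.
\item Join two views from different random strings whenever they assign different values to a common proof position.
\end{itemize}
If a proof is accepted with probability $p$, the views it induces on accepting $\rho$'s form an independent set of size $pK$, where $K$ is the number of random strings. Conversely, an independent set has at most one vertex per clique and is pairwise consistent, so it extends to a global proof accepted on at least that many $\rho$'s. Hence $\operatorname{IS}(G)/K$ equals the maximum acceptance probability. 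In the YES case this is $1$, so $\operatorname{IS}(G)=K$. In the NO case it is at most $1/128$, so $\operatorname{IS}(G)\le K/128$.

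\emph{Step 3: padding to uniform size $r:=2^{q}$.} This is the only step needing care, because clique sizes vary and some $V_\rho$ may even be empty.
\begin{itemize}
\item Add dummy vertices to each $V_\rho$ until it has exactly $r$ vertices. Each dummy is adjacent to every other vertex of the whole graph.
\item The co-partite structure is preserved, since each block $\{i\}\times R$ is still a clique.
\item A dummy can only form an independent set of size $1$. For $K\ge128$ this does not exceed the NO bound, and for $K<128$ the instance has constant size and can be solved directly.
\item Completeness is unaffected, since the YES independent set uses only genuine views.
\end{itemize}
With $N=Kr$, the two cases are exactly $\operatorname{IS}(G)=N/r$ versus $\operatorname{IS}(G)\le N/(128r)$. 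The constant $r=2^{7q_0}\ge2$ is absolute, and the reduction is polynomial-time.

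The main obstacle is bookkeeping rather than depth. It lies in confirming that the repeated verifier keeps a constant number of queries, so that $r$ is an absolute constant, and that the padding neither breaks the clique partition nor inflates the NO-case independence number.
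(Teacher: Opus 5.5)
Your argument is correct, but it takes a different route from the paper. The paper proves nothing from scratch: it instantiates \citet[Theorem~2.1]{dinursafra2005vertexcover} with $h=2$ (where $\operatorname{IS}_2=\operatorname{IS}$) and $\varepsilon=1/128$, fixes one integer $r\ge 256^c$ for their absolute constant $c$, and rewrites $K=N/r$.

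You instead open that black box in the only case the reduction needs. Your ingredients are the basic PCP theorem, $7$-fold sequential repetition to bring soundness down to $2^{-7}$, the FGLSS graph, and dummy padding to equalise clique sizes. Each step checks out:
\begin{itemize}
\item Repetition keeps the randomness logarithmic and the number of queries constant.
\item FGLSS gives $\operatorname{IS}(G)=K\cdot\max_\pi\Pr[\text{accept}]$, because pairwise-consistent views glue to a global proof.
\item Dummies adjacent to everything preserve the clique blocks. They raise the independence number only to $\max\{\operatorname{IS}(G),1\}$, which is harmless once $K\ge 128$.
\end{itemize}

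Your small-$K$ case is also fine. Brute force over the at most $128q$ queried positions decides the source instance, after which you output a fixed YES or NO graph. You could avoid the case entirely by appending seven ignored random bits, so that $K\ge128$ automatically.

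The citation buys brevity and a statement stronger than needed: general $h$, and $r$ only polynomial in $1/\varepsilon$. Your route buys a self-contained derivation from the PCP theorem, with an explicit constant $r=2^{7q_0}$ and an explicit treatment of unequal and empty clique sizes. Since $r$ enters the paper only through $c=1/(24r)$, its exponential dependence on $q_0$ does not matter. Either route supplies exactly what the reduction uses.
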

The hardness statement for~\eqref{eq:source-gap} specializes Theorem~2.1
of \citet{dinursafra2005vertexcover} to $h=2$
and $\varepsilon=1/128$. Appendix~\ref{app:dinur-safra} records the derivation.\footnote{This
constant-gap promise is where the probabilistically checkable
proofs (PCP) theorem enters our argument; see
\citet[Section~9]{dinursafra2005vertexcover}. A YES instance permits one mutually
compatible vertex choice per clique, whereas a
NO instance permits choices from only a
small fraction of the cliques. Since $r$
is fixed, the optimum independent-set densities are
separated by a positive constant. Our reduction transfers
this separation to rewards differing by $\Omega(T)$,
even after the constant losses and $o(T)$
errors in the analysis. Ordinary NP-hardness of
finding an exact maximum independent set would
not, by itself, provide the gap needed
for an additive $cT$ hardness result.}

\subsection{The game: activate a vertex, then harvest
it}
\label{sec:construction}
\label{sec:reduction}
Our reduction starts from the gap Independent
Set problem introduced in Section~\ref{sec:souce-problem}. In a
nutshell, our goal is to construct a
game in which approximating the pure benchmark
to within $cT$ would decide the gap
problem. During the interaction, the optimizer's choice
of which vertex to activate and harvest
mirrors the decision whether to include that
vertex in the independent set. Completeness is
comparatively direct: an independent set yields a
pure trajectory whose reward is proportional to
its size. Soundness is the main difficulty:
an arbitrary mixed trajectory may fractionally interleave
many actions at every round, and
we must still recover an independent
set whose size controls its reward.

Fix the constant $r$ from~\eqref{eq:source-gap} throughout, and
let $G=(V,\mathcal{E})$ be a simple undirected graph
on $N\ge2$ vertices; the game below is
defined for every such graph, and the
co-partite structure is used only for the
source gap. The optimizer has a setup
action $p$, together with an activation action
$a_v$ and a harvesting action $h_v$ for
each vertex $v$. The learner has a
neutral action $b_0$, together with a rewarding
action $b_v$ and a local alarm $d_v$
for each vertex, and an edge alarm
$d_{uv}$ for each $uv\in\mathcal{E}$. Hence $n=2N+1$ and
$m=2N+|\mathcal{E}|+1$. The optimizer's only nonzero payoff is
$A(h_v,b_v)=1$, so its round-$t$ reward is
\begingroup
\setlength{\abovedisplayskip}{4pt}
\setlength{\belowdisplayskip}{4pt}
\setlength{\abovedisplayshortskip}{0pt}
\setlength{\belowdisplayshortskip}{4pt}
\begin{equation*}
  r_t:=x_t^{\mathsf T}Ay_t=\sum_{v\in V}x_t(h_v)y_t(b_v).
  \tag{round reward}
  \label{eq:reward-identity}
\end{equation*}
\endgroup
In particular, setup and activation never directly
earn reward.

To express the score dynamics through the
optimizer's action counts, for $F\subseteq\mathcal{A}$ we track
the round mass $x_t(F):=\sum_{a\in F}x_t(a)$ and the
cumulative mass $X_t(F):=\sum_{s<t}x_s(F)$, writing $X_t(a)$ for a
single action. The completeness construction of Section~\ref{sec:completeness}
first plays the setup action $p$ for
$q_0$ rounds and then processes vertices in
activation and harvesting blocks whose lengths are
multiples of $k$; we call $q_0$ the
setup length and $k$ the block scale,
with $q_0=Nk$. We refer to the normalized
setup count $X_t(p)/N$ as the setup level;
after $q_0$ pure setup rounds it equals
$k$, and along an arbitrary mixed trajectory
it may continue to grow as setup
is interleaved with activation and harvesting.

The learner payoffs are specified by the
table below, where $uw\in\mathcal{E}$ indexes the last
column; they depend only on the
graph, and all entries lie in
$[-1,1]$ and have $O(\log N)$-bit encodings.
Since $S_0=0$, summing the payoffs gives
the score identities beside the table.
\begingroup
\setlength{\topsep}{4pt}
\setlength{\partopsep}{0pt}
\begin{center}
\begin{minipage}[c]{0.55\linewidth}
\small\setlength{\arraycolsep}{3pt}
\setlength{\abovedisplayskip}{2pt}
\setlength{\belowdisplayskip}{2pt}
\setlength{\abovedisplayshortskip}{0pt}
\setlength{\belowdisplayshortskip}{2pt}
\[
\begin{array}{c|cccc}
  & b_0 & b_w & d_w & d_{uw}\\ \hline
  p   & 0 & -1/N & -3/(2N) & -3/(2N)\\
  a_v & 0 & \mathbf 1\{v=w\} & \mathbf 1\{v=w\}
      & \mathbf 1\{v\in\{u,w\}\}\\
  h_v & 0 & -\mathbf 1\{v=w\} & 0 & 0
\end{array}
\]
\end{minipage}%
\hfill
\begin{minipage}[c]{0.44\linewidth}
\small
\setlength{\abovedisplayskip}{2pt}
\setlength{\belowdisplayskip}{2pt}
\setlength{\abovedisplayshortskip}{0pt}
\setlength{\belowdisplayshortskip}{2pt}
\begin{equation*}
\begin{aligned}
  S_t(b_0)&=0,\\
  S_t(b_v)&=X_t(a_v)-X_t(h_v)-\frac{X_t(p)}{N},\\
  S_t(d_v)&=X_t(a_v)-\frac{3X_t(p)}{2N},\\
  S_t(d_{uv})&=X_t(a_u)+X_t(a_v)-\frac{3X_t(p)}{2N}.
\end{aligned}
\tag{$\bigstar$}
\label{eq:score-identities}
\end{equation*}
\end{minipage}
\end{center}
\endgroup
Harvesting lowers the rewarding score without erasing
the activation count recorded by the
alarms. The local alarm limits how
much one vertex can pay, while
the edge alarm records that a
neighbor was activated. These permanent counts
will also constrain arbitrary fractional interleavings.

\subsection{Completeness: harvesting an independent set}
\label{sec:completeness}

An independent set allows us to process
its vertices one at a time. Each
activation block makes the current rewarding action
the score leader; harvesting then collects reward
and returns its score to zero, allowing
the next vertex to take the lead.
Independence keeps every edge alarm below
the current rewarding action during harvesting.

\begin{proposition}[Pure completeness]
\label{prop:complete}
Let \(k\) be a positive even integer and \(T=3Nk\).
For every independent set
$I\subseteq V$, there is a pure trajectory
$\pi$ satisfying
\begin{equation*}
  R(\pi,T)\ge |I|\frac{k}{2}-|I|(\gamma_TT+1)-m\gamma_TT.
\end{equation*}
\end{proposition}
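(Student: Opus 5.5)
The plan is to build the trajectory explicitly in three phases: setup, then one activate-and-harvest block per vertex of $I$, then padding. The reward is bounded by applying the mean-based condition during each harvesting phase, using the score identities~\eqref{eq:score-identities}.

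\textbf{Construction.} Order $I=\{v_1,\dots,v_{|I|}\}$. First play $p$ for $q_0=Nk$ rounds, so that $X(p)/N=k$ from then on. Then, for each $v_j$ in turn, play $a_{v_j}$ for $3k/2$ rounds and afterwards $h_{v_j}$ for $k/2$ rounds; both lengths are integers because $k$ is even. Each vertex uses $2k$ rounds, so all blocks use at most $2Nk$ rounds. Fill the remaining rounds up to $T=3Nk$ with $p$. This padding is harmless because $A\ge0$, so every round contributes nonnegative reward.

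\textbf{Score bookkeeping.} From~\eqref{eq:score-identities}, after setup we have $S(b_w)=-k$ and $S(d_w)=S(d_{uw})=-3k/2$ for every $w$ and every edge, while $S(b_0)=0$. After activating $v$ for $3k/2$ rounds:
\begin{itemize}
\item $S(b_v)=k/2$;
\item $S(d_v)=0$;
\item $S(d_{uv})=X(a_u)$ for each neighbor $u$, and this equals $0$ because no neighbor of $v$ lies in $I$, so no neighbor is ever activated.
\end{itemize}
Every previously processed vertex $w$ ends its block with $S(b_w)=0$ and $S(d_w)=0$. Every edge alarm $d_{ww'}$ with $w'\notin I$ equals $0$, and edge alarms between two processed vertices do not exist, by independence. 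Unprocessed vertices keep negative scores. So at the $j$-th harvest round of $v$ ($j=0,\dots,k/2-1$) we get $S(b_v)=k/2-j$, while every other learner action has score at most $0$.

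\textbf{Reward bound.} Whenever $k/2-j>\gamma_TT$, the mean-based condition gives $y_t(b)\le\gamma_T$ for each of the other $m-1$ actions. Hence $y_t(b_v)\ge 1-m\gamma_T$, and the round reward $r_t=y_t(b_v)$ is at least this.

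The number of harvest rounds for $v$ with gap exceeding $\gamma_TT$ is at least $k/2-\gamma_TT-1$, where the $-1$ absorbs rounding. Each such round loses at most $m\gamma_T$. Summing over $v\in I$, the total loss from these rounds is at most $m\gamma_T\cdot|I|k/2\le m\gamma_T T$. This yields
\[
R(\pi,T)\ge |I|\Bigl(\tfrac k2-\gamma_TT-1\Bigr)-m\gamma_TT .
\]

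\textbf{Main obstacle.} The delicate step is verifying that \emph{every} competitor, not just $b_0$, sits at score at most $0$ throughout each harvest. This needs three facts together: independence (for the edge alarms), the exact coefficient $3/(2N)$ (so local alarms tie at $0$ rather than exceed $S(b_v)$), and the fact that harvesting earlier vertices returns their scores exactly to $0$.
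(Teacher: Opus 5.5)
Your proof is correct and follows the paper's argument essentially step for step: setup for $Nk$ rounds, then $3k/2$ activation and $k/2$ harvesting rounds per vertex of $I$, every competitor held at score at most $0$ by independence and the $3/(2N)$ alarm coefficient, the mean-based leakage bound of $m\gamma_TT$, and at most about $\gamma_TT+1$ lost tail rounds per vertex. One small slip: an edge alarm $d_{ww'}$ with $w'\notin I$ has score $0$ only when $w$ has already been activated (it is $-3k/2$ otherwise), but the argument only needs the bound $\le 0$, so nothing changes.
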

For concision, we give the full proof
in Appendix~\ref{app:proof-complete}. The construction processes the vertices
of an independent set one at a
time. After the initial setup, activating a
vertex \(v\) raises the score of its
rewarding learner action \(b_v\) from \(-k\) to
\(k/2\). The optimizer then harvests \(v\), and
each harvesting round lowers this score by
one while earning reward whenever the learner
plays \(b_v\), so one activated vertex yields
about \(k/2\). Independence makes this mechanism repeatable:
when the optimizer moves to a new
vertex \(v\in I\), no previously activated vertex
is adjacent to \(v\), so no edge
alarm associated with \(v\) can compete with
\(b_v\), and \(b_v\) leads every other learner
action by more than the mean-based threshold
throughout most of the harvesting block.

\subsection{Soundness: extracting an independent set from mixed
play}
\label{sec:soundness}

Fix an arbitrary real mixed trajectory $x=(x_t)_{t<T}$.
A vertex can contribute substantial reward only
when its rewarding action is close to
the highest-scoring learner action. Comparing it with
the neutral action and the alarms then
forces its activation count to be large
and its neighbors' counts to be
small, which yields an independent set
as long as the setup level
$X_t(p)/N$ does not change too much.

We capture this condition by calling $v$
\emph{near-leading} at time $t$ if $S_t(b_v)\ge\max_{b\in\mathcal{B}}S_t(b)-\gamma_TT$. Comparing
$b_v$ with $b_0$, $d_v$, and each
incident edge alarm in \eqref{eq:score-identities} gives,
respectively, the following near-leading constraints (NLCs)
on the activation and harvesting counts:
\begingroup
\setlength{\abovedisplayskip}{2pt}
\setlength{\belowdisplayskip}{2pt}
\setlength{\abovedisplayshortskip}{2pt}
\setlength{\belowdisplayshortskip}{2pt}
\setlength{\jot}{0pt}
\begin{align*}
  X_t(a_v)&\ge\frac{X_t(p)}{N}+X_t(h_v)-\gamma_TT,\nonumber\\
  X_t(h_v)&\le\frac{X_t(p)}{2N}+\gamma_TT,\tag{NLCs}\label{eq:local-comparisons}\\
  X_t(a_u)&\le\frac{X_t(p)}{2N}+\gamma_TT-X_t(h_v)
  \qquad(u\in N_G(v)).\nonumber
\end{align*}
\endgroup
Here $N_G(v)$ is the set of neighbors
of $v$.

\paragraph{Extraction by setup intervals.}
We now turn the near-leading conditions into
an extraction rule that depends only on
the optimizer trajectory. Since $X_t(h_v)\ge0$, the~\ref{eq:local-comparisons} give
$X_t(a_v)\ge X_t(p)/N-\gamma_TT$ and $X_t(a_u)\le X_t(p)/(2N)+\gamma_TT$ for every
neighbor $u$: a near-leading vertex has activation
close to the full setup level,
while each neighbor has at most
about half. At every time $t$ with
$X_t(p)/N\ge1$, we therefore collect the qualifying vertices
in the set
\begingroup
\setlength{\abovedisplayskip}{2pt}
\setlength{\belowdisplayskip}{2pt}
\setlength{\abovedisplayshortskip}{2pt}
\setlength{\belowdisplayshortskip}{2pt}
\[
  J_t:=\left\{v\in V:\;
    X_t(a_v)\ge\tfrac{9X_t(p)}{10N}
    \;\text{and}\;
    X_t(a_u)\le\tfrac{3X_t(p)}{5N}
    \quad\forall u\in N_G(v)
  \right\}.
\]
\endgroup
The constants $9/10$ and $3/5$ relax the
ideal thresholds $1$ and $1/2$ to absorb
the additive $\gamma_TT$ error, so every near-leading
vertex belongs to $J_t$ whenever $X_t(p)/N\ge10(\gamma_TT+1)$.

At any fixed time, these thresholds already
exclude adjacent candidates. We cannot simply take
the union of $J_t$ over all times,
however. As the setup level $X_t(p)/N$ grows,
a previously large activation count may become
relatively small, allowing a neighbor to enter
a later candidate set. We therefore group
the rounds by setup intervals $[L,cL)$. We
choose $c$ small enough to preserve this
separation throughout an interval. If $v$ enters
$J_t$ in this interval, then $X_t(a_v)\ge9L/10$, and
this lower bound persists because cumulative counts
never decrease. A neighbor $u$ entering $J_{t'}$
later in the same interval would instead
require $X_{t'}(a_v)\le3X_{t'}(p)/(5N)<(3/5)cL$. The two bounds are incompatible
whenever $(3/5)c\le9/10$, so we take $c=3/2$. We
organize these intervals geometrically: for each integer
$j\ge0$, we use the level $L_j:=(3/2)^j$
and collect the candidates appearing while
the setup level lies in $[L_j,(3/2)L_j)$,
\[
  I_j:=\bigcup_{\{t:\,L_j\le \frac{X_t(p)}{N} <\frac{3}{2} L_j\}}J_t.
\]

Different setup intervals may reveal different independent
sets $I_j$, possibly of different sizes. Among
these extracted candidates, we return one of
maximum cardinality, denoted $I_{\mathrm{alg}}$, breaking ties in
favor of the smallest $j$; if no
interval contains a state, we return the
empty set. The rule uses only
cumulative optimizer counts, and Proposition~\ref{prop:sound} bounds
the reward in terms of $|I_{\mathrm{alg}}|$.
\begin{lemma}
\label{lem:independent}
Every mixed trajectory yields independent sets $I_j$,
regardless of learner behavior.
\end{lemma}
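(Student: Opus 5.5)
The plan is to argue by contradiction directly from the definitions of $J_t$ and $I_j$, using only that cumulative counts $X_t(\cdot)$ are nondecreasing in $t$. The learner never enters, since $J_t$ depends only on the optimizer's cumulative masses.

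Fix $j\ge0$ and suppose two adjacent vertices $u,v$ with $uv\in\mathcal{E}$ both lie in $I_j$. Then $v\in J_t$ and $u\in J_{t'}$ for some times $t,t'$ whose setup levels $X_t(p)/N$ and $X_{t'}(p)/N$ both lie in $[L_j,\tfrac32L_j)$. Without loss of generality $t\le t'$; the case $t=t'$ is handled by the same argument.

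From $v\in J_t$ we get $X_t(a_v)\ge \tfrac{9}{10}X_t(p)/N\ge \tfrac{9}{10}L_j$. Since $x_s(a_v)\ge0$, monotonicity gives $X_{t'}(a_v)\ge X_t(a_v)\ge \tfrac{9}{10}L_j$.

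From $u\in J_{t'}$, applied with $v\in N_G(u)$, we get $X_{t'}(a_v)\le \tfrac35 X_{t'}(p)/N<\tfrac35\cdot\tfrac32 L_j=\tfrac{9}{10}L_j$. The inequality is strict because the interval is half-open.

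These two bounds contradict each other, so no edge has both endpoints in $I_j$, and each $I_j$ is independent. In the case $t=t'$ one can note additionally that $J_t$ alone is independent, since $\tfrac{9}{10}>\tfrac35$.

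It remains to check that the case $t'<t$ is symmetric: swap the roles of $u$ and $v$, using $u\in J_{t'}$ for the lower bound on $X(a_u)$ and $v\in J_t$ for the upper bound on $X(a_u)$. Also, $I_j$ is only defined when some state has setup level in the interval, and the empty set is trivially independent. Together these show that $I_{\mathrm{alg}}$, being one of the $I_j$ or $\emptyset$, is independent.

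The main point to get right is the strictness: the lower bound $\tfrac{9}{10}L_j$ must meet a strict upper bound, which is supplied by the open right endpoint $\tfrac32L_j$ with $c=3/2$ exactly tuned. No step uses the mean-based property, which is why the lemma holds regardless of learner behavior.
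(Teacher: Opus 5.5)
Your proposal is correct and follows the paper's proof essentially step for step. You order the two witnessing times and use monotonicity of $X_t(a_v)$ to obtain $X_{t'}(a_v)\ge\frac{9}{10}L_j$. The half-open interval then gives the strict neighbor bound $X_{t'}(a_v)<\frac35\cdot\frac32L_j=\frac{9}{10}L_j$, and your extra remarks (the symmetric case, the empty union, and $J_t$ alone being independent, which uses $X_t(p)>0$) are harmless additions.
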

Figure~\ref{fig:independent-set-extraction} illustrates the extraction
rule and why each union $I_j$ remains independent.
\begingroup
\setlength{\intextsep}{6pt}
\setlength{\abovecaptionskip}{4pt}
\floatstyle{plain}
\restylefloat{figure}
\begin{figure}[ht]
  \centering
  \includegraphics[width=\linewidth]{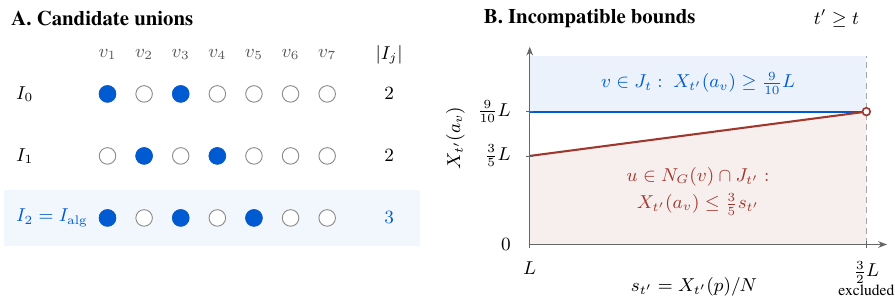}
  \caption{\small Extraction by setup intervals.
    (A) Each row is the union $I_j$ of the candidate sets over one
    setup interval $[L_j,\frac32L_j)$; by Lemma~\ref{lem:independent},
    its blue vertices are pairwise non-adjacent. Extraction returns
    a largest row, $I_{\mathrm{alg}}$.
    (B) Why one interval $[L,\frac32L)$ cannot contain two adjacent
    candidates, where $s_{t'}:=X_{t'}(p)/N$. Once $v$ enters $J_t$ in
    this interval, its activation count stays at least $\frac9{10}L$
    at every later time $t'\ge t$ (blue). A neighbor $u$ can enter
    $J_{t'}$ only if $X_{t'}(a_v)\le\frac35s_{t'}$ (red). The two
    regions meet only at the excluded endpoint $s_{t'}=\frac32L$, so
    $3/2$ is the largest ratio that keeps them apart.}
  \label{fig:independent-set-extraction}
\end{figure}
\endgroup

\begin{proposition}[Mixed soundness and extraction]
\label{prop:sound}
For every real mixed trajectory $x$, the
extraction rule above returns an independent set
$I_{\mathrm{alg}}$ satisfying
\begin{equation*}
  R(x,T)\le6|I_{\mathrm{alg}}|\frac{T}{N}
    +6N(\gamma_TT+1)+\gamma_TT.
\end{equation*}
For explicitly encoded rational trajectories, extraction runs
in polynomial time and uses neither
the learner nor its mean-based parameter.
\end{proposition}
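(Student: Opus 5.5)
The plan is to split the reward into rounds where a vertex is near-leading and rounds where it is not, and then charge the near-leading part vertex by vertex to the setup level at that vertex's \emph{last} near-leading time. The first piece is easy. By \eqref{eq:reward-identity}, $R(x,T)=\sum_t\sum_v x_t(h_v)y_t(b_v)$. If $v$ is not near-leading at $t$, then some $b'$ has $S_t(b')>S_t(b_v)+\gamma_TT$, so the mean-based property gives $y_t(b_v)\le\gamma_T$. Because $\sum_v x_t(h_v)\le1$, these terms contribute at most $\gamma_TT$ in total, which is the last term of the bound. In every remaining term we use $y_t(b_v)\le1$. What is left is at most $\sum_v H_v$, where $H_v:=\sum_{t\,:\,v\text{ near-leading at }t}x_t(h_v)$.

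Next we bound $H_v$. Let $t_v$ be the last round at which $v$ is near-leading; if there is none, $H_v=0$. Every contributing round satisfies $t\le t_v$, so $H_v\le X_{t_v}(h_v)+x_{t_v}(h_v)\le X_{t_v}(h_v)+1$. The second \eqref{eq:local-comparisons} inequality at $t_v$ then gives
\[
H_v\le \frac{s_v}{2}+\gamma_TT+1,\qquad s_v:=\frac{X_{t_v}(p)}{N}.
\]
We split the vertices into two classes. If $s_v<10(\gamma_TT+1)$, then $H_v\le 6(\gamma_TT+1)$, and summing over at most $N$ such vertices gives the term $6N(\gamma_TT+1)$. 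If $s_v\ge10(\gamma_TT+1)$ (in particular $s_v\ge1$), the remark after the definition of $J_t$ shows that $v\in J_{t_v}$. Hence $v\in I_{j(v)}$, where $j(v)$ is the unique $j$ with $L_j\le s_v<\tfrac32L_j$. For these vertices, $\gamma_TT+1\le s_v/10$, so $H_v\le \tfrac{6}{10}s_v<\tfrac{9}{10}L_{j(v)}$.

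Assigning each vertex of the second class to a single index $j(v)$, we obtain $\sum_v H_v\le\sum_j |I_j|\cdot\tfrac{9}{10}L_j\le |I_{\mathrm{alg}}|\cdot\tfrac{9}{10}\sum_{j\le J}L_j$. Here $J$ is the largest index that occurs, and it satisfies $L_J\le X_T(p)/N\le T/N$. The geometric sum gives $\sum_{j\le J}(3/2)^j\le 3L_J\le 3T/N$, so this part is at most $\tfrac{27}{10}|I_{\mathrm{alg}}|T/N\le 6|I_{\mathrm{alg}}|T/N$. That each $I_j$ is independent is Lemma~\ref{lem:independent}. The step I expect to need the most care is this charging: using the \emph{last} near-leading time, so that all of $v$'s near-leading harvest mass is dominated by one cumulative count, and attributing each vertex to exactly one interval, so that no vertex is double-counted across the $I_j$. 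The geometric growth of the $L_j$ is what turns the sum over intervals into a constant times $T/N$.

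For the algorithmic claim, extraction needs only the cumulative counts $X_t(\cdot)$, which are prefix sums of the rational input. For each of the $T$ rounds we compute the level $X_t(p)/N$, its interval index $j$ (at most $O(\log T)$ indices occur, since levels are at most $T/N$), and the set $J_t$ by checking the thresholds for each vertex and its neighbours in $O(N^2)$ comparisons. We then take the unions $I_j$ and return a largest one, breaking ties by the smallest $j$. All arithmetic is on rationals whose bit length is polynomial in the encoding of $x$. The procedure never evaluates the learner or uses $\gamma_T$, because the mean-based property enters only the reward analysis and not the extraction rule.
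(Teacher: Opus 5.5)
Your proof is correct, and it reaches the bound with a better leading constant, $\tfrac{27}{10}$ instead of $6$. The two arguments differ in what they partition. The paper partitions \emph{rounds}: first by whether the current setup level $X_t(p)/N$ lies below $10(\gamma_TT+1)$, then by the geometric interval containing that level. It applies the ``last selected round'' observation separately to each class of rounds. A vertex that is near-leading in several intervals is therefore charged $\tfrac{9}{10}L_j+1<2L_j$ once per interval, which uses $L_j\ge1$, and the resulting $\sum_j 2L_j|I_j|$ is summed geometrically. You instead apply the same observation once per vertex, at its last near-leading time $t_v$, so that $X_{t_v}(h_v)+1$ dominates all of $v$'s near-leading harvest mass, including mass collected at small setup levels. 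You then partition \emph{vertices} by $s_v$ and charge each vertex to the single interval $j(v)$. This absorbs the $+1$ into $\gamma_TT+1\le s_v/10$ and needs only the one membership $v\in J_{t_v}\subseteq I_{j(v)}$. It also uses $|I_j|\le|I_{\mathrm{alg}}|$ only to count the disjoint classes $\{v:j(v)=j\}$. The paper's per-interval accounting follows the extraction rule more literally, since every near-leading vertex in a large-setup round of interval $j$ lies in $I_j$, but it pays for this with the cruder constant. Both arguments still need the geometric growth of $L_j$, because different vertices can be charged to different intervals. Both use Lemma~\ref{lem:independent} only for the independence claim, not for the reward bound.
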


For concision, we give the full proofs
of Lemma~\ref{lem:independent} and Proposition~\ref{prop:sound} in Appendix~\ref{app:proof-sound}. By
\eqref{eq:reward-identity}, round $t$ pays the harvesting mass
$x_t(h_v)$ weighted by $y_t(b_v)$. Vertices that are
not near-leading have $y_t(b_v)\le\gamma_T$ by mean-basedness and
contribute at most $\gamma_T$ per round. At
small setup levels, $X_t(p)/N<10(\gamma_TT+1)$, the~\ref{eq:local-comparisons} bound the
total harvesting mass of each near-leading vertex
by $6(\gamma_TT+1)$, so these rounds contribute at
most $6N(\gamma_TT+1)$. At large setup levels, every
near-leading vertex lies in $J_t$, hence in
the independent set $I_j$ of its geometric
interval, where the same constraints keep its
harvesting mass below $2L_j$; summing $2L_j|I_j|\le2L_j|I_{\mathrm{alg}}|$
over the intervals up to the
maximal setup level $T/N$ gives $6|I_{\mathrm{alg}}|T/N$.

\subsection{Horizon choice and hardness transfer}
\label{sec:hardness-transfer}

The horizon must make both the
score-gap losses and the coordinatewise probability
leakage sublinear. For the fixed rational
$\zeta$ in Theorem~\ref{thm:main}, choose $k:=2\lceil N^{2/\zeta-1}\rceil$,
$q_0:=Nk$, and $T:=q_0+2Nk=3Nk$. These choices give
\begin{equation*}
  T=\Theta(N^{2/\zeta}),
  \qquad N=\Theta(T^{\zeta/2}),
  \qquad k=\Theta(T^{1-\zeta/2}).
  \tag{Reduction scales}
  \label{eq:reduction-scales}
\end{equation*}
Appendix~\ref{app:horizon-bookkeeping} checks that the horizon and the
encodings are polynomial, that $n,m\le T^\beta$ for
large instances, and that the error terms
of Propositions~\ref{prop:complete} and~\ref{prop:sound} are $o(T)$, so that
$R_{\mathrm{pure}}^\star(T)\ge\operatorname{IS}(G)T/(6N)-o(T)$ and $R(x,T)\le6|I_{\mathrm{alg}}|T/N+o(T)$. It then completes the
proof of Theorem~\ref{thm:main} with $c=1/(24r)$: an optimizer
achieving the claimed guarantee would give
$|I_{\mathrm{alg}}|\ge\operatorname{IS}(G)/36-cN/6-o(N)$, which exceeds $N/(64r)$ on YES
instances of \eqref{eq:source-gap} while NO instances
have no independent set larger than
$N/(128r)$, so the optimizer would decide
the promise problem in polynomial time.

% !TEX root = ../main.tex
\section{Zero-sum games: pure play keeps the full
surplus}
\label{app:zero-sum}

Section~\ref{sec:hardness} showed that mixed play does not
remove the computational obstruction in general-sum games.
In zero-sum games, the game value gives
us a useful starting point: if the
learner has sublinear external regret and can
be simulated efficiently, a pure myopic best
response already approximates the optimal mixed trajectory
within a sublinear additive error. We begin
with this observation, then explore surplus against
separable \ac{FTRL} in Sections~\ref{app:zero-sum-passive} and~\ref{app:zero-sum-active}. A fixed
max-min strategy earns passive surplus while the
learner still places mass on suboptimal
responses; the learner's discrete updates can
provide a second source of surplus.

Throughout this section, we take $B=-A$ and
optimizer payoffs in $[-1,1]$. The scores therefore
evolve as $S_{t+1}=S_t-A^{\mathsf T}x_t$, starting from $S_0=0$.
We index actions by rows or columns
and use a fixed order to break
ties. A pure trajectory $\pi=(\pi_0,\ldots,\pi_{T-1})$ has $\pi_t=e_{\ell_t}$
for every $t$. Alongside the trajectory
benchmarks, we use the game value
\begingroup
\setlength{\abovedisplayskip}{1pt}
\setlength{\belowdisplayskip}{1pt}
\setlength{\abovedisplayshortskip}{1pt}
\setlength{\belowdisplayshortskip}{1pt}
\begin{equation*}
  \operatorname{Val}(A):=\max_{u\in\Delta(\mathcal{A})}\min_{v\in\Delta(\mathcal{B})}u^{\mathsf T}Av
  =\min_{v}\max_{u}u^{\mathsf T}Av.
\end{equation*}
\endgroup
We state all computational guarantees for rational
game data and explicit trajectories, with running
time polynomial in the horizon $T$, the
input encoding length, and the required numerical
precision. As in the hardness reduction, producing
the trajectory is part of the computation.

\paragraph{Warm-up: a pure myopic best response.}
\label{app:zero-sum-approximation}
Our first step is to compare a
simple myopic rule with the offline benchmarks.
When $B=-A$, the external-regret guarantee from Section~\ref{sec:model-result}
gives $R(x,T)\le\min_{y\in\Delta(\mathcal{B})}\sum_{t<T}x_t^{\mathsf T}Ay +\operatorname{Regret}_{\mathsf L}(T)$ for every
mixed trajectory $x$. At each round $t$,
the optimizer chooses a pure best response
$\pi_t:=e_{\ell_t}$, $\ell_t\in\operatorname*{arg\,max}_{\ell\in\mathcal{A}}e_\ell^{\mathsf T}Ay_t$, breaking ties by
the fixed action order. Proposition~\ref{prop:app-pure-best-response} in
Appendix~\ref{app:proof-best-response} shows that the resulting pure
trajectory $\pi_{\mathrm{BR}}$ satisfies $R(\pi_{\mathrm{BR}},T)\ge T\operatorname{Val}(A)$ and
\begingroup
\setlength{\abovedisplayskip}{1pt}
\setlength{\belowdisplayskip}{1pt}
\setlength{\abovedisplayshortskip}{1pt}
\setlength{\belowdisplayshortskip}{1pt}
\[
    R(\pi_{\mathrm{BR}},T)
    \ge R_{\mathrm{mix}}^\star(T)-\operatorname{Regret}_{\mathsf L}(T)
    \ge R_{\mathrm{pure}}^\star(T)-\operatorname{Regret}_{\mathsf L}(T).
\]
\endgroup
The comparison follows from two uses of
the game value. By the minimax theorem,
every pure best response earns at least
$\operatorname{Val}(A)$. The regret bound, in turn, caps
every trajectory, pure or mixed, at
$T\operatorname{Val}(A)+\operatorname{Regret}_{\mathsf L}(T)$ because the average optimizer
action is itself a mixed strategy.

\subsection{The potential decomposition}
\label{sec:potential-decomposition}

To analyze surplus above \(T\operatorname{Val}(A)\), we specialize to separable FTRL from Section~\ref{sec:ftrl-preliminaries} and use the conjugate regularizer and its scaled potential,
\begingroup
\setlength{\abovedisplayskip}{-2pt}
\setlength{\belowdisplayskip}{1pt}
\setlength{\abovedisplayshortskip}{-2pt}
\setlength{\belowdisplayshortskip}{1pt}
\[
  h^*(z):=\max_{y\in\Delta(\mathcal{B})}\{\langle z,y\rangle-h(y)\},
  \qquad
  \Phi_{\eta_T}(S):=\frac{1}{\eta_T}h^*(\eta_TS),
\]
Here $\nabla h^*=Q_h$. We measure the contribution
of each discrete update through $D_{h^*}(u,v):=h^*(u)-h^*(v)-\langle
Q_h(v),u-v\rangle$, the Bregman divergence of $h^*$.
With these quantities in place, the
decomposition of \citet[Theorem~3.3]{su2026nocoin} writes the reward
as a terminal potential difference plus
a sum of nonnegative Bregman terms:
\begin{equation*}
  R(x,T)
  =\Phi_{\eta_T}(0)-\Phi_{\eta_T}(S_T)
   +\frac{1}{\eta_T}\sum_{t=0}^{T-1}
       D_{h^*}(\eta_TS_{t+1},\eta_TS_t).
  \tag{\acs{FTRL}-Reward}
  \label{eq:app-disc-cont-used}
\end{equation*}
\endgroup
To interpret the terminal term, let us
follow a fixed max-min strategy $x^\star\in\Delta(\mathcal{A})$ in
continuous time. We write $\operatorname{supp}(x^\star):=\{\ell:x^\star_\ell>0\}$ for
its support. The corresponding reward is
\begin{equation}
  R_{\mathrm{cont}}(x^\star,T)
  =\int_0^T (x^\star)^{\mathsf T}A
         Q_h(-\eta_T sA^{\mathsf T}x^\star)\,ds
  =\Phi_{\eta_T}(0)-\Phi_{\eta_T}(-TA^{\mathsf T}x^\star),
  \label{eq:app-zero-sum-continuous-reward}
\end{equation}
The equality follows from the path-independence identity
of \citet[Lemma~3.1]{su2026nocoin}. Because a max-min strategy earns at
least the value against every learner distribution,
$R_{\mathrm{cont}}(x^\star,T)\ge T\operatorname{Val}(A)$.
We define the optimizer’s \emph{surplus} at horizon \(T\) as \(R(x,T)-T\operatorname{Val}(A)\), the excess of its cumulative reward over the game-value baseline. For fixed max-min play,
this passive surplus accumulates while
the learner places mass on actions outside
the best-response set $\operatorname{br}(x^\star):=\operatorname*{arg\,min}_{j\in\mathcal{B}}(x^\star)^{\mathsf T}Ae_j$.
We first show that pure actions can
preserve this passive surplus, then use the
Bregman terms to obtain active surplus.

\subsection{Passive surplus survives terminal rounding}
\label{app:zero-sum-passive}
We first recall the continuous-time surplus of
\citet[Theorems~4.1 and~4.2]{su2026nocoin}. Let
$N_{\mathrm{sub}}:=m-|\operatorname{br}(x^\star)|$ count the learner
actions that are not best responses to the max-min strategy $x^\star$.

\begin{proposition}[\citet{su2026nocoin}]
\label{prop:app-passive-ftrl-baseline}
Fix the game $A$, a max-min strategy $x^\star$, and the separable regularizer
$h$. If $N_{\mathrm{sub}}>0$, there is a constant $c_{A,h}>0$ such
that, for sufficiently large $\eta_TT$, it holds that
\begin{equation*}
  R_{\mathrm{cont}}(x^\star,T)
  \ge T\operatorname{Val}(A)
      +c_{A,h}{N_{\mathrm{sub}}}/{\eta_T}.
\end{equation*}
\end{proposition}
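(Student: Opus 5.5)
The plan is to work directly from the continuous-time identity \eqref{eq:app-zero-sum-continuous-reward}. It gives $R_{\mathrm{cont}}(x^\star,T)=\int_0^T (x^\star)^{\mathsf T}A\,Q_h(-\eta_T sA^{\mathsf T}x^\star)\,ds$. Write $g:=A^{\mathsf T}x^\star\in\mathbb R^m$, so that $g_j=(x^\star)^{\mathsf T}Ae_j$.

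\emph{Reduce the surplus to leaked mass.} By the minimax theorem, $g_j\ge\operatorname{Val}(A)$ for every $j$, with equality exactly on $\operatorname{br}(x^\star)$. Set $\delta:=\min_{j\notin\operatorname{br}(x^\star)}(g_j-\operatorname{Val}(A))>0$; it depends only on $A$ and $x^\star$. The integrand equals $\langle g,y(s)\rangle$ with $y(s):=Q_h(-\eta_T s g)$. Hence the surplus equals $\int_0^T\sum_{j\notin\operatorname{br}}(g_j-\operatorname{Val}(A))\,y_j(s)\,ds$, which is at least $\delta\int_0^T\sum_{j\notin\operatorname{br}}y_j(s)\,ds$. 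After the change of variables $\tau=\eta_T s$, this becomes $\frac{\delta}{\eta_T}\int_0^{\eta_T T}\sum_{j\notin\operatorname{br}}Q_h(-\tau g)_j\,d\tau$. So it suffices to show that this integral is at least a constant times $N_{\mathrm{sub}}$ once $\eta_TT$ is large.

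\emph{Per-action lower bound on a fixed window.} The map $\tau\mapsto Q_h(-\tau g)$ is continuous and independent of $T$. At $\tau=0$ it is the uniform-type point $\arg\min h$. By symmetry and strict convexity of the separable $h$, that point is the uniform distribution $1/m$ on every coordinate. By continuity, there is $\tau_0=\tau_0(A,h)>0$ such that $Q_h(-\tau g)_j\ge 1/(2m)$ for all $j$ and all $\tau\in[0,\tau_0]$. Whenever $\eta_TT\ge\tau_0$, each suboptimal coordinate therefore contributes at least $\tau_0/(2m)$ to the integral. The total is then at least $N_{\mathrm{sub}}\tau_0/(2m)$. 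This gives the claim with $c_{A,h}:=\delta\tau_0/(2m)$, which is valid for $\eta_TT\ge\tau_0$, i.e.\ for sufficiently large $\eta_TT$.

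\emph{Main obstacle.} The only subtle point is ensuring the constants are independent of $T$ and $\eta_T$. This holds because, after rescaling, the trajectory $\tau\mapsto Q_h(-\tau g)$ depends only on $(A,x^\star,h)$. A sharper argument that tracks the asymptotic mass on suboptimal actions (as in \citet{su2026nocoin}) would improve $c_{A,h}$ but is unnecessary. Continuity of $Q_h$, needed for the window argument, follows from Definition~\ref{def:separable-regularizer}. If a fixed-point uniformity argument at $\tau=0$ is undesired, any positive lower bound on $\min_j Q_h(0)_j$ suffices. Such a bound follows because $\theta'$ is strictly increasing, so the first-order conditions force equal coordinates.
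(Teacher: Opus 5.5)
Your proof is correct, but it takes a different route from the one the paper relies on. The paper does not prove this proposition itself. It imports it from \citet{su2026nocoin}, and the tool it records for this purpose is the path-independence identity $R_{\mathrm{cont}}(x^\star,T)=\Phi_{\eta_T}(0)-\Phi_{\eta_T}(-TA^{\mathsf T}x^\star)$.

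Through that identity, write $g:=A^{\mathsf T}x^\star$ and $\tau:=\eta_TT$. The surplus is then $\frac{1}{\eta_T}\bigl[h^*(0)-\max_{y}\{-\tau\langle g-\operatorname{Val}(A)\mathbf 1,y\rangle-h(y)\}\bigr]$. This is nondecreasing in $\tau$. It converges to $\frac{1}{\eta_T}$ times the gap between the minimum of $h$ over distributions supported on $\operatorname{br}(x^\star)$ and its minimum over $\Delta(\mathcal B)$. For separable $h$ this gap is $|\operatorname{br}(x^\star)|\,\theta(1/|\operatorname{br}(x^\star)|)+N_{\mathrm{sub}}\theta(0)-m\theta(1/m)$, which is positive by strict convexity of $\theta$. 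That route gives an explicit limiting constant in terms of $\theta$. By monotonicity it also gives the matching $O(1/\eta_T)$ upper bound mentioned in the remark after the proposition.

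You instead stay with the integral definition of $R_{\mathrm{cont}}$ and reduce the surplus to $\delta/\eta_T$ times the rescaled mass on suboptimal actions. You then lower-bound that mass on a fixed window $[0,\tau_0]$. This uses only two facts: $Q_h(0)$ is uniform (by symmetry plus uniqueness of the minimizer), and $Q_h$ is continuous. Your argument is shorter and self-contained. It also shows that the entire $\Omega(N_{\mathrm{sub}}/\eta_T)$ passive surplus is already earned in the first $\tau_0/\eta_T$ units of time. The price is a non-explicit constant, which enters through the continuity modulus defining $\tau_0$, and no upper bound.

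Two minor points. First, $g_j\ge\operatorname{Val}(A)$, with equality exactly on $\operatorname{br}(x^\star)$, follows directly from the definition of a max-min strategy, so the minimax theorem is not needed there. Second, your fallback first-order-condition argument for $\min_j Q_h(0)_j>0$ would need care with one-sided derivatives at the simplex boundary. The symmetry argument already suffices, so nothing depends on it.
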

\begin{remark}
    The constant $c_{A,h}$ and the threshold on $\eta_TT$ depend only on
    the game $A$, strategy $x^\star$, and regularizer $h$.
    \citet[Theorem~4.1]{su2026nocoin} also establish an upper bound on
    the continuous-time surplus from fixed max-min play proportional to
    $N_{\mathrm{sub}}/\eta_T$. When $N_{\mathrm{sub}}=0$, every learner
    action earns $\operatorname{Val}(A)$ against $x^\star$, so
    $R_{\mathrm{cont}}(x^\star,T)=T\operatorname{Val}(A)$ for all $T\ge0$.
\end{remark}

This surplus accrues while the learner removes probability from
suboptimal actions. To preserve it with pure play, we use
\emph{terminal balance}: round the target counts $Tx_\ell^\star$ to
integers $q_\ell\in\{\lfloor Tx_\ell^\star\rfloor,
\lceil Tx_\ell^\star\rceil\}$ summing to $T$, then play each action
$\ell$ exactly $q_\ell$ times. Lemma~\ref{lem:app-terminal-rounding} in Appendix \ref{app:proof-passive}
constructs these counts in polynomial time.

\begin{maintheorem}[Pure passive extraction]
\label{thm:app-pure-passive}
For every max-min strategy $x^\star$ and horizon $T$, there is a
terminally balanced deterministic pure trajectory $\pi$ satisfying
\begin{equation}
\label{eq:app-pure-passive-main}
  R(\pi,T)\ge R_{\mathrm{cont}}(x^\star,T)
             -|\operatorname{supp}(x^\star)|.
\end{equation}
\end{maintheorem}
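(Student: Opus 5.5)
Our plan starts from the decomposition \eqref{eq:app-disc-cont-used}, which holds for every trajectory, pure or mixed. Since every Bregman term $D_{h^*}(\eta_TS_{t+1},\eta_TS_t)$ is nonnegative (because $h^*$ is convex), any trajectory $\pi$ satisfies
\[
  R(\pi,T)\ge\Phi_{\eta_T}(0)-\Phi_{\eta_T}(S_T^\pi).
\]
This lower bound depends on $\pi$ only through the terminal score $S_T^\pi=-A^{\mathsf T}\sum_\ell q_\ell e_\ell$, where $q_\ell$ is the number of rounds in which action $\ell$ is played. The ordering of the rounds is therefore irrelevant. By \eqref{eq:app-zero-sum-continuous-reward}, the continuous benchmark equals $\Phi_{\eta_T}(0)-\Phi_{\eta_T}(-TA^{\mathsf T}x^\star)$. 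It then suffices to show
\[
  \Phi_{\eta_T}(S_T^\pi)\le\Phi_{\eta_T}(-TA^{\mathsf T}x^\star)+|\operatorname{supp}(x^\star)|.
\]

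We take the terminally balanced counts from Lemma~\ref{lem:app-terminal-rounding}. These are integers $q_\ell\in\{\lfloor Tx^\star_\ell\rfloor,\lceil Tx^\star_\ell\rceil\}$ with $\sum_\ell q_\ell=T$. For $\ell\notin\operatorname{supp}(x^\star)$ the target $Tx^\star_\ell=0$ is already an integer, so $q_\ell=0$. Writing $\delta:=q-Tx^\star$, we therefore have $|\delta_\ell|<1$ on the support and $\delta_\ell=0$ off it, so $\|\delta\|_1\le|\operatorname{supp}(x^\star)|$. We play each action $\ell$ exactly $q_\ell$ times in any fixed order, for instance in contiguous blocks. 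This is deterministic and computable in polynomial time.

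The key step is a Lipschitz bound on $\Phi_{\eta_T}$. Its gradient is $\nabla\Phi_{\eta_T}(S)=Q_h(\eta_TS)\in\Delta(\mathcal{B})$, where the scaling by $\eta_T$ cancels. By the mean value theorem along the segment from $-TA^{\mathsf T}x^\star$ to $S_T^\pi$,
\[
  \Phi_{\eta_T}(S_T^\pi)-\Phi_{\eta_T}(-TA^{\mathsf T}x^\star)=\langle y,-A^{\mathsf T}\delta\rangle=-\delta^{\mathsf T}Ay
\]
for some $y\in\Delta(\mathcal{B})$. Since the entries of $A$ lie in $[-1,1]$, we get $|\delta^{\mathsf T}Ay|\le\|\delta\|_1\max_{\ell,j}|A_{\ell j}|\le|\operatorname{supp}(x^\star)|$. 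Combining this with the first display yields \eqref{eq:app-pure-passive-main}.

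We expect the main subtleties to be mild. First, we must check that $h^*$ is differentiable everywhere with $\nabla h^*=Q_h$; this holds because $h$ is strictly convex and continuous on the compact simplex, so the maximizer is unique (Danskin). Second, we must ensure the rounding lemma never assigns positive counts outside the support. The bound could be sharpened using $\sum_\ell\delta_\ell=0$ to about $|\operatorname{supp}(x^\star)|/2$, but the stated constant needs only the crude estimate.
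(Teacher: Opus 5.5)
Your proof is correct and follows the paper's argument: you drop the nonnegative Bregman terms in \eqref{eq:app-disc-cont-used}, play the terminally balanced counts of Lemma~\ref{lem:app-terminal-rounding} in any order, and bound the terminal-potential loss by $\|\delta\|_1\le|\operatorname{supp}(x^\star)|$ because $\Phi_{\eta_T}$ is $1$-Lipschitz in $\|\cdot\|_\infty$. The only difference is cosmetic: the paper obtains this Lipschitz bound directly from the variational formula $h^*(\eta_TU)-h^*(\eta_TV)\le\max_{y\in\Delta(\mathcal{B})}\langle\eta_T(U-V),y\rangle$, which avoids the differentiability check that your mean-value-theorem step requires.
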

\begin{remark}
    Once the terminally balanced counts \(q_\ell\) are fixed, the lower bound in the theorem above holds for every ordering that plays each action \(\ell\) exactly \(q_\ell\) times by horizon \(T\). 
\end{remark}
Appendix~\ref{app:proof-passive} proves the theorem and gives Algorithm \ref{alg:app-passive-terminal-greedy}, a polynomial-time construction. Combining the theorem with Proposition~\ref{prop:app-passive-ftrl-baseline} yields the following corollary.

\begin{corollary}
\label{cor:pure-passive-inverse-rate}
Fix the game $A$, a max-min strategy $x^\star$, and the regularizer $h$.
If $N_{\mathrm{sub}}>0$, then, for sufficiently large $\eta_TT$, there is a
polynomial-time computable terminally balanced pure trajectory $\pi$ satisfying
\begin{equation}
\label{eq:app-pure-passive-inverse-rate}
  R(\pi,T)\ge T\operatorname{Val}(A)
             +c_{A,h}\frac{N_{\mathrm{sub}}}{\eta_T}
             -|\operatorname{supp}(x^\star)|.
\end{equation}
\end{corollary}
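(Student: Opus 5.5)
The corollary follows by chaining two earlier results; the only real work is checking that both are stated at the same $T$ and $\eta_T$ and that the construction runs in polynomial time.

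\textbf{Step 1: pure passive extraction.} Fix $A$, $x^\star$ and $h$, and assume $N_{\mathrm{sub}}>0$. For the horizon $T$, apply Theorem~\ref{thm:app-pure-passive}. This gives a terminally balanced deterministic pure trajectory $\pi$ with $R(\pi,T)\ge R_{\mathrm{cont}}(x^\star,T)-|\operatorname{supp}(x^\star)|$. By the remark after that theorem, once the counts $q_\ell$ are fixed, the bound holds for every ordering of them. So we may take the counts from Lemma~\ref{lem:app-terminal-rounding} and any explicit ordering, for instance the output of Algorithm~\ref{alg:app-passive-terminal-greedy}.

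\textbf{Step 2: continuous-time surplus.} Once $\eta_TT$ exceeds the threshold of Proposition~\ref{prop:app-passive-ftrl-baseline}, that proposition gives $R_{\mathrm{cont}}(x^\star,T)\ge T\operatorname{Val}(A)+c_{A,h}N_{\mathrm{sub}}/\eta_T$. The constant $c_{A,h}$ and the threshold depend only on $A$, $x^\star$ and $h$, matching the corollary's quantifiers. Substituting this into the bound from Step 1 gives \eqref{eq:app-pure-passive-inverse-rate} directly.

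\textbf{Step 3: polynomial time.} Computing the $q_\ell$ takes polynomial time by Lemma~\ref{lem:app-terminal-rounding}. For this, $x^\star$ is taken rational; it can be obtained by solving the max-min linear program over rational data, or supplied as a fixed rational point. The trajectory then lists $T$ pure actions, at cost polynomial in $T$ and the encoding length.

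The only subtlety is this computability of $x^\star$ for the given game. The corollary fixes $x^\star$, so it suffices that $x^\star$ has a polynomial-size rational encoding. This holds for LP vertex solutions, and in any case it is the same hypothesis under which Lemma~\ref{lem:app-terminal-rounding} is stated.
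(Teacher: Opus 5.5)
Your proposal is correct and follows the paper's proof: chain Theorem~\ref{thm:app-pure-passive} with Proposition~\ref{prop:app-passive-ftrl-baseline} at the same $T$ and $\eta_T$, then obtain polynomial-time construction from Lemma~\ref{lem:app-terminal-rounding} and Algorithm~\ref{alg:app-passive-terminal-greedy}. Your remark that $x^\star$ needs a polynomial-size rational encoding makes explicit what the paper covers only through its blanket assumption of rational game data.
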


\subsection{Active surplus survives prefix balancing}
\label{app:zero-sum-active}
When $N_{\mathrm{sub}}=0$, as in Rock--Paper--Scissors, playing
$x^\star$ earns no passive surplus. \citet[Theorem~4.3]{su2026nocoin}
recover active surplus by alternating two mixed strategies
averaging to $x^\star$. For strongly convex separable \ac{FTRL}
and sufficiently small $\eta_T$, they obtain
$\Omega(\eta_TT/\operatorname{poly}(n,m))$ surplus with high
probability for independent uniform $[-1,1]$ payoff entries.
We retain this gain from the learner's delayed response using pure actions.

We require the same assumption from \citet[Assumption~1]{assos2024maximizing} and
\citet[Assumption~1]{su2026nocoin} for the active surplus.
\begin{assumption}[Non-identification on the max-min support]
\label{ass:app-non-identification}
Some max-min strategy $x^\star\in\Delta(\mathcal A)$ has
distinct $i,j\in\operatorname{br}(x^\star)$ such that
$A_{\ell i}\ne A_{\ell j}$ for some
$\ell\in\operatorname{supp}(x^\star)$.
\end{assumption}
Terminal balance controls only the final counts. For the active surplus, we need to strengthen
it to prefix balance to control the learner throughout pure play.
We call a pure trajectory \emph{prefix balanced} if
\begin{equation*}
  \lfloor tx_\ell^\star\rfloor\le X_t(\ell)\le\lceil tx_\ell^\star\rceil
  \qquad\text{for every }\ell\in\mathcal A\text{ and }t\le T.
  \tag{Prefix Quotas}
  \label{eq:app-prefix-quotas}
\end{equation*}

\Needspace{14\baselineskip}
\begin{maintheorem}[Balanced pure active extraction]
\label{thm:app-pure-active}
Let $A\in[-1,1]^{n\times m}$ and fix $x^\star,i,j$
satisfying Assumption~\ref{ass:app-non-identification}.
Let the learner use \ac{FTRL} with a separable regularizer $h$
that is $\mu$-strongly convex with respect to $\|\cdot\|$,
and let $\|\cdot\|_*$ denote the dual norm.
Fix $\delta\in(0,1/m)$ and set
\[
    \sigma_{ij}^2(x^\star)
  =\sum_{\ell\in\operatorname{supp}(x^\star)}
       x_\ell^\star(A_{\ell i}-A_{\ell j})^2, \qquad
  M:=\sup_{u\in[\delta,1-\delta]}\theta''(u),
  \qquad
  L_A:=\max_{\ell\in\operatorname{supp}(x^\star)}
      \|A^{\mathsf T}e_\ell\|_*.
\]
Choose step size $\eta_T\le
  \frac{\mu}{c_{\|\cdot\|}(|\operatorname{supp}(x^\star)|+1)L_A}
  \left(\frac1m-\delta\right)$
where $c_{\|\cdot\|}=\sup_{u\ne0}\|u\|_\infty/\|u\|$.
Then there exists a deterministic prefix balanced pure trajectory
$\pi$ satisfying
\begin{equation}
\label{eq:app-pure-active-main}
  R(\pi,T)\ge T\operatorname{Val}(A)
  +\frac{\eta_T T}{4M}\,\sigma_{ij}^2(x^\star)
  -\frac{\eta_T|\operatorname{supp}(x^\star)|^2}{2\mu}L_A^2.
\end{equation}
\end{maintheorem}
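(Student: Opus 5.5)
Start from the decomposition \eqref{eq:app-disc-cont-used}. For any prefix balanced pure trajectory $\pi$, I will bound the terminal term $\Phi_{\eta_T}(0)-\Phi_{\eta_T}(S_T)$ from below and the Bregman sum from below separately.

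\emph{Terminal term.} Prefix balance at $t=T$ makes $\pi$ terminally balanced. The argument behind Theorem~\ref{thm:app-pure-passive} should then give a terminal term of at least $R_{\mathrm{cont}}(x^\star,T)-|\operatorname{supp}(x^\star)|\ge T\operatorname{Val}(A)-|\operatorname{supp}(x^\star)|$. The plan is to use concavity of $-\Phi_{\eta_T}$ along the segment from $-TA^{\mathsf T}x^\star$ to $S_T$, together with $\|S_T+TA^{\mathsf T}x^\star\|_\infty\le|\operatorname{supp}(x^\star)|$. Since the target carries a coefficient $\eta_T/(2\mu)$ rather than an absolute rounding loss, I would instead use the smoothness form. $h^*$ is $(1/\mu)$-smooth in the dual norm, so $\Phi_{\eta_T}$ is $\eta_T/\mu$-smooth. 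The first-order term $\langle Q_h(-\eta_TTA^{\mathsf T}x^\star),\,S_T+TA^{\mathsf T}x^\star\rangle$ is then handled using the best-response structure, and the quadratic remainder is at most $\frac{\eta_T}{2\mu}\|A^{\mathsf T}(q-Tx^\star)\|_*^2\le\frac{\eta_T}{2\mu}|\operatorname{supp}(x^\star)|^2L_A^2$. This produces exactly the subtracted term in \eqref{eq:app-pure-active-main}.

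\emph{Bregman sum.} Each pure round moves the scaled score by $-\eta_TA^{\mathsf T}e_{\ell_t}$. On a separable regularizer, $D_{h^*}$ admits a local quadratic lower bound. If $Q_h(\eta_TS_t)$ and its neighbor both have coordinates in $[\delta,1-\delta]$, then the Hessian of $h^*$ restricted to the tangent space is bounded below via $1/\theta''\ge1/M$. Restricting to the two-coordinate direction $e_i-e_j$ gives $D_{h^*}(\eta_TS_{t+1},\eta_TS_t)\gtrsim\frac{\eta_T^2}{4M}(A_{\ell_ti}-A_{\ell_tj})^2$. The factor $1/4$ comes from the mass split between coordinates $i$ and $j$.

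The step-size condition exists to guarantee that all coordinates stay in $[\delta,1-\delta]$ at every round. Prefix balance gives $\|S_t+tA^{\mathsf T}x^\star\|_\infty\le c_{\|\cdot\|}|\operatorname{supp}(x^\star)|L_A$. Moreover, $-tA^{\mathsf T}x^\star$ is constant on $\operatorname{br}(x^\star)$, which is all of $\mathcal B$ in the relevant case, or at least keeps $i,j$ near uniform. By $1/\mu$-Lipschitzness of $Q_h$, every coordinate then stays within $\frac{\eta_T}{\mu}c_{\|\cdot\|}(|\operatorname{supp}|+1)L_A\le\frac1m-\delta$ of $1/m$. This is where the $+1$, accounting for one extra step, enters.

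Summing the per-round bound and using the prefix quotas, action $\ell$ is played about $Tx^\star_\ell$ times. Dividing by $\eta_T$ gives $\frac{\eta_TT}{4M}\sigma^2_{ij}(x^\star)$, up to lower-order rounding, which is absorbed into the constant slack or the $\delta$-margin.

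\emph{Existence.} A prefix balanced sequence exists and can be computed by cumulative rounding \citep{golz2025apportionment}. If the per-round bound needs a particular orientation, a derandomization step selects the ordering. I expect the main obstacle to be the careful lower bound on $D_{h^*}$ along simplex-constrained directions. In particular, I must verify that the Hessian of $h^*$ on the tangent space dominates $\frac1{4M}(e_i-e_j)(e_i-e_j)^{\mathsf T}$-type quadratic forms, uniformly when only the $i,j$ coordinates are controlled by $[\delta,1-\delta]$.
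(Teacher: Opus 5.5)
\textbf{Verdict: there is a genuine gap.} Several parts of your plan match the paper: the quadratic remainder, shadowing of the reference ray, and the per-round Bregman bound. Your tangent-space Hessian bound is the weighted-variance identity the paper imports from \citet[Lemma~F.8]{su2026nocoin}, since $W_r\operatorname{Var}(Z_r)\ge(v_i-v_j)^2/(\theta''(y_i)+\theta''(y_j))\ge(v_i-v_j)^2/(2M)$. The factor $1/4$ is this $1/(2M)$ times $\int_0^1 r\,dr$, not a mass split. For shadowing, use $[Q_h(-\eta_TtA^{\mathsf T}x^\star)]_p\ge1/m$ for $p\in\operatorname{br}(x^\star)$; near-uniformity of every coordinate holds only when $\operatorname{br}(x^\star)=\mathcal B$.

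\textbf{The gap} is in the two terms linear in the terminal discrepancies $d_\ell:=X_T(\ell)-Tx_\ell^\star$. First, the first-order term of your smoothness expansion contributes $\sum_\ell d_\ell\,e_\ell^{\mathsf T}Ay^\circ$ with $y^\circ:=Q_h(-\eta_TTA^{\mathsf T}x^\star)$. No best-response structure makes this vanish, because $y^\circ$ is not a min-max strategy of the learner. Second, the bonus sum carries $\sum_\ell d_\ell\,\eta_T(A_{\ell i}-A_{\ell j})^2/(4M)$. The inequality \eqref{eq:app-pure-active-main} has no slack to absorb it: the $\delta$-margin concerns probabilities, and $\operatorname{LAG}_{x^\star}(T)$ can be zero.

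In fact, your plan to prove the bound for an \emph{arbitrary} prefix-balanced trajectory cannot succeed. Take $A=\begin{pmatrix}1&-1/2\\-1/2&1/2\end{pmatrix}$. Then $x^\star=(2/5,3/5)$, $\operatorname{Val}(A)=1/10$, both columns are best responses, and Assumption~\ref{ass:app-non-identification} holds. Since $A^{\mathsf T}x^\star$ is constant, $y^\circ=(1/2,1/2)$, and $Ay^\circ=(1/4,0)$ is not constant on the support. With $T=3$, the schedule $(\ell_0,\ell_1,\ell_2)=(2,1,2)$ satisfies the prefix quotas. As $\eta_T\to0$, every $y_t\to(1/2,1/2)$, so this schedule's reward tends to $1/4$. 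The right-hand side of \eqref{eq:app-pure-active-main} tends to $3/10$. The schedule $(1,2,1)$, also prefix balanced, earns $1/2$.

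\textbf{The missing idea} is to collect both corrections into a single linear cost $\sum_\ell d_\ell c_\ell$, with $c_\ell=e_\ell^{\mathsf T}AQ_h(-\eta_TTA^{\mathsf T}x^\star)+\eta_T(A_{\ell i}-A_{\ell j})^2/(4M)$. Marginal-preserving cumulative rounding, with $\mathbb P(\ell_t=\ell)=x_\ell^\star$, gives this cost zero expectation. You then derandomize the pipage rounding of \citet{golz2025apportionment}. Each fractional state is a convex combination of two feasible successors, so one of them does not decrease the cost. This yields a prefix-balanced realization with $\sum_\ell d_\ell c_\ell\ge0$ (Lemma~\ref{lem:app-cost-derandomization}).

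The cost depends only on the terminal counts. What must be selected is therefore which rows are rounded up at the horizon, together with a prefix-feasible schedule realizing those counts. Choosing an ordering for the sake of the per-round bound, as you suggest, acts on the wrong degree of freedom: the per-round bound already holds for every realization.
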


\paragraph{Proof idea.}
For an update $v_t=A^{\mathsf T}x_t$, suppose two distinct coordinates $i,j$ of
$Q_h(\eta_T(S_t-rv_t))$ stay in $[\delta,1-\delta]$ for $0\le r\le1$.
Integrating the payoff-drop estimate of
\citet[Lemma~F.8]{su2026nocoin} gives
$\frac1{\eta_T}D_{h^*}(\eta_TS_{t+1},\eta_TS_t)
  \ge \frac{\eta_T}{4M}(v_{t,i}-v_{t,j})^2.$
Distinct payoffs thus yield a positive bonus. 
The details are proved in Proposition~\ref{prop:app-active-ftrl-baseline} in the appendix.

The mixture $x^\star$ ties the selected responses $i,j$, whereas pure
rows expose their payoff differences. Their average squared difference
$\sigma_{ij}^2(x^\star)>0$ measures the available bonus. The active
surplus requires both selected best responses $i$ and $j$ to retain
probability at least $\delta$ throughout each update.

Lemma \ref{lem:app-cumulative-rounding}  applies cumulative rounding \citep{golz2025apportionment} to sample a sequence of action indices \((\ell_0,\ldots,\ell_{T-1})\). At round \(t\), the optimizer plays the pure action \(\pi_t=e_{\ell_t}\). Every sampled sequence satisfies the prefix quotas: in its first \(t\) rounds, each action \(\ell\) is played between \(\lfloor tx_\ell^\star\rfloor\) and \(\lceil tx_\ell^\star\rceil\) times. Moreover, the sampling preserves the probability of each action at every round $\mathbb P(\ell_t=\ell)=x_\ell^\star, \text{ for every }\ell\in\mathcal A,\ t<T$.

The quotas keep $S_t$ close to $-tA^{\mathsf T}x^\star$, the score
of fixed max-min play. Along this reference trajectory, each best response
has probability at least $1/m$. Under the theorem's step-size choice,
Lemma~\ref{lem:app-shadowing} shows that both selected coordinates
$i,j$ remain in $[\delta,1-\delta]$ throughout each pure update.
Lemma~\ref{lem:app-one-step-bonus} therefore supplies a bonus in every
round. Taking expectation
over the cumulative-rounding randomness, using
$\mathbb{P}(\ell_t = \ell) = x_\ell^\star$,
and summing over all rounds gives
\begin{align*}
\mathbb{E}\left[
\frac{1}{\eta_T}
\sum_{t=0}^{T-1}
D_{h^*}(\eta_T S_{t+1}, \eta_T S_t)
\right] \geq
\frac{\eta_T}{4M}
\sum_{t=0}^{T-1}
\sum_{\ell \in \operatorname{supp}(x^\star)}
x_\ell^\star (A_{\ell i} - A_{\ell j})^2 =
\frac{\eta_T T}{4M}\sigma_{ij}^2(x^\star).
\end{align*}

To combine this bonus with the terminal contribution in
\eqref{eq:app-disc-cont-used}, we compare the pure terminal score $S_T$
with $-TA^{\mathsf T}x^\star$. Define their difference as $\Delta S
=S_T+TA^{\mathsf T}x^\star
=-\sum_\ell
\bigl(X_T(\ell)-Tx_\ell^\star\bigr)A^{\mathsf T}e_\ell.$ We need to control this difference's influence to the potential. Smoothness gives that
\begin{align*}
    \Phi_{\eta_T}(S_T)
\le \Phi_{\eta_T}(-T A^\top x^*) + \underbrace{
\left\langle Q_h(-\eta_T T A^\top x^*), \Delta S \right\rangle
}_{\text{Linear correction}}+ \underbrace{
\frac{\eta_T}{2\mu}\|\Delta S\|_*^2
}_{\text{Quadratic remainder bound}}.
\end{align*}
Since \(\mathbb P(\ell_t=\ell)=x_\ell^\star\) at every round, we have \(\mathbb E[X_T(\ell)]=Tx_\ell^\star\) and hence \(\mathbb E[\Delta S]=0\). The linear correction therefore vanishes in expectation.
The prefix quotas bound each terminal count discrepancy \(\lvert X_T(\ell)-Tx_\ell^\star\rvert\) by \(1\), so \(\|\Delta S\|_*\le|\operatorname{supp}(x^\star)|L_A\). Thus, the quadratic remainder is bounded by
$\frac{\eta_T}{2\mu}\|\Delta S\|_*^2
\le
\frac{\eta_T|\operatorname{supp}(x^\star)|^2}{2\mu}L_A^2.$

Taking expectation in the smoothness comparison, the linear correction vanishes. Hence the terminal contribution satisfies
\[
\begin{aligned}
\mathbb E\!\left[\Phi_{\eta_T}(0)-\Phi_{\eta_T}(S_T)\right]
&\ge
\Phi_{\eta_T}(0)-\Phi_{\eta_T}(-TA^{\mathsf T}x^\star)
-\frac{\eta_T|\operatorname{supp}(x^\star)|^2}{2\mu}L_A^2\\
&=R_{\mathrm{cont}}(x^\star,T)
-\frac{\eta_T|\operatorname{supp}(x^\star)|^2}{2\mu}L_A^2.
\end{aligned}
\]

Combining the expected bonus and terminal contribution gives
\eqref{eq:app-pure-active-main} in expectation. Since every sampled
sequence satisfies the prefix quotas, at least one such sequence also
attains this reward bound. Lemma~\ref{lem:app-cost-derandomization}
constructs such a sequence deterministically: it chooses rounding
outcomes so that the linear count corrections from the terminal
contribution and the bonus sum to a nonnegative value. These choices
preserve the prefix quotas and therefore give a deterministic pure
trajectory satisfying \eqref{eq:app-pure-active-main}.
Appendix~\ref{app:proof-active} gives the proof,
Appendix~\ref{app:algorithmic} the polynomial-time construction, and
Appendix~\ref{app:random-games} the quantitative random-game specialization.

% !TEX root = ../main.tex
\Needspace{12\baselineskip}

\section{Discussion}
\label{sec:discussion}

The contrast in our title can be
understood through cumulative action counts. These counts
determine the learner's cumulative payoff scores, although
a general mean-based learner may also depend
on the ordered history. In general-sum games,
the reduction of Section~\ref{sec:hardness} shows that, on
a constructed family of games, profitable play
encodes a combinatorial choice. The extraction rule
reads that choice off the counts whether
they are integral or fractional; mixing enlarges
the search space without weakening the argument.
In zero-sum games, the value and regret
bound place both trajectory benchmarks within
$\operatorname{Regret}_{\mathsf L}(T)$ of $T\operatorname{Val}(A)$. Our surplus constructions in
Section~\ref{app:zero-sum} choose pure schedules whose counts approximate
a target mixture, either terminally or at
every prefix, within one unit per action.

Both results have a definite scope. The
hardness is worst case: the instances are
constructed, their size grows with the horizon
as $n,m\le T^\beta$, and the learner must be
mean-based at the rate $\gamma_TT=o(T^{1-\zeta})$. Appendix~\ref{app:learner-classes} gives
sufficient learning-rate conditions for separable \ac{FTRL}.
The positive results use progressively more structure:
the myopic best-response guarantee needs only a
regret bound, and its computation requires efficient
learner simulation. The passive and active surplus
guarantees use the regularizer explicitly and, for
active surplus, the regularity, step-size, and non-identification
conditions of Theorem~\ref{thm:app-pure-active}.

Appendix~\ref{app:open-questions} collects four directions: local optimality of
controlled trajectories, proximal-regret learners, hardness beyond a
prescribed horizon, and smoothed instances. We highlight
two. First, a proximal-regret guarantee \citep{cai2025proximal} does
not by itself remove the barrier: under
their comparator assumptions, MWU satisfies the entropy-based
Bregman form of this guarantee while remaining
in our hard class (Appendix~\ref{app:extended-related-work}). The open
case is recursive Euclidean projected gradient descent,
whose attainable reward and planning complexity are
separate questions (direction~(B)). Second, Theorem~\ref{thm:main} concerns exactly
specified payoffs: the score identities~\eqref{eq:score-identities} hold with
equality. Motivated by smoothed analysis \citep{spielman2004smoothed}, we
consider independent uniform payoff perturbations in $[-\tau,\tau]$.
Along any fixed optimizer trajectory, these shift
each cumulative score by at most $\tau T$.
How much perturbation can the $\Omega(T)$ gap
withstand, and can larger perturbations permit efficient
approximation? Direction~(D) makes this question precise.

% !TEX root = ../main.tex
\section*{AI Disclosure}
\label{sec:ai-disclosure}

The authors obtained the mathematical results in
July 2026. An initial proof, spanning approximately
50 pages, used a reduction from $(1,2)$-Max
TSP but required an additional assumption on
the mean-based learner. To understand whether this
restriction came from the learning model or
the choice of source problem, we explored
alternative sources of strong $\mathsf{NP}$-hardness. This search led
us to the gap Independent Set problem
in~\eqref{eq:source-gap}, used in the present reduction.

Generative AI tools (Claude, Anthropic;
Codex, OpenAI) assisted with manuscript preparation, including
language editing, organization, \LaTeX{} typesetting, and checks
of notation, references, and consistency. The authors
take full responsibility for the correctness of
all statements and proofs and for the
originality and attribution of the content.
\clearpage

\printbibliography[heading=bibintoc]
\clearpage
\appendix
% !TEX root = ../main.tex
\begingroup
\raggedbottom
\setlength{\parskip}{3pt}
\section*{Proof roadmap}
\addcontentsline{toc}{section}{Proof roadmap}
\fontsize{10}{11.5}\selectfont

Let's read Figure~\ref{fig:hardness-proof-roadmap} from top to bottom.
Appendix~\ref{app:dinur-safra} describes the Independent Set gap from
\citet{dinursafra2005vertexcover}: YES graphs have a large independent
set, while NO graphs have only small
ones. The game of Section~\ref{sec:construction} encodes vertex
choices through activation and harvesting, and its
score identities support two complementary arguments.

Completeness (Appendix~\ref{app:proof-complete}) turns an independent set into
a valuable pure schedule. Soundness (Appendix~\ref{app:proof-sound}) shows
that any valuable mixed trajectory must reveal
a large independent set, despite fractional
interleavings. Geometric setup intervals ensure independence;
the mean-based condition bounds the reward in
terms of the extracted size. The
extraction uses only the optimizer's trajectory.

Appendix~\ref{app:horizon-bookkeeping} chooses a polynomial horizon with sublinear
losses and completes the reduction in Theorem~\ref{thm:main}:
the claimed approximation would distinguish YES from
NO graphs. Appendix~\ref{app:four-tasks} gives the other output--benchmark
combinations, and Appendix~\ref{app:learner-classes} establishes mean-based guarantees for separable \ac{FTRL}.
The independent zero-sum proofs in Appendix~\ref{app:zero-sum-proofs}
use regret and minimax for myopic
play, terminal rounding for passive surplus,
and prefix balancing for active surplus.

\par\smallskip
\noindent\begin{minipage}{\linewidth}
\centering
\begin{tikzpicture}[
  every node/.style={font=\fontsize{9}{10.5}\selectfont,align=left},
  proofbox/.style={draw=black!45,line width=.6pt,rounded corners=2pt,
    inner xsep=2.5mm,inner ysep=.6mm,anchor=north},
  fullbox/.style={proofbox,text width=13.9cm,fill=black!3},
  purebox/.style={proofbox,text width=6.45cm,fill=teal!5},
  mixedbox/.style={proofbox,text width=6.45cm,fill=blue!4},
  flow/.style={-{Stealth},line width=.75pt,draw=black!70}
]
  \node[fullbox] (source) at (0,0) {
    \textbf{Source gap: \eqref{eq:source-gap}}\\
    Large independent sets in YES graphs; small
    ones in NO graphs.
  };
  \node[fullbox,below=1.5mm of source] (game) {
    \textbf{Game and score identities: Section~\ref{sec:construction}}\\
    Setup, activation, and harvesting encode the graph
    in cumulative payoffs.
  };
  \node[purebox,below=1.5mm of game.south west,anchor=north west] (complete) {
    \textbf{Pure completeness: Proposition~\ref{prop:complete}}\\
    An independent set $I$ gives a pure
    schedule with reward $|I|T/(6N)-o(T)$ at the reduction
    horizons.
  };
  \node[mixedbox,below=1.5mm of game.south east,anchor=north east] (independent) {
    \textbf{Independent extraction: Lemma~\ref{lem:independent}}\\
    Group candidate vertices by geometric setup intervals;
    each union $I_j$ is independent for every
    mixed trajectory.
  };
  \node[mixedbox,below=1.5mm of independent] (sound) {
    \textbf{Mixed soundness: Proposition~\ref{prop:sound}}\\
    Extract a largest $I_j$ in polynomial time.
    The mean-based condition bounds reward by $6|I_{\mathrm{alg}}|T/N+o(T)$.
  };
  \node[fullbox] (horizon) at ([yshift=-1.5mm]sound.south -| game.center) {
    \textbf{Horizon and errors: Appendix~\ref{app:horizon-bookkeeping}}\\
    Polynomial horizon and encodings; the $\gamma_T$ error terms are
    $o(T)$.
  };
  \node[fullbox,below=1.5mm of horizon] (result) {
    \textbf{Hardness: Theorem~\ref{thm:main}}\\
    An accurate mixed optimizer against a
    pure benchmark would decide the gap.
  };
  \draw[flow] (source.south) -- (game.north);
  \draw[flow] (game.south -| complete.north) -- (complete.north);
  \draw[flow] (game.south -| independent.north) -- (independent.north);
  \draw[flow] (independent.south) -- (sound.north);
  \draw[flow] (complete.south) -- (complete.south |- horizon.north);
  \draw[flow] (sound.south) -- (sound.south |- horizon.north);
  \draw[flow] (horizon.south) -- (result.north);
\end{tikzpicture}
\captionof{figure}{\small Dependencies in the general-sum hardness proof. The
two branches establish reward bounds; the horizon
choice turns them into a gap reduction.}
\label{fig:hardness-proof-roadmap}
\end{minipage}
\par
\clearpage
\endgroup

% !TEX root = ../main.tex
\section{Omitted proofs for Section~\ref{sec:hardness}}
\label{app:hardness-proofs}

\subsection{Other output--benchmark combinations}
\label{app:four-tasks}

The optimizer may output mixed actions but
is compared only with the pure benchmark.
The other pure/mixed output--benchmark combinations follow immediately.
\begin{corollary}
\label{cor:four-tasks}
Under the assumptions of Theorem~\ref{thm:main}, no
polynomial-time optimizer can satisfy any one
of the following guarantees on every
game in the family at all
sufficiently large reduction horizons, unless $\mathsf{P}=\mathsf{NP}$:
\begin{align*}
  R(x,T)&\ge R_{\mathrm{pure}}^\star(T)-cT+o(T),
  & R(x,T)&\ge R_{\mathrm{mix}}^\star(T)-cT+o(T),\\
  R(\pi,T)&\ge R_{\mathrm{pure}}^\star(T)-cT+o(T),
  & R(\pi,T)&\ge R_{\mathrm{mix}}^\star(T)-cT+o(T).
\end{align*}
Here $x$ denotes the optimizer's mixed output
trajectory in the first row, and $\pi$
its pure output trajectory in the second.
\end{corollary}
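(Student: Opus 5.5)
The corollary reduces entirely to Theorem~\ref{thm:main}, using two monotonicity facts. The first is $\Pi_T\subseteq\mathcal X_T$: every pure trajectory is an explicitly encoded rational mixed trajectory, with each $x_t=e_{a_t}$ listed coordinatewise. The second is the benchmark ordering $R_{\mathrm{pure}}^\star(T)\le R_{\mathrm{mix}}^\star(T)$ from Section~\ref{sec:model-result}. We work with the same family of games, the same learner families, and the same constant $c$ as in Theorem~\ref{thm:main}. We then argue by contraposition: assuming a polynomial-time optimizer meets any one of the four guarantees, we exhibit a polynomial-time optimizer meeting the guarantee that Theorem~\ref{thm:main} rules out, namely the top-left one.

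The four cases go as follows.
\begin{itemize}
\item Top-left. This is exactly Theorem~\ref{thm:main}.
\item Top-right. Suppose an optimizer satisfies $R(x,T)\ge R_{\mathrm{mix}}^\star(T)-cT+o(T)$. Since $R_{\mathrm{mix}}^\star(T)\ge R_{\mathrm{pure}}^\star(T)$, the same output satisfies $R(x,T)\ge R_{\mathrm{pure}}^\star(T)-cT+o(T)$, with the same $o(T)$ function. That function therefore stays uniform over game inputs.
\item Bottom-left. A pure output $\pi$ is a valid mixed output. Writing each $e_{a_t}$ as a rational vector costs $O(n)$ bits per round, so the conversion takes time polynomial in $T$ and $n\le T^\beta$. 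The guarantee then transfers verbatim.
\item Bottom-right. Compose the two previous steps.
\end{itemize}

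In every case we obtain a polynomial-time optimizer that returns an explicitly encoded rational mixed trajectory and guarantees $R_{\mathrm{alg}}(T)\ge R_{\mathrm{pure}}^\star(T)-cT+o(T)$ on every game of the family at all sufficiently large reduction horizons. By Theorem~\ref{thm:main}, this implies $\mathsf P=\mathsf{NP}$.

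There is no real obstacle. The one point to check is that the quantifiers match those of Theorem~\ref{thm:main}. First, the set of games and horizons is unchanged. Second, the optional supplied bound $\bar\gamma_T$ of Remark~\ref{rem:computational-model} passes through unchanged. Third, a guarantee against $R_{\mathrm{mix}}^\star$ makes sense even when the supremum is not attained, because the inequality only uses $\sup\ge$ any pure value. Finally, the encoding overhead for converting pure outputs is polynomial, so running time is preserved.
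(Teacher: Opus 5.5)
Your proposal is correct and follows the paper's proof: the top-left case is Theorem~\ref{thm:main} itself, and the other three cases follow from $\Pi_T\subseteq\mathcal X_T$ together with $R_{\mathrm{pure}}^\star(T)\le R_{\mathrm{mix}}^\star(T)$. Your additional checks on encoding overhead and on matching quantifiers are sound and only make explicit what the paper's one-line argument leaves implicit.
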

\begin{proof}
The first guarantee is ruled out by
Theorem~\ref{thm:main}. The remaining cases follow because every
pure trajectory is a mixed trajectory and
$R_{\mathrm{pure}}^\star(T)\le R_{\mathrm{mix}}^\star(T)$.
\end{proof}

\subsection{The source gap: specialization of Dinur and
Safra}
\label{app:dinur-safra}

The source problem~\eqref{eq:source-gap} is a direct specialization
of the gap problem introduced by \citet{dinursafra2005vertexcover}.
To make the source of the constant
gap explicit, we first recall the
quantity they use for an integer
$h\ge 2$: \( \operatorname{IS}_h(G) := \max\Bigl\{
|I|: I\subseteq V(G) \text{ contains no
clique of size }h \Bigr\}. \)

For $h=2$, a set contains no clique
of size $2$ exactly when it contains
no edge. Hence $\operatorname{IS}_2(G)=\operatorname{IS}(G)$, and their problem
reduces to the usual gap Independent Set
problem.

Their gap-$h$-Clique-Independent-Set problem, denoted $h\mathrm{IS}(r,\varepsilon,h)$, asks, given
a $(K,r)$-co-partite graph $G$, to distinguish between
\[
  \operatorname{IS}(G)=K
  \qquad\text{and}\qquad
  \operatorname{IS}_h(G)\le \varepsilon K.
\]

\citet[Theorem~2.1]{dinursafra2005vertexcover} state that, for every $h,\varepsilon>0$, the
problem $h\mathrm{IS}(r,\varepsilon,h)$ is NP-hard whenever $r\ge (h/\varepsilon)^c$
for an absolute constant $c>0$. We specialize
this theorem to $h=2$ and $\varepsilon=1/128$. Choose
once and for all an integer $r\ge
256^c$. Since $h$ and $\varepsilon$ are fixed
constants, so is $r$. Theorem~2.1 then
implies that it is NP-hard, given
a $(K,r)$-co-partite graph $G$, to distinguish
\[
  \operatorname{IS}(G)=K
  \qquad\text{from}\qquad
  \operatorname{IS}(G)\le \frac{K}{128}.
\]
Finally, since $N=Kr$, or equivalently $K=N/r$, this
is exactly the gap stated in~\eqref{eq:source-gap}.

\subsection{Proof of Proposition~\ref{prop:complete}}
\label{app:proof-complete}

The schedule analyzed below is summarized by
\begin{tcolorbox}[
  enhanced, sharp corners, colback=brightblue!4, colframe=brightblue!65,
  boxrule=0.5pt, boxsep=0pt, left=4pt, right=4pt, top=6pt, bottom=6pt,
  oversize, before skip=7pt plus 2pt, after skip=7pt plus 2pt,
  before upper={
    \setlength{\abovedisplayskip}{0pt}\setlength{\belowdisplayskip}{0pt}
    \setlength{\abovedisplayshortskip}{0pt}\setlength{\belowdisplayshortskip}{0pt}
  }
]
\[
\underbrace{\text{setup}}_{\text{\small $S_t(b_v):0\to -k$}}
\hspace{0.6em}\longrightarrow\hspace{0.6em}
\underbrace{\text{activate }v}_{\text{\small $S_t(b_v):-k\to k/2$}}
\hspace{0.6em}\longrightarrow\hspace{0.6em}
\underbrace{\text{harvest }v}_{\text{\small $S_t(b_v):k/2\to 0,\;\text{earn}\approx k/2$}}
\hspace{0.6em}\longrightarrow\hspace{0.6em}
\text{next }v\ldots
\]
\end{tcolorbox}
\begin{proofof}{Proposition~\ref{prop:complete}.}
Fix an arbitrary ordering of the vertices
in $I$. We construct the pure trajectory
in three stages: first a setup phase,
then an activation--harvesting block for each vertex
in $I$, and finally padding if necessary.

First, play $p$ for $q_0=Nk$ rounds. At
the end of the setup phase, $X_t(p)/N=k$.
This ratio remains equal to $k$ throughout
the activation--harvesting blocks, since no further setup
actions are played until the final padding.
No activation or harvesting action has been
played yet, so \eqref{eq:score-identities} gives $S_t(b_v)=-k$ and
$S_t(d_v)=-3k/2$ for every vertex $v$, and
$S_t(d_{uv})=-3k/2$ for every edge $uv\in\mathcal{E}$. Thus
the setup phase puts all rewarding
actions below the neutral action $b_0$.

Now process the vertices of $I$ one
at a time. For each $v\in I$,
play $a_v$ for $3k/2$ rounds, followed by
$h_v$ for $k/2$ rounds. Since $k$ is
even, both block lengths are integers.

Consider the moment immediately after the activation
block of $v$ and before its harvesting
block begins. At this point, $X_t(a_v)=3k/2$,
$X_t(h_v)=0$, and $X_t(p)/N=k$. Hence the rewarding
action associated with $v$ has score
\[
  S_t(b_v)
  =X_t(a_v)-X_t(h_v)-\frac{X_t(p)}{N}
  =\frac{3k}{2}-k
  =\frac{k}{2}.
\]
Thus activation raises the score of $b_v$
from $-k$ to $k/2$.

We next check the scores of all
competing learner actions during the harvesting block
of $v$. First, consider a vertex $u\in
I$ that was processed before $v$. Its
activation and harvesting blocks are already
complete, so $X_t(a_u)=3k/2$ and $X_t(h_u)=k/2$. Therefore,
\[
  S_t(b_u)
  =X_t(a_u)-X_t(h_u)-\frac{X_t(p)}{N}
  =\frac{3k}{2}-\frac{k}{2}-k
  =0.
\]
Thus, each previously harvested rewarding action has
returned to a score of exactly zero.
If a vertex $w$ has not yet
been processed, then $X_t(a_w)=X_t(h_w)=0$, and hence $S_t(b_w)=-k$.
The same conclusion holds for every vertex
outside $I$, since its activation and harvesting
actions are never played in this construction.

Now consider the local alarms. For the
current vertex $v$,
\[
  S_t(d_v)
  =X_t(a_v)-\frac{3X_t(p)}{2N}
  =\frac{3k}{2}-\frac{3k}{2}
  =0.
\]
A previously processed vertex $u\in I$
also has $X_t(a_u)=3k/2$, so its local
alarm likewise has a score of
zero. Every vertex that has not
been activated has local-alarm score $-3k/2$.
Consequently, $S_t(d_w)\le0$ for every vertex $w$.

Finally, consider an edge alarm $d_{uw}$. All
vertices activated so far belong to $I$.
Since $I$ is an independent set, an
edge $uw\in\mathcal{E}$ cannot have both endpoints
among these vertices. Therefore, at most
one of $X_t(a_u)$ and $X_t(a_w)$ equals
$3k/2$, while the other is zero.
In particular, $X_t(a_u)+X_t(a_w)\le3k/2$, and \eqref{eq:score-identities} gives
\[
  S_t(d_{uw})
  =X_t(a_u)+X_t(a_w)-\frac{3X_t(p)}{2N}
  \le\frac{3k}{2}-\frac{3k}{2}
  =0.
\]
Thus, at the beginning of the
harvesting block of $v$, we have
$S_t(b_v)=k/2$, while every other learner action,
including the neutral action $b_0$, has
a score of at most zero.

During the harvesting block, the optimizer repeatedly
plays $h_v$. Suppose that $s$ harvesting rounds
for $v$ have already been completed,
where $0\le s<k/2$. Then $X_t(h_v)=s$, while
$X_t(a_v)=3k/2$ and $X_t(p)/N=k$ remain unchanged. Therefore,
\[
  S_t(b_v)=\frac{3k}{2}-s-k=\frac{k}{2}-s.
\]
Hence the score of $b_v$ starts at
$k/2$ and decreases by exactly one after
each harvesting round. After all $k/2$
harvesting rounds, it returns to zero.

Harvesting changes neither the activation counts
nor $X_t(p)$, so all local and
edge alarm scores remain unchanged. The
other rewarding scores also remain unchanged.
Therefore, throughout the harvesting block, $S_t(b)\le0$
for every $b\ne b_v$, whereas $S_t(b_v)=k/2-s$.

Now consider a harvesting round for which
$S_t(b_v)>\gamma_TT$. Since every other learner action has
a score of at most zero, $S_t(b_v)>S_t(b)+\gamma_TT$
for every $b\ne b_v$. By $\gamma_T$-mean-basedness,
$y_t(b)\le\gamma_T$ for every $b\ne b_v$. Since
the learner's probabilities sum to one,
\[
  y_t(b_v)
  =1-\sum_{b\ne b_v}y_t(b)
  \ge1-(m-1)\gamma_T.
\]
Because the optimizer is playing $h_v$ and
$A(h_v,b_v)=1$ is the only nonzero payoff in
this row, its reward in this round
is exactly $r_t=e_{h_v}^{\top}Ay_t=y_t(b_v)$. Thus every harvesting round
satisfying $S_t(b_v)>\gamma_TT$ earns at least $1-(m-1)\gamma_T$.

This argument may fail only near the
end of the block, when $S_t(b_v)\le\gamma_TT$. Since
$S_t(b_v)$ decreases by exactly one per harvesting
round, there are at most $\gamma_TT+1$
such rounds for each vertex $v$.
Across all vertices in $I$, discarding
these rounds loses at most $|I|(\gamma_TT+1)$
relative to the ideal reward $|I|k/2$.

On every remaining harvesting round, the probability
mass assigned to actions other than \(b_v\)
(the \emph{probability leakage} away from \(b_v\)) satisfies
\[
1-y_t(b_v)\le (|\mathcal B|-1)\gamma_T.
\]
Since there are at most $T$ rounds
in total, the total leakage loss is
at most $m\gamma_TT$. Combining the two losses
gives
\[
  R(\pi,T)
  \ge |I|\frac{k}{2}
    -|I|(\gamma_TT+1)
    -m\gamma_TT.
\]

It remains only to verify that the
trajectory fits within the horizon. Each vertex
$v\in I$ uses $3k/2+k/2=2k$ rounds, so all
activation--harvesting blocks use $2k|I|\le2Nk$ rounds. Together with
the $q_0=Nk$ setup rounds, the construction uses
at most $Nk+2Nk=3Nk=T$ rounds. Fill any unused
rounds with $p$. Since the optimizer receives
zero reward when playing $p$, this
padding does not reduce the reward
already collected. This proves the claim.
\end{proofof}

\subsection{Proofs of Lemma~\ref{lem:independent} and Proposition~\ref{prop:sound}}
\label{app:proof-sound}

We first show that each extracted set
$I_j$ is independent, and then bound the
reward of an arbitrary mixed trajectory by
the size of the largest extracted set
$I_{\mathrm{alg}}$.

\begin{proofof}{Lemma~\ref{lem:independent}.}
Fix $j$, and suppose for contradiction that
$I_j$ contains two adjacent vertices $u$ and
$v$. By the definition of $I_j$, there
exist times $t$ and $t'$ such that
$v\in J_t$ and $u\in J_{t'}$, and both
setup levels lie in the same interval:
\[
  L_j\le\frac{X_t(p)}{N}<\frac32L_j,
  \qquad
  L_j\le\frac{X_{t'}(p)}{N}<\frac32L_j.
\]
Without loss of generality, assume $t\le t'$.

Since $v\in J_t$, the definition of $J_t$
gives $X_t(a_v)\ge9X_t(p)/(10N)$. Because $X_t(p)/N\ge L_j$, we obtain
$X_t(a_v)\ge9L_j/10$. The quantity $X_t(a_v)$ is the cumulative
amount of activation action $a_v$ played before
time $t$, and therefore it is nondecreasing
in time. Hence, since $t\le t'$,
\begin{equation*}
  X_{t'}(a_v)\ge X_t(a_v)\ge\frac9{10}L_j.
  \tag{$*$}
\end{equation*}

Now use the fact that $u\in J_{t'}$.
Since $u$ and $v$ are adjacent, we
have $v\in N_G(u)$. Therefore, the neighbor condition
in the definition of $J_{t'}$ requires
\[
  X_{t'}(a_v)\le\frac{3X_{t'}(p)}{5N}.
\]
But $t'$ belongs to the same setup
interval, so $X_{t'}(p)/N<3L_j/2$. Consequently,
\begin{equation*}
  X_{t'}(a_v)
  <\frac35\cdot\frac32L_j
  =\frac9{10}L_j.
  \tag{$**$}
\end{equation*}
The bounds $(*)$ and $(**)$ contradict each
other. Hence no two adjacent vertices
can both belong to $I_j$, and
therefore $I_j$ is an independent set.

The same argument also covers $t=t'$:
membership of $v$ in $J_t$ forces
$X_t(a_v)\ge9L_j/10$, while membership of its neighbor
$u$ in the same $J_t$ would
force $X_t(a_v)<9L_j/10$. This proves the claim.
\end{proofof}

\begin{proofof}{Proposition~\ref{prop:sound}.}
The optimizer receives reward only from harvesting
actions. In round \(t\), the reward identity
\eqref{eq:reward-identity} gives
$$ r_t=\sum_{v\in V}x_t(h_v)y_t(b_v). $$
We bound this reward by separating the
contribution of vertices that are not
near-leading from those that are near-leading.

First, consider a vertex \(v\) that is
not near-leading at time \(t\). Failure of
the near-leading condition means that some learner
action has a score exceeding \(S_t(b_v)\) by
more than \(\gamma_TT\). Since the learner is
\(\gamma_T\)-mean-based, we have \(y_t(b_v)\le\gamma_T\). Therefore, the total
reward contributed in round \(t\) by vertices
that are not near-leading is at most
$$ \sum_{\{v:\,v\text{ is not near-leading at }t\}} x_t(h_v)y_t(b_v) \le \gamma_T\sum_v x_t(h_v) \le \gamma_T. $$
Summing over all $T$ rounds gives a
total contribution of at most $\gamma_TT$. Hence
\begin{align}
  R(x,T) \le \gamma_TT + \sum_{t=0}^{T-1} \sum_{\{v:\,v\text{ is near-leading at }t\}} x_t(h_v).
  \label{eq:upper-bound-total-reward}
\end{align}
This reduces the problem to bounding how
much harvesting mass can be assigned to
near-leading vertices.

In the proof, we will repeatedly use
the following simple observation. Fix a vertex
\(v\), and consider any collection of rounds
on which $X_t(h_v)\le U$ for some \(U\).
Then the total harvesting mass of \(v\)
over those rounds is at most $U+1$.
To see this, focus on the last
selected round and call its index \(t^\star\).
All harvesting mass assigned to \(v\) on
earlier selected rounds is already included
in \(X_{t^\star}(h_v)\), while the current round
contributes $x_{t^\star}(h_v)\le 1$. Thus the total
selected harvesting mass is at most
$$ X_{t^\star}(h_v)+x_{t^\star}(h_v)\le U+1. $$
We next separate the near-leading rounds
according to the current setup level.

\paragraph{Small setup levels}
First, consider near-leading rounds satisfying
$$ \frac{X_t(p)}{N}<10(\gamma_TT+1). $$
For a near-leading vertex \(v\), the~\ref{eq:local-comparisons} give
\begin{align}
  X_t(h_v) \le \frac{X_t(p)}{2N}+\gamma_TT.
  \label{eq:local-harvesting-bound}
\end{align}
Therefore, throughout this range,
$$ X_t(h_v) < 5(\gamma_TT+1)+\gamma_TT = 6(\gamma_TT+1)-1 < 6(\gamma_TT+1).$$

Applying the preceding observation, the total harvesting
mass of a fixed vertex \(v\) over
all such rounds is at most $6(\gamma_TT+1).$
There are \(N\) vertices, so the total
contribution from all near-leading rounds with a
small setup level is at most
$6N(\gamma_TT+1).$ We may therefore treat the
contribution from rounds with small setup
levels as an additive error term.

\paragraph{Large setup levels}
It remains to consider near-leading rounds for
which
$$ \frac{X_t(p)}{N}\ge 10(\gamma_TT+1). $$
In particular, $\gamma_TT\le \frac{X_t(p)}{10N}.$ If \(v\) is
near-leading, the~\ref{eq:local-comparisons} give
$$ X_t(a_v) \ge \frac{X_t(p)}{N}+X_t(h_v)-\gamma_TT \ge \frac{X_t(p)}{N}-\gamma_TT \ge \frac{9X_t(p)}{10N}. $$
For every neighbor \(u\in N_G(v)\),
$$ X_t(a_u) \le \frac{X_t(p)}{2N}+\gamma_TT-X_t(h_v) \le \frac{X_t(p)}{2N}+\gamma_TT \le \frac{3X_t(p)}{5N}. $$
These are precisely the conditions defining \(J_t\).
Hence, at every sufficiently large setup level,
\begin{align*}
  v\text{ near-leading at time }t \quad\Longrightarrow\quad v\in J_t.
\end{align*}
Thus every vertex that can contribute to
the remaining reward is captured by the
candidate set \(J_t\).

Now fix one geometric setup interval $[L_j,3L_j/2).$
Recall that \(I_j\) collects the candidates from
all sets \(J_t\) that appear while
$$ L_j\le \frac{X_t(p)}{N}<\frac32L_j. $$
Lemma~\ref{lem:independent} shows that \(I_j\) is an independent
set. Therefore, throughout this interval, every near-leading
vertex that can contribute to the reward
belongs to the same independent set \(I_j\).

We next bound how much a single
vertex can contribute inside this interval.
Let \(v\in I_j\) be near-leading at
some round in the interval. The
local-alarm inequality in \eqref{eq:local-harvesting-bound} again gives
$$ X_t(h_v) \le \frac{X_t(p)}{2N}+\gamma_TT. $$
In the large-setup regime, $\gamma_TT\le \frac{X_t(p)}{10N}, $
and hence
$$ X_t(h_v) \le \frac{3X_t(p)}{5N}. $$
Because the setup level lies in \([L_j,3L_j/2)\),
$$ X_t(h_v) < \frac35\cdot\frac32L_j = \frac9{10}L_j. $$
Applying the earlier observation with \(U=9L_j/10\),
the total harvesting mass contributed by
\(v\) while the setup level remains
in this interval is at most
$$ \frac9{10}L_j+1<2L_j, $$
where we used \(L_j\ge1\).

Therefore, the total contribution from the \(j\)-th
setup interval is at most $ 2L_j|I_j|.
$ By construction, \(I_{\mathrm{alg}}\) is a largest
extracted set, so $|I_j|\le |I_{\mathrm{alg}}| $ for
every \(j\). Hence the contribution from interval
\(j\) is at most $2L_j|I_{\mathrm{alg}}|. $ It
remains only to sum over the geometric
intervals. Since $X_t(p)\le T, $ the
setup level never exceeds $ \frac{T}{N}.
$ Thus, if \(L_J\) is the
largest lower endpoint of a relevant
interval $[L_J,3L_J/2)$, then $L_J\le \frac{T}{N},$ and
$$ \sum_{j=0}^J L_j = \sum_{j=0}^J\left(\frac32\right)^j = \frac{(3/2)^{J+1}-1}{(3/2)-1} = 2\left(\frac32\right)^{J+1}-2 = 3L_J-2. $$
Consequently, the total contribution from all
large-setup near-leading rounds is at most
$$ \sum_j2L_j|I_{\mathrm{alg}}| \le 2|I_{\mathrm{alg}}| \sum_jL_j < 6|I_{\mathrm{alg}}|\frac{T}{N}. $$
Combining the three contributions gives
\begin{align*}
  R(x,T) \le \underbrace{
6|I_{\mathrm{alg}}|\frac{T}{N}
}_{\substack{\text{controlled by the}\\ \text{independent set}}}
\;+\;
\underbrace{
6N(\gamma_T T+1)\vphantom{\frac{T}{N}}
}_{\substack{\text{small-setup error}}}
\;+\;
\underbrace{
\gamma_T T\vphantom{\frac{T}{N}}
}_{\substack{\text{not-near-leading}\\ \text{mean-based leakage}}},
\end{align*}
as claimed.

Finally, suppose the trajectory is explicitly represented
and rational. We can compute every prefix
count \(X_t(p)\) and \(X_t(a_v)\) exactly, test the
rational inequalities defining \(J_t\), place each round
into its corresponding geometric setup interval, form
the sets \(I_j\), and return one of
maximum cardinality. There are only \(O(1+\log T)\)
relevant intervals, and all interval endpoints and
prefix counts have polynomial encoding length. Hence,
the extraction runs in polynomial time. The
procedure uses only the optimizer trajectory: it
does not evaluate the learner, compute the
learner's reward, or require knowledge of \(\gamma_T\).
\end{proofof}

\subsection{Horizon choice, error bounds, and proof of
the main theorem}
\label{app:horizon-bookkeeping}

Recall from Section~\ref{sec:hardness-transfer} the choice of the
block scale $k$, the setup length $q_0$,
and the horizon $T$; we verify here
that this choice is computable in polynomial
time, that it produces the scales in
\eqref{eq:reduction-scales}, and that the error terms of
Propositions~\ref{prop:complete} and~\ref{prop:sound} are sublinear. Since \(2/\zeta-1\) is
a fixed rational number, \(\lceil N^{2/\zeta-1}\rceil\) can
be computed exactly in polynomial time. The
horizon is polynomial in the source
size, and each payoff matrix has
$O(N^3)$ entries of $O(\log N)$ bits.
Both action counts are $O(N^2)=O(T^\zeta)$ and
hence at most $T^\beta$ for sufficiently
large instances. The assumption $\gamma_T=o(T^{-\zeta})$ implies
\begin{equation}
  N(\gamma_TT+1)=o(T),
  \qquad m\gamma_TT=o(T).
  \label{eq:reduction-errors}
\end{equation}
Consequently, Propositions~\ref{prop:complete} and~\ref{prop:sound} give
\[
  R_{\mathrm{pure}}^\star(T)\ge\frac{\operatorname{IS}(G)}{6N}T-o(T),
  \qquad
  R(x,T)\le6\frac{|I_{\mathrm{alg}}|}{N}T+o(T),
\]
uniformly over graph inputs and optimizer trajectories.
Now we are ready to prove the
main theorem.

\begin{proofof}{Theorem~\ref{thm:main}.}
We now choose the hardness constant. With
$r$ from~\eqref{eq:source-gap}, we take $c=1/(24r)$, which depends
only on the source gap. Suppose a
polynomial-time optimizer guarantees the claimed additive approximation.
Given a source graph, construct the game
of Section~\ref{sec:construction} with the horizon of Section~\ref{sec:hardness-transfer}.
If the optimizer takes a mean-based parameter bound $\bar\gamma_T$
as additional input, supply $\bar\gamma_T=1/\lceil T^\zeta\rceil$ as
in Remark~\ref{rem:computational-model}; this is valid for all
sufficiently large reduction horizons and computable in
polynomial time. The reward errors above still
use the learner's intrinsic parameter $\gamma_T$. Run
the optimizer and extract $I_{\mathrm{alg}}$ using Proposition~\ref{prop:sound}.
Completeness, the hypothetical guarantee, and soundness imply
\[
  \frac{\operatorname{IS}(G)}{6N}T-cT-o(T)
    \le R_{\mathrm{alg}}(T)
    \le6\frac{|I_{\mathrm{alg}}|}{N}T+o(T).
\]
Rearranging gives
\[
  |I_{\mathrm{alg}}|\ge\frac{\operatorname{IS}(G)}{36}-\frac c6 N-o(N).
\]
On a YES instance of \eqref{eq:source-gap}, this
yields
\[
  |I_{\mathrm{alg}}|\ge\frac{N}{48r}-o(N)>\frac{N}{64r}
\]
for all sufficiently large instances. On a
NO instance, every independent set has size
at most $N/(128r)$. Comparing the extracted size
with the rational threshold $N/(64r)$ therefore decides
the NP-hard promise problem in polynomial time;
the finitely many smaller source sizes
can be handled separately. No step
evaluates the learner or the trajectory
reward. This contradiction proves the theorem.
\end{proofof}

% !TEX root = ../main.tex
\section{Setwise mean-based learners}
\label{app:learner-classes}

The hardness theorem uses a single parameter
to control both the score gap that
makes an action inferior and the probability
assigned to that action. For separable \ac{FTRL},
we can instead bound the total probability
of an inferior set directly. This separates
the score threshold from the probability error
and makes the learning-rate conditions easier to
derive. We first show how the
reduction uses this setwise guarantee, then
obtain it from the regularizer's derivative.

\subsection{What the reduction needs: setwise mean-based learners}
\label{app:setwise-extension}

The reduction needs little mass on
inferior actions, regardless of how that
mass is distributed within the inferior
set. To state this property, call
a learner $(\gamma_T,\rho_T)$-\emph{setwise mean-based} if, for
every $\mathcal S\subseteq\mathcal{B}$ and $b^\star\notin\mathcal S$,
\[
  S_t(b^\star)-S_t(b)>\gamma_TT
  \quad\text{for every }b\in\mathcal S
  \qquad\Longrightarrow\qquad
  y_t(\mathcal S):=\sum_{b\in\mathcal S}y_t(b)\le\rho_T.
\]
Here $\gamma_TT$ is the required score gap,
and $\rho_T$ bounds the total probability that
may remain on inferior actions. No
other restriction is placed on the
learner's dependence on the ordered payoff
history or its fixed internal memory.

\begin{corollary}[Setwise mean-based learners]
\label{cor:setwise-hardness}
The conclusion of Theorem~\ref{thm:main} also holds for
any fixed deterministic $(\gamma_T,\rho_T)$-setwise mean-based learner family
with
\[
  \gamma_TT=o(T^{1-\zeta}),
  \qquad \rho_T=o(1),
\]
using the same game family and the
same absolute constant $c>0$.
\end{corollary}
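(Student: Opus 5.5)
The plan is to rerun the reduction of Section~\ref{sec:hardness} unchanged: same game, same horizon choice from \eqref{eq:reduction-scales}, same extraction rule, same constant $c=1/(24r)$. Only the two places where the pointwise mean-based bound $y_t(b)\le\gamma_T$ enters need replacing, namely the leakage estimate in Proposition~\ref{prop:complete} and the non-near-leading estimate in Proposition~\ref{prop:sound}. In each we apply the setwise hypothesis to a suitable inferior set $\mathcal S$. Lemma~\ref{lem:independent} and the extraction rule involve only optimizer counts, so they carry over verbatim.

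\emph{Completeness.} On a harvesting round of $v$ with $S_t(b_v)>\gamma_TT$, the proof of Proposition~\ref{prop:complete} shows every $b\ne b_v$ has $S_t(b)\le0$. Hence $S_t(b_v)-S_t(b)>\gamma_TT$ for all $b\in\mathcal S:=\mathcal B\setminus\{b_v\}$. The setwise property gives $y_t(\mathcal S)\le\rho_T$, so the round earns at least $1-\rho_T$. The bound therefore becomes $R(\pi,T)\ge|I|k/2-|I|(\gamma_TT+1)-\rho_TT$. Because $\rho_T=o(1)$, the term $\rho_TT$ is $o(T)$. This is in fact better than the original $m\gamma_TT$ term, since there is no factor $m$.

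\emph{Soundness.} This is the main obstacle, because the original argument bounds each non-near-leading $y_t(b_v)$ separately. Fix $t$ and let $b^\star\in\arg\max_b S_t(b)$. Take $\mathcal S_t$ to be the set of all learner actions whose score is below $S_t(b^\star)-\gamma_TT$; this set contains every $b_v$ with $v$ not near-leading. If $b^\star$ is itself some $b_v$, then that $v$ is near-leading, so $b^\star\notin\mathcal S_t$ in every case. The setwise property then gives $y_t(\mathcal S_t)\le\rho_T$, and so
\[
  \sum_{v\text{ not near-leading}}x_t(h_v)y_t(b_v)\le\sum_{v\text{ not near-leading}}y_t(b_v)\le y_t(\mathcal S_t)\le\rho_T .
\]
Summed over all rounds, this contributes at most $\rho_TT$ in place of $\gamma_TT$.

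The near-leading part of the soundness proof needs no change. The \ref{eq:local-comparisons} follow from the definition of near-leading and \eqref{eq:score-identities} alone, with the same threshold $\gamma_TT$. The small- and large-setup bounds are therefore unchanged, giving $R(x,T)\le6|I_{\mathrm{alg}}|T/N+6N(\gamma_TT+1)+\rho_TT$.

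\emph{Bookkeeping.} As in Appendix~\ref{app:horizon-bookkeeping}, $\gamma_TT=o(T^{1-\zeta})$ and $N=\Theta(T^{\zeta/2})$ give $N(\gamma_TT+1)=o(T)$, and $\rho_TT=o(T)$. So $R_{\mathrm{pure}}^\star(T)\ge\operatorname{IS}(G)T/(6N)-o(T)$ and $R(x,T)\le6|I_{\mathrm{alg}}|T/N+o(T)$ hold uniformly. The proof of Theorem~\ref{thm:main} then goes through word for word with $c=1/(24r)$. If the optimizer also takes parameter bounds, supply $\bar\gamma_T=1/\lceil T^\zeta\rceil$ as in Remark~\ref{rem:computational-model}. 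The reduction never evaluates $\rho_T$, since extraction uses only optimizer counts.
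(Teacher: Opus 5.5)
Your proposal is correct and follows the paper's own proof. You keep the score comparisons, the extraction rule, and the horizon bookkeeping unchanged. You replace the per-round probability error in Propositions~\ref{prop:complete} and~\ref{prop:sound} by $\rho_T$, which yields exactly the paper's bounds $R(\pi,T)\ge|I|k/2-|I|(\gamma_TT+1)-\rho_TT$ and $R(x,T)\le6|I_{\mathrm{alg}}|T/N+6N(\gamma_TT+1)+\rho_TT$. Your explicit inferior sets are $\mathcal B\setminus\{b_v\}$ in completeness and all actions more than $\gamma_TT$ below a top-scoring $b^\star$ in soundness; these just spell out what the paper states in one line each.
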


\begin{proof}
The score comparisons and independent-set extraction remain
unchanged. We only replace the probability error
in the completeness and soundness reward
bounds by the setwise bound $\rho_T$.

During each profitable harvesting round of Proposition~\ref{prop:complete},
the rewarding action is more than $\gamma_TT$
ahead of every other action. The setwise mean-based condition
therefore gives $y_t(b_v)\ge1-\rho_T$. In the soundness argument
of Proposition~\ref{prop:sound}, a highest-scoring action is more
than $\gamma_TT$ ahead of every rewarding action
that is not near-leading. Their total
probability, and hence their total reward
contribution, is at most $\rho_T$ in
each round. The resulting bounds are
\begin{align*}
  R(\pi,T)&\ge |I|\frac{k}{2}-|I|(\gamma_TT+1)-\rho_TT,\\
  R(x,T)&\le6|I_{\mathrm{alg}}|\frac{T}{N}+6N(\gamma_TT+1)+\rho_TT.
\end{align*}
Both error terms are sublinear at the
reduction horizons. Indeed, the choice in
\eqref{eq:reduction-scales} gives $N(\gamma_TT+1)=o(T)$, while $\rho_TT=o(T)$ by
assumption. The hardness transfer in Section~\ref{sec:hardness-transfer}
therefore applies with the same $c=1/(24r)$.
\end{proof}

\paragraph{Coordinatewise mean-based learners.}
The original mean-based assumption implies a setwise
guarantee by summing the coordinatewise probability bounds.
Suppose the learner is coordinatewise $\bar\gamma_T$-mean-based: whenever
\[
  S_t(b^\star)-S_t(b)>\bar\gamma_TT,
\]
it assigns $y_t(b)\le\bar\gamma_T$. If one action
$b^\star\notin\mathcal S$ is more than $\bar\gamma_TT$
ahead of every member of $\mathcal
S\subseteq\mathcal{B}$, summing over the set gives
\begin{equation}
  y_t(\mathcal S)
  =\sum_{b\in\mathcal S}y_t(b)
  \le |\mathcal S|\bar\gamma_T
  \le m\bar\gamma_T.
  \label{eq:app-coordinatewise-to-setwise}
\end{equation}
Thus a coordinatewise $\bar\gamma_T$-mean-based learner is $(\bar\gamma_T,\rho_T)$-setwise
mean-based with
\begin{equation*}
  \rho_T:=m\bar\gamma_T.
\end{equation*}
For the constructed family, $m=O(T^\zeta)$, so $\bar\gamma_TT=o(T^{1-\zeta})$
already implies $m\bar\gamma_T=o(1)$. Thus the main
theorem's mean-based assumption also supplies vanishing
setwise error along the reduction horizons.

\subsection{Setwise guarantees for separable \acs{FTRL}}
We now derive a setwise guarantee for
the separable \ac{FTRL} model from Section~\ref{sec:ftrl-preliminaries}. If
an inferior set carries too much mass,
transferring mass from one of its coordinates
to a higher-scoring action improves the linear
part of the \ac{FTRL} objective. A sufficiently
large score gap makes this improvement
exceed the regularization cost, contradicting optimality.

The derivative of the regularizer determines the
score gap needed for this comparison. For
$0<\rho<1/2$, we express this threshold as
\begin{equation}
  G_h(\rho,m;\eta_T)
  :=\frac{
      \theta'(1-\rho)-\theta'(\rho/m)
    }{\eta_T}.
  \label{eq:app-ftrl-gap}
\end{equation}
The next proposition shows that this threshold
forces the total mass on any inferior
set to be at most $\rho$.

\begin{proposition}[Setwise mean-based guarantees for separable \acs{FTRL}]
\label{prop:app-ftrl-setwise}
Fix $0<\rho<1/2$. Let $\mathcal S\subseteq\mathcal{B}$ and $b^\star\notin\mathcal
S$. If
\begin{equation}
  S_t(b^\star)-S_t(b)
  >G_h(\rho,m;\eta_T)
  \qquad\text{for every }b\in\mathcal S,
  \label{eq:app-ftrl-gap-premise}
\end{equation}
then the \ac{FTRL} response satisfies
\begin{equation*}
  y_t(\mathcal S)\le\rho.
\end{equation*}
\end{proposition}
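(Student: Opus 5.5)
Argue by contradiction, using the KKT conditions of the strictly concave problem defining $Q_h$. Write $y:=y_t=Q_h(\eta_TS_t)$ and $z:=\eta_TS_t$. Assume $y(\mathcal S)>\rho$ and derive a contradiction with the gap premise~\eqref{eq:app-ftrl-gap-premise}.

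First, pick a coordinate of $\mathcal S$ carrying a lot of mass. Since $|\mathcal S|\le m-1<m$, some $b\in\mathcal S$ has $y(b)>\rho/|\mathcal S|\ge\rho/m$. In particular $y(b)>0$.

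Next, bound the mass on $b^\star$. Since $b^\star\notin\mathcal S$, we have $y(b^\star)\le 1-y(\mathcal S)<1-\rho$.

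Now perturb $y$ by moving a small amount $\epsilon>0$ of mass from $b$ to $b^\star$. This keeps $y$ in the simplex because $y(b)>0$. The one-sided directional derivative of the objective $\langle z,y\rangle-h(y)=\sum_c\bigl(z_c y_c-\theta(y_c)\bigr)$ in this direction is
\[
  z_{b^\star}-z_b-\theta'(y(b^\star))+\theta'(y(b)).
\]
Because $\theta''>0$ on $(0,1)$, $\theta'$ is strictly increasing, so $\theta'(y(b^\star))\le\theta'(1-\rho)$ and $\theta'(y(b))\ge\theta'(\rho/m)$. The premise gives $z_{b^\star}-z_b=\eta_T\bigl(S_t(b^\star)-S_t(b)\bigr)>\theta'(1-\rho)-\theta'(\rho/m)$. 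Hence the directional derivative is strictly positive, which contradicts optimality of $y$. Therefore $y(\mathcal S)\le\rho$.

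The main obstacle is the boundary. The kernel $\theta$ is only $C^2$ on the open interval $(0,1)$, and $\theta'(0^+)$ may equal $-\infty$, as for negative entropy. Two facts keep the argument safe. First, $y(b)\in(\rho/m,1)$ lies in the interior, so $\theta'(y(b))$ is finite. If $y(b)=1$, then $y(b^\star)=0$ and we use the same monotone bounds. Second, $y(b^\star)$ may be $0$. In that case $\theta'(y(b^\star))$ should be read as the right derivative $\theta'(0^+)\in[-\infty,\infty)$, which only helps. When this derivative is $-\infty$, the directional derivative is $+\infty$ and the contradiction is immediate.

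To make this rigorous without differentiating at endpoints, I would compare objective values directly. By the mean value theorem on $(0,1)$, for small $\epsilon$ the change in $\theta(y(b^\star))+\theta(y(b))$ is $\epsilon\bigl(\theta'(\xi_1)-\theta'(\xi_2)\bigr)$ with $\xi_1\in(y(b^\star),y(b^\star)+\epsilon)\subset(0,1-\rho)$ and $\xi_2\in(y(b)-\epsilon,y(b))\subset(\rho/m,1)$ once $\epsilon$ is small. Continuity of $\theta$ at the endpoints justifies this step. Monotonicity of $\theta'$ then gives an objective increase of at least $\epsilon\bigl(z_{b^\star}-z_b-\theta'(1-\rho)+\theta'(\rho/m)\bigr)>0$.
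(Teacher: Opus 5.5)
Your proposal is correct and follows the paper's own argument. Like the paper, you pick $b\in\mathcal S$ with $y(b)>\rho/m$, note that $y(b^\star)<1-\rho$, shift a small amount of mass from $b$ to $b^\star$, and use monotonicity of $\theta'$ to contradict optimality of the \ac{FTRL} response. Your mean-value-theorem comparison of objective values is a slightly more careful version of the paper's remark about letting the perturbation size go to zero, and it also covers endpoint derivatives that are infinite.
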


\begin{proof}
Suppose, toward a contradiction, that $y_t(\mathcal S)>\rho$,
and let $r:=|\mathcal S|$. We first identify
a coordinate from which to transfer mass.
At least one $b\in\mathcal S$ must satisfy
\begin{equation}
  y_t(b)>\frac{\rho}{r}\ge\frac{\rho}{m},
  \label{eq:app-ftrl-large-bad-coordinate}
\end{equation}
while the fact that $b^\star\notin\mathcal S$ gives
\begin{equation}
  y_t(b^\star)<1-\rho.
  \label{eq:app-ftrl-benchmark-mass}
\end{equation}
Since $y_t(b)>0$, moving a sufficiently small amount
of mass from $b$ to $b^\star$
is feasible. Optimality in \eqref{eq:app-ftrl-response} requires
the resulting linear gain to be
at most the marginal regularization cost.
The one-sided directional derivative therefore gives
\begin{equation*}
  \eta_T\bigl(S_t(b^\star)-S_t(b)\bigr)
  \le
  \theta'\bigl(y_t(b^\star)\bigr)
  -\theta'\bigl(y_t(b)\bigr).
\end{equation*}
Equivalently, compare the objective at $y_t$
and $y_t+\delta(e_{b^\star}-e_b)$ and let $\delta\downarrow0$. This
also handles finite endpoint derivatives. The
coordinate bounds in \eqref{eq:app-ftrl-large-bad-coordinate} and~\eqref{eq:app-ftrl-benchmark-mass}, together
with the increasing derivative $\theta'$, imply
\begin{equation*}
\begin{aligned}
  \eta_T\bigl(S_t(b^\star)-S_t(b)\bigr)
  &<\theta'(1-\rho)-\theta'(\rho/m).
\end{aligned}
\end{equation*}
This contradicts \eqref{eq:app-ftrl-gap-premise} and proves the claim.
\end{proof}

To apply the hardness theorem, we need
both a vanishing probability error and
a sufficiently small score threshold. Choosing
a sequence $\rho_T\downarrow0$ makes separable \ac{FTRL}
$(\gamma_T,\rho_T)$-setwise mean-based at the gap scale
\begin{equation*}
  \gamma_TT
  :=G_h(\rho_T,m;\eta_T).
\end{equation*}
Hence Corollary~\ref{cor:setwise-hardness} applies whenever
\begin{equation}
  G_h(\rho_T,m;\eta_T)=o(T^{1-\zeta}).
  \label{eq:app-ftrl-master-rate}
\end{equation}
We now evaluate this condition for the
standard regularizers. The dependence of the derivative
gap on $m$ and $\rho_T$ determines how
large the learning rate must be.

\paragraph{Negative entropy and Euclidean regularization.}
To compare the regularizers under the same
learning-rate regime, we use the polynomial schedule
\begin{equation*}
  \eta_T=T^{-(1-\alpha)},
  \qquad
  \alpha\in(0,1],
\end{equation*}
and choose $\rho_T=T^{-\nu}$ for an arbitrary fixed
$\nu>0$. For negative entropy,
\[
  \theta(s)=s\log s,
  \qquad
  \theta'(s)=1+\log s,
\]
so
\begin{equation*}
  G_h(\rho_T,m;\eta_T)
  =\frac1{\eta_T}
   \log\!\left(\frac{(1-\rho_T)m}{\rho_T}\right).
\end{equation*}
The action count $m$ is polynomial
in $T$, so this gap is
$O(T^{1-\alpha}\log T)$. It satisfies \eqref{eq:app-ftrl-master-rate} whenever
\begin{equation*}
  \zeta<\alpha.
\end{equation*}
For the Euclidean regularizer
\[
  \theta(s)=\frac12s^2,
  \qquad
  \theta'(s)=s,
\]
we have
\begin{equation*}
  G_h(\rho_T,m;\eta_T)
  =\frac{(1-\rho_T)-\rho_T/m}{\eta_T}
  \le T^{1-\alpha},
\end{equation*}
so $\zeta<\alpha$ suffices for both negative entropy
and Euclidean \ac{FTRL} under this learning-rate schedule.

\paragraph{Negative Tsallis entropy.}
For a fixed $q\in(0,1)$, the convex negative-Tsallis
regularizer has
\begin{equation*}
  \theta_q(s):=-\frac{s^q}{q(1-q)},
  \qquad
  \theta_q'(s)=-\frac{s^{q-1}}{1-q}.
\end{equation*}
Then
\begin{equation*}
\begin{aligned}
  G_h(\rho_T,m;\eta_T)
  &=\frac{1}{(1-q)\eta_T}
    \left[
      \left(\frac{m}{\rho_T}\right)^{1-q}
      -(1-\rho_T)^{-(1-q)}
    \right]
  \notag\\
  &=O\!\left(
      \eta_T^{-1}
      \left(\frac{m}{\rho_T}\right)^{1-q}
    \right).
\end{aligned}
\end{equation*}
For negative Tsallis entropy, the gap grows
as a power of $m/\rho_T$. The action
count therefore enters the rate condition
polynomially. With $m\le T^\beta$ and $\rho_T=T^{-\nu}$,
\begin{equation*}
  G_h(\rho_T,m;\eta_T)
  =O\!\left(
    T^{1-\alpha+(\beta+\nu)(1-q)}
  \right).
\end{equation*}
The gap is $o(T^{1-\zeta})$ provided
\begin{equation}
  \zeta+(\beta+\nu)(1-q)<\alpha.
  \label{eq:app-tsallis-condition-nu}
\end{equation}
A sufficiently small $\nu>0$ satisfies this condition
whenever $\zeta+\beta(1-q)<\alpha$.

The constructed family allows a sharper
estimate. We can use the actual
action count $m=2N+|\mathcal{E}|+1=O(T^\zeta)$ instead of $m\le
T^\beta$. The sufficient condition then becomes
\begin{equation*}
  \zeta+\zeta(1-q)<\alpha,
  \qquad\text{equivalently}\qquad
  (2-q)\zeta<\alpha.
\end{equation*}

\paragraph{Extension to log-barrier regularization.}
The same mean-based argument also applies to the log barrier, which
lies outside Definition~\ref{def:separable-regularizer} because it is
infinite at the simplex boundary. For
\begin{equation*}
  \theta(s)=-\log s,
  \qquad
  \theta'(s)=-\frac1s,
\end{equation*}
the \ac{FTRL} response is interior, so the feasible-transfer argument
in Proposition~\ref{prop:app-ftrl-setwise} remains valid.
The derivative gives a gap linear in
$m/\rho_T$:
\begin{equation*}
\begin{aligned}
  G_h(\rho_T,m;\eta_T)
  &=\frac1{\eta_T}
    \left(
      \frac{m}{\rho_T}-\frac1{1-\rho_T}
    \right)
  =O\!\left(\frac{m}{\eta_T\rho_T}\right).
\end{aligned}
\end{equation*}
With $m\le T^\beta$ and $\rho_T=T^{-\nu}$,
\begin{equation*}
  G_h(\rho_T,m;\eta_T)
  =O\!\left(T^{1-\alpha+\beta+\nu}\right),
\end{equation*}
so choosing $\nu>0$ sufficiently small ensures \eqref{eq:app-ftrl-master-rate}
whenever $\zeta+\beta<\alpha$. For the actual action
count $m=O(T^\zeta)$, the sufficient condition improves
to $2\zeta<\alpha$. Thus the stronger divergence
of the log-barrier derivative near zero
imposes a more restrictive learning-rate condition
than negative entropy or Euclidean regularization.

% !TEX root = ../main.tex
\section{Omitted proofs for Section~\ref{app:zero-sum}}
\label{app:zero-sum-proofs}

Throughout this appendix, we use the zero-sum
model of Section~\ref{app:zero-sum}: $B=-A$, optimizer payoffs lie
in $[-1,1]$, the score update is $S_{t+1}=S_t-A^{\mathsf
T}x_t$ from $S_0=0$, and a pure trajectory
$\pi$ has $\pi_t=e_{\ell_t}$ for every $t$. From
Appendix~\ref{app:proof-passive} on, the learner uses \ac{FTRL}
with a separable regularizer $h$ as in Definition~\ref{def:separable-regularizer}.
The potential $\Phi_{\eta_T}$, the Bregman
divergence $D_{h^*}$, and the continuous-time reward $R_{\mathrm{cont}}(x^\star,T)$
are those of Section~\ref{sec:potential-decomposition}.
For the active results, we also impose the strong-convexity,
non-identification, and step-size assumptions of
Theorem~\ref{thm:app-pure-active}.

Appendix~\ref{app:proof-best-response} proves Proposition~\ref{prop:app-pure-best-response} and describes the polynomial-time
offline construction of the myopic trajectory. Appendix~\ref{app:proof-passive}
proves Lemma~\ref{lem:app-terminal-rounding}, Theorem~\ref{thm:app-pure-passive}, and Corollary~\ref{cor:pure-passive-inverse-rate}, and records
the terminal-rounding algorithm. Appendix~\ref{app:proof-active} proves Lemma~\ref{lem:app-cumulative-rounding} and
Theorem~\ref{thm:app-pure-active}, together with the derandomization, shadowing, and
one-step-bonus lemmas on which the proof rests,
and derives the asymptotic $\Omega(\eta_TT)$ surplus. Appendix~\ref{app:algorithmic}
explains how to carry out the construction
in polynomial time, and Appendix~\ref{app:random-games} applies
the active-surplus theorem to random games.

\subsection{Proof of Proposition~\ref{prop:app-pure-best-response} and the offline construction}
\label{app:proof-best-response}

A pure best response secures the game
value in each round, while the learner's
regret bounds the reward of every optimizer
trajectory. These two comparisons give the following
guarantee. We then explain how to
construct the trajectory by approximate simulation.

\begin{proposition}[Pure myopic best response]
\label{prop:app-pure-best-response}
At each round $t\in\{0,\ldots,T-1\}$, let $y_t$ be
the learner's response to the preceding optimizer
actions, and choose
\[
    \ell_t\in\operatorname*{arg\,max}_{\ell\in\mathcal{A}}e_\ell^{\mathsf T}Ay_t,
    \qquad \pi_t:=e_{\ell_t},
\]
with ties broken by the fixed action
order. The resulting pure trajectory $\pi_{\mathrm{BR}}=(e_{\ell_0},\ldots,e_{\ell_{T-1}})$ satisfies
\[
    R(\pi_{\mathrm{BR}},T)\ge T\operatorname{Val}(A)
\]
and
\[
    R(\pi_{\mathrm{BR}},T)
    \ge R_{\mathrm{mix}}^\star(T)
        -\operatorname{Regret}_{\mathsf L}(T)
    \ge R_{\mathrm{pure}}^\star(T)
        -\operatorname{Regret}_{\mathsf L}(T).
\]
\end{proposition}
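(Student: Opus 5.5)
The argument uses two comparisons with the game value: a per-round lower bound for $\pi_{\mathrm{BR}}$, and a trajectory-wide upper bound for every mixed trajectory.

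\emph{Lower bound.} Fix a round $t$ and let $y_t$ be the learner's response to the preceding actions of $\pi_{\mathrm{BR}}$. By the choice of $\ell_t$, the pure best response is at least as good as any mixed strategy against $y_t$:
\[
e_{\ell_t}^{\mathsf T}Ay_t=\max_{u\in\Delta(\mathcal A)}u^{\mathsf T}Ay_t .
\]
The right-hand side is linear in $u$, so its maximum over the simplex is attained at a vertex. By the minimax theorem, $\max_u u^{\mathsf T}Ay_t\ge\min_v\max_u u^{\mathsf T}Av=\operatorname{Val}(A)$. Summing over $t<T$ gives $R(\pi_{\mathrm{BR}},T)\ge T\operatorname{Val}(A)$. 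This step uses nothing about the learner; in particular it does not need $y_t$ to be well behaved.

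\emph{Upper bound.} Take an arbitrary mixed trajectory $x$ with induced responses $y_t$. Since $B=-A$, the regret inequality from Section~\ref{sec:model-result} reads
\[
\max_b\sum_{t<T}x_t^{\mathsf T}(-A)(e_b-y_t)\le\operatorname{Regret}_{\mathsf L}(T).
\]
Rearranged, this says that for every $b$,
\[
R(x,T)=\sum_t x_t^{\mathsf T}Ay_t\le\sum_t x_t^{\mathsf T}Ae_b+\operatorname{Regret}_{\mathsf L}(T).
\]
Minimize the right-hand side over $b$. The minimum over pure $b$ equals the minimum over $v\in\Delta(\mathcal B)$ by linearity, so
\[
R(x,T)\le T\min_v\bar x^{\mathsf T}Av+\operatorname{Regret}_{\mathsf L}(T),
\qquad \bar x:=\tfrac1T\sum_t x_t\in\Delta(\mathcal A).
\]
By the definition of the value, $\min_v\bar x^{\mathsf T}Av\le\operatorname{Val}(A)$. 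Hence $R(x,T)\le T\operatorname{Val}(A)+\operatorname{Regret}_{\mathsf L}(T)$ for every $x\in\mathcal X_T$. Taking the supremum gives $R_{\mathrm{mix}}^\star(T)\le T\operatorname{Val}(A)+\operatorname{Regret}_{\mathsf L}(T)$. No attainment of the supremum is needed, so possible discontinuity of the learning rule causes no trouble.

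\emph{Combining.} The two bounds give
\[
R(\pi_{\mathrm{BR}},T)\ge T\operatorname{Val}(A)\ge R_{\mathrm{mix}}^\star(T)-\operatorname{Regret}_{\mathsf L}(T).
\]
The last inequality in the statement then follows from $R_{\mathrm{mix}}^\star\ge R_{\mathrm{pure}}^\star$, which holds because $\Pi_T\subseteq\mathcal X_T$.

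The argument is short. The one point that needs care is applying the regret bound uniformly: it must hold for every trajectory $x$ with the responses that $x$ itself induces. This is exactly how $\operatorname{Regret}_{\mathsf L}(T)$ is defined, so no additional assumption enters.
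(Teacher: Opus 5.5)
Your proof is correct and is essentially the paper's argument. You use the minimax theorem to give each pure best response at least $\operatorname{Val}(A)$ per round, and the regret bound applied to the average action $\bar x\in\Delta(\mathcal A)$ to cap every trajectory at $T\operatorname{Val}(A)+\operatorname{Regret}_{\mathsf L}(T)$. Taking the supremum and using $R_{\mathrm{pure}}^\star(T)\le R_{\mathrm{mix}}^\star(T)$ then completes it, exactly as in the paper.
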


\begin{proof}
We compare the trajectories through the game
value. For the myopic trajectory, the
minimax theorem gives, at every round,
\[
    \max_{\ell\in\mathcal{A}}e_\ell^{\mathsf T}Ay_t
    =\max_{u\in\Delta(\mathcal{A})}u^{\mathsf T}Ay_t
    \ge\min_{y\in\Delta(\mathcal{B})}
        \max_{u\in\Delta(\mathcal{A})}u^{\mathsf T}Ay
    =\operatorname{Val}(A).
\]
Thus the chosen pure action earns at
least $\operatorname{Val}(A)$, and summing over rounds gives
$R(\pi_{\mathrm{BR}},T)\ge T\operatorname{Val}(A)$.

For the upper bound, fix any optimizer
trajectory $x=(x_0,\ldots,x_{T-1})$, pure or mixed. The regret
bound gives
\begin{align*}
    R(x,T)
    &\le\min_{y\in\Delta(\mathcal{B})}
        \sum_{t=0}^{T-1}x_t^{\mathsf T}Ay
        +\operatorname{Regret}_{\mathsf L}(T)=T\min_{y\in\Delta(\mathcal{B})}
        \left(\frac1T\sum_{t=0}^{T-1}x_t\right)^{\mathsf T}Ay
        +\operatorname{Regret}_{\mathsf L}(T) \\
    &\le T\operatorname{Val}(A)
        +\operatorname{Regret}_{\mathsf L}(T),
\end{align*}
where the last inequality holds because the
average optimizer action is in $\Delta(\mathcal{A})$. Taking
the supremum over mixed trajectories yields
\[
    R_{\mathrm{mix}}^\star(T) = \sup_{x} R(x,T)
    \le T\operatorname{Val}(A)+\operatorname{Regret}_{\mathsf L}(T)
    \le R(\pi_{\mathrm{BR}},T)+\operatorname{Regret}_{\mathsf L}(T).
\]
Rearranging proves the mixed-benchmark guarantee. The
pure-benchmark bound follows from $R_{\mathrm{pure}}^\star(T)\le R_{\mathrm{mix}}^\star(T)$.
\end{proof}

\paragraph{Polynomial-time offline construction.}
Although Proposition~\ref{prop:app-pure-best-response} describes the optimizer sequentially, the
learner's response is determined by the preceding
history. Thus an efficiently simulatable learner allows
the optimizer to construct the entire trajectory
offline. Specifically, suppose that, given the rational
game description, the preceding pure optimizer actions,
and a rational precision parameter $\varepsilon\in(0,1)$, a
rational approximation to $y_t$ with $\ell_1$ error
at most $\varepsilon$ can be computed in
time polynomial in the input encoding length,
$T$, and $\log(1/\varepsilon)$. Starting from the learner's
initial state, the optimizer approximates its response,
selects a best row against that approximation,
and appends the chosen pure action to
the history. Repeating this procedure for $T$
rounds produces an explicit pure trajectory. Each
simulation uses the actual history of chosen
actions, so its error is measured against
the learner response induced by that history.

To quantify the loss from approximating $y_t$,
let $\pi=(e_{\ell_0},\ldots,e_{\ell_{T-1}})$ be the pure trajectory produced
by the approximate simulation. If each chosen
action satisfies
\[
    e_{\ell_t}^{\mathsf T}Ay_t
    \ge\max_{\ell\in\mathcal{A}}e_\ell^{\mathsf T}Ay_t-\varepsilon_t,
\]
then the same proof gives
\[
    R(\pi,T)
    \ge R_{\mathrm{mix}}^\star(T)
        -\operatorname{Regret}_{\mathsf L}(T)
        -\sum_{t<T}\varepsilon_t.
\]
Because the entries of $A$ lie in
$[-1,1]$, an $\ell_1$ approximation error of at
most $\varepsilon_t/2$ changes every row payoff by
at most $\varepsilon_t/2$. A best row against
the approximate response therefore satisfies the required
inequality. Comparing all rows takes $O(nm)$ arithmetic
operations per round. For example, choosing $\varepsilon_t=1/T$
gives total approximation loss at most
one and requires only polynomial-time simulation.
Consequently, if $\operatorname{Regret}_{\mathsf L}(T)=o(T)$, this procedure
achieves an $o(T)$ additive approximation to
the optimal mixed trajectory, and hence
also to the optimal pure trajectory.

\subsection{Proofs for passive extraction}
\label{app:proof-passive}

For passive extraction, it suffices to approximate
the terminal score of the fixed max-min
strategy $x^\star$ of Section~\ref{sec:potential-decomposition}. Intermediate prefixes need
not be controlled: the potential term depends
only on $S_T$, and the Bregman
terms in \eqref{eq:app-disc-cont-used} are nonnegative. The
following rounding lemma makes this approximation
explicit by keeping each row's total
number of plays between the floor
and ceiling of its target count.

\begin{lemma}[Terminally balanced pure implementation]
\label{lem:app-terminal-rounding}
For every $x^\star\in\Delta(\mathcal{A})$ and every horizon $T$,
there exist nonnegative integers $(q_\ell)_{\ell\in\mathcal{A}}$ satisfying
\[
    \sum_{\ell\in\mathcal{A}} q_\ell=T
\]
and
\[
    \lfloor Tx_\ell^\star\rfloor
    \le q_\ell
    \le
    \lceil Tx_\ell^\star\rceil
    \qquad
    \text{for every }\ell\in\mathcal{A}.
\]
Moreover, $q_\ell=0$ whenever $x_\ell^\star=0$. Consequently, there exists
a pure trajectory $\pi$ with $\pi_t=e_{\ell_t}$ in
which row $\ell$ appears exactly $q_\ell$ times,
and
\[
    \sum_{\ell\in\operatorname{supp}(x^\star)}
    \bigl|q_\ell-Tx_\ell^\star\bigr|
    \le
    |\operatorname{supp}(x^\star)|.
\]
The counts, and hence such a sequence,
can be computed in time polynomial in
$T$ and $n$.
\end{lemma}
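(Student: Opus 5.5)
We will construct the counts by a standard largest-remainder rounding. Start from the floors $f_\ell:=\lfloor Tx_\ell^\star\rfloor$ and the fractional parts $\phi_\ell:=Tx_\ell^\star-f_\ell\in[0,1)$. Since $\sum_\ell Tx_\ell^\star=T$ is an integer and $\sum_\ell f_\ell$ is an integer, the deficit $D:=T-\sum_\ell f_\ell=\sum_\ell\phi_\ell$ is a nonnegative integer. It satisfies $D<|\{\ell:\phi_\ell>0\}|$ whenever that set is nonempty, and $D=0$ otherwise.

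We then choose any $D$ indices with $\phi_\ell>0$, say those with the largest fractional parts under the fixed action order. For these we set $q_\ell:=f_\ell+1=\lceil Tx_\ell^\star\rceil$, and for all other indices we set $q_\ell:=f_\ell$. By construction $\sum_\ell q_\ell=T$ and $\lfloor Tx_\ell^\star\rfloor\le q_\ell\le\lceil Tx_\ell^\star\rceil$. If $x_\ell^\star=0$, then $f_\ell=0$ and $\phi_\ell=0$, so $\ell$ is never incremented and $q_\ell=0$. This is the only point needing care: incrementing only indices with positive fractional part guarantees both that $q_\ell\le\lceil Tx_\ell^\star\rceil$ and that zero-weight rows stay at zero. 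There are enough such indices because $D\le\sum_{\phi_\ell>0}\phi_\ell<|\{\phi_\ell>0\}|$.

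For the discrepancy bound, every $\ell$ has $|q_\ell-Tx_\ell^\star|\le 1$, since $q_\ell$ lies between the floor and the ceiling of $Tx_\ell^\star$, which differ by at most one. Rows outside the support contribute zero. Summing over $\operatorname{supp}(x^\star)$ therefore gives at most $|\operatorname{supp}(x^\star)|$.

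The pure trajectory is any sequence listing row $\ell$ exactly $q_\ell$ times, for instance in consecutive blocks in the fixed action order. Its length is $\sum_\ell q_\ell=T$.

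For running time, we treat $x^\star$ as rational input. Computing each $Tx_\ell^\star$, its floor and its fractional part takes polynomial time in the encoding length. Sorting the $n$ fractional parts costs $O(n\log n)$ comparisons of rationals, and writing the length-$T$ sequence takes $O(T)$ output steps. The whole construction is therefore polynomial in $T$, $n$, and the encoding of $x^\star$.
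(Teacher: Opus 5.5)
Your proposal is correct and matches the paper's proof: round every $Tx_\ell^\star$ down, then add one to $D=\sum_\ell\phi_\ell$ coordinates with positive fractional part, which is feasible because each $\phi_\ell<1$, and list the rows in any order. Choosing the largest remainders is exactly what Algorithm~\ref{alg:app-passive-terminal-greedy} does, and your note that the running time also depends on the encoding length of a rational $x^\star$ is a small, accurate refinement of the statement.
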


\begin{proof}
We round each target count down, then
assign the remaining rounds to coordinates
with positive fractional parts. Initially, set
\[
    q_\ell:=\lfloor Tx_\ell^\star\rfloor
    \qquad
    \text{for every }\ell\in\mathcal{A}.
\]
The number of unassigned rounds is
\[
    r
    :=
    T-\sum_{\ell\in\mathcal{A}}\lfloor Tx_\ell^\star\rfloor
    =
    \sum_{\ell\in\mathcal{A}}
    \bigl(Tx_\ell^\star-\lfloor Tx_\ell^\star\rfloor\bigr).
\]
Each positive fractional part is less than
one, so there are enough such coordinates
to assign all $r$ remaining rounds. Increase
$q_\ell$ by one for any $r$ of
them. The resulting counts sum to $T$
and satisfy the stated floor--ceiling bounds.

If $x_\ell^\star=0$, then $Tx_\ell^\star=0$, so its floor
and ceiling are both zero and hence
$q_\ell=0$. Thus discrepancies occur only on $\operatorname{supp}(x^\star)$,
and every such discrepancy has absolute value
strictly less than one. This gives
\[
    \sum_{\ell\in\operatorname{supp}(x^\star)}
    |q_\ell-Tx_\ell^\star|
    \le
    |\operatorname{supp}(x^\star)|.
\]
Finally, arrange the multiset containing $q_\ell$
copies of row $\ell$ in any
order. Computing the counts and listing
these $T$ actions takes polynomial time.
\end{proof}

\begin{algorithm2e}[t]
\DontPrintSemicolon
\caption{Terminal rounding for passive extraction}
\label{alg:app-passive-terminal-greedy}
\KwIn{A target strategy $x^\star\in\Delta(\mathcal{A})$ and horizon $T$.}
\KwOut{A pure trajectory $\pi=(\pi_0,\ldots,\pi_{T-1})$.} \ForEach{$\ell\in\mathcal{A}$}{ Set $z_\ell\gets
Tx_\ell^\star$, $q_\ell\gets\lfloor z_\ell\rfloor$, and $f_\ell\gets z_\ell-\lfloor z_\ell\rfloor$\;
}
Set $r\gets T-\sum_{\ell\in\mathcal{A}}q_\ell$\; Let $I$ contain
the $r$ indices with largest positive
$f_\ell$, breaking ties by the fixed
action order\; \ForEach{$\ell\in I$}{Increase $q_\ell$ by
one\;} \For{$t=0,\ldots,T-1$}{ Using the counts $X_t(\ell)$
of the actions already chosen, select
  \[
    \ell_t\in\operatorname*{arg\,max}_{\ell:\,X_t(\ell)<q_\ell}
       \left\{\frac{(t+1)q_\ell}{T}-X_t(\ell)\right\},
  \]
  breaking ties by the fixed action order\;
  Set $\pi_t\gets e_{\ell_t}$\;
}
\Return{$\pi$}\;
\end{algorithm2e}

Algorithm~\ref{alg:app-passive-terminal-greedy} gives one deterministic ordering of the
rounded counts; Theorem~\ref{thm:app-pure-passive} shows that the passive
guarantee holds for every ordering, since only
the terminal score enters the potential comparison.

To track the passive surplus retained by rounding, we write
\[
    \operatorname{LAG}_{x^\star}(T)
    :=R_{\mathrm{cont}}(x^\star,T)-T\operatorname{Val}(A).
\]
This quantity is nonnegative because $x^\star$ is a max-min strategy.

\begin{proofof}{Theorem~\ref{thm:app-pure-passive}.}
Rounding changes the terminal score by at
most the support size. Since the potential
is Lipschitz and every Bregman term is
nonnegative, this score error bounds the loss
relative to the fixed strategy's continuous-time reward.
We make these two comparisons in turn.

Let $(q_\ell)_{\ell\in\mathcal{A}}$ be the counts from Lemma~\ref{lem:app-terminal-rounding},
and choose any ordering of these actions.
To express the resulting score error through
these counts, we record their terminal discrepancies
as
\[
    d_\ell:=q_\ell-Tx_\ell^\star.
\]
Rows outside the target support contribute no
discrepancy:
\[
    d_\ell=0
    \qquad
    \text{for }\ell\notin\operatorname{supp}(x^\star),
\]
and
\begin{equation}
\label{eq:app-terminal-discrepancy-l1}
    \sum_{\ell\in\operatorname{supp}(x^\star)}|d_\ell|
    \le
    |\operatorname{supp}(x^\star)|.
\end{equation}

The learner's terminal score under the pure
sequence is
\[
    S_T
    =
    -\sum_{\ell\in\mathcal{A}} q_\ell A^{\mathsf T} e_\ell
    =
    -TA^{\mathsf T} x^\star
    -
    \sum_{\ell\in\mathcal{A}} d_\ell A^{\mathsf T} e_\ell.
\]
The potential $\Phi_{\eta_T}$ is $1$-Lipschitz in
the $\ell_\infty$-norm, so a small terminal-score
error causes only a small loss
in its contribution to reward. To
see this, for arbitrary $U,V\in\mathbb R^m$,
\begin{align*}
    \Phi_{\eta_T}(U)-\Phi_{\eta_T}(V)
    &=
    \frac{1}{\eta_T}
    \left(
        h^*(\eta_T U)-h^*(\eta_T V)
    \right) \\
    &\le
    \max_{y\in\Delta(\mathcal{B})}\langle U-V,y\rangle
    \le
    \|U-V\|_\infty.
\end{align*}
Interchanging $U$ and $V$ gives
\[
    |\Phi_{\eta_T}(U)-\Phi_{\eta_T}(V)|
    \le
    \|U-V\|_\infty.
\]

It remains to bound the terminal-score
error to which this Lipschitz estimate
applies. Because $A\in[-1,1]^{n\times m}$, we have
\begin{align*}
    \left\|
        S_T+TA^{\mathsf T} x^\star
    \right\|_\infty
    &=
    \left\|
        \sum_{\ell\in\mathcal{A}} d_\ell A^{\mathsf T} e_\ell
    \right\|_\infty \\
    &\le
    \sum_{\ell\in\operatorname{supp}(x^\star)}|d_\ell| \\
    &\le
    |\operatorname{supp}(x^\star)|.
\end{align*}
Therefore
\begin{equation}
\label{eq:app-passive-terminal-potential}
    \Phi_{\eta_T}(0)-\Phi_{\eta_T}\bigl(S_T\bigr)
    \ge
    \Phi_{\eta_T}(0)
    -
    \Phi_{\eta_T}\bigl(-TA^{\mathsf T} x^\star\bigr)
    -
    |\operatorname{supp}(x^\star)|.
\end{equation}

Dropping the nonnegative Bregman terms from \eqref{eq:app-disc-cont-used}
gives
\[
    R(\pi,T)
    \ge
    \Phi_{\eta_T}(0)-\Phi_{\eta_T}\bigl(S_T\bigr).
\]
Combining the terminal-potential bound with \eqref{eq:app-zero-sum-continuous-reward} yields
\begin{align*}
    R(\pi,T)
    &\ge
    \Phi_{\eta_T}(0)
    -
    \Phi_{\eta_T}\bigl(-TA^{\mathsf T} x^\star\bigr)
    -
    |\operatorname{supp}(x^\star)| \\
    &=
    R_{\mathrm{cont}}(x^\star,T)
    -
    |\operatorname{supp}(x^\star)| \\
    &=
    T\operatorname{Val}(A)
    +
    \operatorname{LAG}_{x^\star}(T)
    -
    |\operatorname{supp}(x^\star)|.
\end{align*}
This proves \eqref{eq:app-pure-passive-main}.
\end{proofof}

The bounds of \citet{su2026nocoin} quantify the surplus
accumulated while the learner eliminates actions outside
$\operatorname{br}(x^\star)$. Terminal rounding transfers these bounds to
pure play with only the fixed support-size
loss.

\begin{proofof}{Corollary~\ref{cor:pure-passive-inverse-rate}.}
For $N_{\mathrm{sub}}>0$ and sufficiently large $\eta_TT$,
Proposition~\ref{prop:app-passive-ftrl-baseline} gives the continuous-time
surplus bound. Theorem~\ref{thm:app-pure-passive} transfers this bound to
a terminally balanced pure trajectory:
\begin{align*}
    R(\pi,T)
    &\ge R_{\mathrm{cont}}(x^\star,T)
          -|\operatorname{supp}(x^\star)| \\
    &\ge T\operatorname{Val}(A)
          +c_{A,h}\frac{N_{\mathrm{sub}}}{\eta_T}
          -|\operatorname{supp}(x^\star)|.
\end{align*}
Thus terminal rounding preserves the inverse-rate surplus up to the
fixed support-size loss. Algorithm~\ref{alg:app-passive-terminal-greedy}
constructs the required trajectory in polynomial time.
\end{proofof}

For fixed $A$, $x^\star$, and $h$, if $N_{\mathrm{sub}}>0$,
$\eta_T\to0$, and $\eta_TT\to\infty$, the support-size loss is fixed
while $N_{\mathrm{sub}}/\eta_T$ diverges. It can therefore be absorbed
into the inverse-rate term in Corollary~\ref{cor:pure-passive-inverse-rate},
giving
\[
    R(\pi,T)
    \ge T\operatorname{Val}(A)
        +\Omega_{A,h}\left(\frac{N_{\mathrm{sub}}}{\eta_T}\right).
\]

\subsection{Proofs for active extraction}
\label{app:proof-active}

This subsection proves Lemma~\ref{lem:app-cumulative-rounding} and Theorem~\ref{thm:app-pure-active} using
the ingredients developed in
Section~\ref{app:zero-sum-active}: we first construct a pure sequence
with uniformly bounded prefix discrepancies and show
how to derandomize it, then use the
resulting control of the learner's probabilities to
lower-bound each Bregman term, and finally combine
the terminal potential with the accumulated bonus.

\paragraph{The step-size condition.}
The one-update bound requires the distinguished best responses to
retain probability at least $\delta$ throughout each update.
On the reference ray $-tA^{\mathsf T}x^\star$, each best response
has probability at least $1/m$. To compare it with a pure
trajectory, recall
$L_A:=\max_{\ell\in\operatorname{supp}(x^\star)}\|A^{\mathsf T}e_\ell\|_*$
and let $c_{\|\cdot\|}:=\sup_{u\ne0}\|u\|_\infty/\|u\|$.
Prefix quotas bound the score error, including a partial update,
by $(|\operatorname{supp}(x^\star)|+1)L_A$. Since $Q_h$ is
$1/\mu$-Lipschitz from the dual to the primal norm, the resulting
coordinate error is at most
$c_{\|\cdot\|}\eta_T(|\operatorname{supp}(x^\star)|+1)L_A/\mu$.
To leave at least $\delta$ probability on each response, the theorem
requires
\begin{equation}
\label{eq:app-pure-step-size}
    0<\eta_T
    \le
    \frac{\mu}{
        c_{\|\cdot\|}
        \bigl(|\operatorname{supp}(x^\star)|+1\bigr)
        L_A
    }
    \left(\frac1m-\delta\right).
\end{equation}
Lemma~\ref{lem:app-shadowing} proves this control along the entire
update segment. For fixed game data and $\delta$, the condition
holds eventually whenever $\eta_T\to0$.

\paragraph{Prefix-balanced pure implementation.}
For pure trajectories, the cumulative mass $X_t(\ell)$
from Section~\ref{sec:construction} is the integer number of
plays of row $\ell$ before round $t$.
We seek a pure trajectory whose counts
track $tx^\star$ at every prefix. We
also require the randomized construction to
preserve the marginal action probabilities, so
that its terminal discrepancies have mean
zero. Cumulative rounding supplies both properties.
\begin{lemma}[Cumulative-rounding implementation]
\label{lem:app-cumulative-rounding}
For every $x^\star\in\Delta(\mathcal{A})$ and every horizon $T$,
there is a polynomial-time randomized construction
of a pure trajectory $\pi$ with
$\pi_t=e_{\ell_t}$ such that every realization satisfies
\[
    \lfloor t x_\ell^\star\rfloor
    \le X_t(\ell)\le
    \lceil t x_\ell^\star\rceil
    \qquad
    \text{for every }\ell\in\mathcal{A},\ t\in\{0,\ldots,T\},
\]
and, in addition,
\[
    \mathbb P(\ell_t=\ell)=x_\ell^\star
    \qquad
    \text{for every }\ell\in\mathcal{A},\ t\in\{0,\ldots,T-1\}.
\]
Consequently, every realization satisfies
\[
    \sum_{\ell\in\operatorname{supp}(x^\star)}
    \bigl|X_t(\ell)-t x_\ell^\star\bigr|
    \le |\operatorname{supp}(x^\star)|
    \qquad
    \text{for every }t\le T,
\]
and $\mathbb E[X_T(\ell)-Tx_\ell^\star]=0$ for every $\ell\in\mathcal{A}$.
\end{lemma}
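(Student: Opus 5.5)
I plan to realize cumulative rounding by dependent randomized rounding of a single uniform offset (systematic sampling) applied to each coordinate's cumulative target, arranged so that the per-round selections form a valid pure action.

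\textbf{Construction.} Order the actions $\ell_1,\dots,\ell_n$ by the fixed action order. Set $c_\ell:=\sum_{\ell'\prec\ell}x_{\ell'}^\star$, so the intervals $[c_\ell,c_\ell+x_\ell^\star)$ partition $[0,1)$. Draw one uniform $U\in[0,1)$. At round $t$, let $\ell_t$ be the unique $\ell$ with $\{U+t\}$-type membership; more precisely, define
\[
X_t(\ell):=\bigl\lfloor t x_\ell^\star + \{U - c_\ell\}\bigr\rfloor - \bigl\lfloor \{U-c_\ell\}\bigr\rfloor ,
\]
that is, the number of integers in $(\,\cdot\,]$-windows of the form $U-c_\ell+k$ crossed by the ray $s\mapsto s x_\ell^\star$ up to time $t$. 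This formulation does not by itself guarantee exactly one increment per round, so I would instead use the standard apportionment form from \citet{golz2025apportionment}. In that form, $X_t(\ell)$ counts the points of the lattice $U+\mathbb Z$ in the union of the arcs assigned to $\ell$ as one walks $t$ full turns around a circle of circumference one, after laying the arc of $\ell$ repeatedly with length $x_\ell^\star$. Concretely, I would place the $t$-th unit segment $[t,t+1)$ on the line, subdivide it into consecutive pieces of lengths $x^\star_{\ell}$ in the fixed order, and set $\ell_t$ to be the action whose piece contains $t+U$. Then exactly one action is chosen per round, so $\pi_t=e_{\ell_t}$ is a pure action. Moreover, $\mathbb P(\ell_t=\ell)=x_\ell^\star$ because $U$ is uniform, which gives the marginal property immediately.

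\textbf{Prefix quotas.} With this plain construction, $X_t(\ell)$ counts the rounds whose sample point falls in $\ell$'s piece. That piece sits at the same offset in every period, so $X_t(\ell)\in\{0,t\}$, and the quota bound fails. The fix, which I regard as the main obstacle, is to lay out cumulatively rather than per period. Place the $\ell$-pieces so that the total $\ell$-length in $[0,t)$ equals $tx^\star_\ell$ \emph{and} the pieces interleave. The cleanest approach is to treat each action separately as the stream of points $\{k\text{-th crossing}\}$ where $tx_\ell^\star+V_\ell$ passes an integer, with $V_\ell\in[0,1)$. Each coordinate then individually satisfies $X_t(\ell)=\lfloor tx_\ell^\star+V_\ell\rfloor-\lfloor V_\ell\rfloor\in\{\lfloor tx^\star_\ell\rfloor,\lceil tx^\star_\ell\rceil\}$, and with $V_\ell$ uniform it has marginal $x^\star_\ell$. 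The difficulty is coupling the $V_\ell$ so that exactly one coordinate increments per round. Here I would invoke the cumulative-rounding theorem of \citet{golz2025apportionment} directly. That theorem gives a polynomial-time distribution over house-monotone sequences satisfying the quota at every prefix, with per-round marginals $x^\star$. Since each round allocates exactly one seat, its outputs are exactly pure trajectories, and I would cite it for existence and efficiency.

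\textbf{Consequences.} Given the quotas, each $|X_t(\ell)-tx^\star_\ell|<1$. For $\ell\notin\operatorname{supp}(x^\star)$, both the floor and the ceiling are $0$, so summing over the support gives the $\ell_1$ bound. Linearity with the marginals gives $\mathbb E[X_T(\ell)]=\sum_{t<T}\mathbb P(\ell_t=\ell)=Tx^\star_\ell$. Polynomial time follows because sampling requires only $T$ steps of rational comparisons.
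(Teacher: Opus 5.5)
Your final argument is correct and essentially matches the paper's. After setting aside the systematic-sampling attempts, you invoke \citet{golz2025apportionment} in its apportionment form: house-monotone, quota at every prefix, and ex ante proportional, with per-round marginals $x^\star_\ell$ obtained by differencing $\mathbb E[X_{t+1}(\ell)]-\mathbb E[X_t(\ell)]$. This is the same instance as the paper's application of cumulative rounding to one star per round, and your derivations of the $\ell_1$ bound and of $\mathbb E[X_T(\ell)]=Tx^\star_\ell$ match the paper's. One inaccuracy: your closing efficiency remark describes the cited sampler as $T$ rounds of rational comparisons, but it is a pipage-type dependent rounding on an auxiliary graph with $O(Tn)$ edges, so polynomial time should simply be inherited from the citation.
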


\begin{proof}
We encode each round by a star:
selecting one edge chooses the pure action
for that round, while cumulative edge counts
record the prefix action counts. Apply cumulative
rounding \cite[Theorem~4]{golz2025apportionment} to these stars over
the $T$ time steps. Each star
has a center $c$ and a
leaf $v_\ell$ for each optimizer action
$\ell\in\mathcal{A}$. Give the edge $g_\ell=(c,v_\ell)$ fractional
weight $x_\ell^\star$ at every time step.
\begin{center}
\begin{tikzpicture}[
    centervertex/.style={circle,draw,fill=gray!15,minimum size=9mm,inner sep=0pt},
    actionvertex/.style={circle,draw,fill=blue!8,minimum size=9mm,inner sep=0pt},
    edgelabel/.style={fill=white,inner sep=1pt,font=\small}
]
    \node[font=\small] at (0,2.05) {one action per round};
    \node[font=\small] at (4.5,2.05)
        {one leaf per optimizer action};

    \node[centervertex] (c) at (0,0) {$c$};
    \node[actionvertex] (v1) at (4.5,1.3) {$v_1$};
    \node[actionvertex] (v2) at (4.5,0.25) {$v_2$};
    \node at (4.5,-0.5) {$\vdots$};
    \node[actionvertex] (vn) at (4.5,-1.3) {$v_{n}$};

    \draw (c) -- node[edgelabel,above,sloped] {$x_1^\star$} (v1);
    \draw (c) -- node[edgelabel,above,sloped] {$x_2^\star$} (v2);
    \draw (c) -- node[edgelabel,above,sloped] {$x_{n}^\star$} (vn);
\end{tikzpicture}

\smallskip \small At each time step, selecting
edge $g_\ell$ means playing pure action $e_\ell$.
\end{center}
The center has fractional degree one, so
degree preservation selects exactly one action
in each round. Marginal preservation gives
$\mathbb P(\ell_t=\ell)=x_\ell^\star$, and cumulative degree preservation
at each leaf gives the floor--ceiling
bound on $X_t(\ell)$ for every prefix.

If $x_\ell^\star=0$, the quota bound forces $X_t(\ell)=0$,
so only rows in $\operatorname{supp}(x^\star)$ can appear.
Each remaining discrepancy has absolute value less
than one, which gives the stated $\ell_1$
bound. Finally, $\mathbb E[X_T(\ell)]=Tx_\ell^\star$ by marginal preservation.
\end{proof}

For the deterministic guarantee, we need to
choose among the feasible rounding outcomes. The
following lemma shows that we can preserve
any linear function of the terminal
counts while keeping all prefix quotas.
Later, this function will combine the
terminal-potential correction and the active bonus.

\begin{lemma}[Linear-cost derandomization]
\label{lem:app-cost-derandomization}
Fix arbitrary real coefficients $(c_\ell)_{\ell\in\mathcal{A}}$. There is
a deterministic cumulative-rounding realization satisfying the same
prefix quota bounds as Lemma~\ref{lem:app-cumulative-rounding} and
\[
    \sum_{\ell\in\mathcal{A}} X_T(\ell)c_\ell
    \ge
    T\sum_{\ell\in\mathcal{A}} x_\ell^\star c_\ell.
\]
When $x^\star$ and the coefficients are rational,
such a realization can be found in
time polynomial in their encoding length and
in $T,n$.
\end{lemma}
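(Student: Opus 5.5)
The existence part follows from Lemma~\ref{lem:app-cumulative-rounding} by averaging. Let $\pi$ be the random cumulative-rounding realization. Marginal preservation gives $\mathbb E[X_T(\ell)]=Tx_\ell^\star$ for every $\ell$, so
\[
  \mathbb E\Bigl[\sum_{\ell}X_T(\ell)c_\ell\Bigr]=T\sum_\ell x_\ell^\star c_\ell .
\]
Some realization in the support therefore meets the bound. Every realization satisfies the prefix quotas, so this one does too. This settles existence, but the construction of \citet{golz2025apportionment} may use exponentially many random bits or a distribution we cannot enumerate. For the algorithmic claim I would therefore avoid derandomizing that theorem directly and build the realization combinatorially.

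The plan is to construct the deterministic realization in two stages: first fix the terminal counts, then schedule them.

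\textbf{Terminal counts.} The final counts $q_\ell:=X_T(\ell)$ must satisfy $\lfloor Tx_\ell^\star\rfloor\le q_\ell\le\lceil Tx_\ell^\star\rceil$ and $\sum_\ell q_\ell=T$. Write $f_\ell:=Tx_\ell^\star-\lfloor Tx_\ell^\star\rfloor$ and $r:=\sum_\ell f_\ell$, which is an integer. We must then choose exactly $r$ coordinates with $f_\ell>0$ to round up. Take the $r$ such coordinates with the largest $c_\ell$. Since $\sum_\ell f_\ell=r$ and each $f_\ell\in[0,1)$, the fractional vector $f$ lies in the polytope $\{z\in[0,1]^n:\sum_\ell z_\ell=r,\ z_\ell=0 \text{ if } f_\ell=0\}$. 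The greedy choice maximizes $\sum_\ell z_\ell c_\ell$ over this polytope, so $\sum_\ell z_\ell c_\ell\ge\sum_\ell f_\ell c_\ell$. Adding $\sum_\ell\lfloor Tx_\ell^\star\rfloor c_\ell$ to both sides gives $\sum_\ell q_\ell c_\ell\ge T\sum_\ell x_\ell^\star c_\ell$. The same rounding already appears in Lemma~\ref{lem:app-terminal-rounding}, with the tie order replaced by the order of $c_\ell$.

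\textbf{Schedule.} Next I would order the $T$ plays so that every prefix satisfies $\lfloor tx_\ell^\star\rfloor\le X_t(\ell)\le\lceil tx_\ell^\star\rceil$ and the final counts are exactly $q_\ell$. This is the main obstacle: a general ordering respects only the terminal quotas. I would treat it as a flow/bipartite-matching problem. Let $D_\ell(t)$ be a nondecreasing integer deadline sequence and $U_\ell(t)$ a release sequence, with
\[
  D_\ell(t)=\max\bigl(\lfloor tx_\ell^\star\rfloor,\;q_\ell-(T-t)\bigr)\ \text{or the tighter}\ \max\bigl(\lfloor tx_\ell^\star\rfloor,\;q_\ell-\lceil (T-t)x_\ell^\star\rceil\bigr),
\]
and $U_\ell(t)=\min\bigl(\lceil tx_\ell^\star\rceil,\ q_\ell\bigr)$. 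A schedule exists if and only if the interval-constrained integral flow (one unit per round, with cumulative counts of each $\ell$ kept in $[D_\ell(t),U_\ell(t)]$) is feasible. This is an integral network-flow instance with $O(nT)$ arcs, so it can be solved in polynomial time.

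Feasibility comes from the randomized lemma, but only if the greedy count vector $q$ is one a realization can actually end at. To secure this, I would instead use the existence argument at the level of counts. Every integral point of the polytope above occurs as a terminal count vector of some prefix-balanced schedule: the cumulative-rounding flow polytope is integral, with the marginals fixed at the leaves and the terminal counts free in $\{\lfloor\cdot\rfloor,\lceil\cdot\rceil\}$ subject to sum $T$. In fact one can avoid verifying this separately. Write down the full flow LP (time-layer nodes, cumulative-count bounds $\lfloor tx_\ell^\star\rfloor\le X_t(\ell)\le\lceil tx_\ell^\star\rceil$, one unit per round) with objective $\sum_\ell X_T(\ell)c_\ell$. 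The stationary fractional solution $x_t=x^\star$ is feasible and achieves $T\sum_\ell x_\ell^\star c_\ell$. The constraints have a consecutive-ones / network structure: each variable is a cumulative sum over time of a fixed action's count, with interval bounds, so the matrix is totally unimodular. An optimal vertex is therefore integral and gives the desired deterministic realization in polynomial time.

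Verifying total unimodularity of this cumulative-count system is the step I expect to need the most care. I would try to handle it by rewriting the constraints as an interval matrix: the rows are prefix sums for a fixed $\ell$ plus the per-round simplex equalities. The cleanest route is to express the whole thing as a circulation on the network used in \citet{golz2025apportionment}.
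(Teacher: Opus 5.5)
Your averaging argument for existence is correct, and your final route, a single LP over the whole trajectory, also works. That LP maximizes $\sum_\ell X_T(\ell)c_\ell$ over $p_{t,\ell}\ge0$ subject to $\sum_\ell p_{t,\ell}=1$ and the prefix quotas.

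\textbf{The two-stage plan fails.} The claim you use to rescue it is false: it is not true that ``every integral point of the polytope above occurs as a terminal count vector of some prefix-balanced schedule.'' Take $x^\star=(\tfrac1{10},\tfrac1{10},\tfrac7{15},\tfrac13)$, $T=7$, and $c=(0,0,1,1)$. Then $Tx^\star=(\tfrac7{10},\tfrac7{10},\tfrac{49}{15},\tfrac73)$ and the floors sum to $5$, so $r=2$. Your greedy step rounds up actions $3$ and $4$, giving $q=(0,0,4,3)$, which lies in the terminal box. At $t=6$, however, $6x^\star=(\tfrac35,\tfrac35,\tfrac{14}5,2)$, so the quotas give $X_6(3)\le3$ and $X_6(4)\le2$. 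Meanwhile $q_1=q_2=0$ forces $X_6(1)=X_6(2)=0$, so at most $5<6$ rounds can have been played. Terminal counts therefore cannot be optimized separately from the prefix constraints, and the scheduling stage may have nothing feasible to schedule. Drop the first stage entirely; the joint LP is the actual argument, not an optional shortcut.

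\textbf{Justifying total unimodularity.} For the joint LP, total unimodularity does hold, but not through an interval matrix. For two actions and $T=2$, the rows $\{(0,1),(1,1)\}$, $\{(1,1),(1,2)\}$, $\{(1,2),(0,2)\}$, $\{(0,2),(0,1)\}$ already form a cycle, so no column order makes all rows consecutive. The correct reason is that the rows are incidence vectors of the union of two laminar families on the ground set $\{(t,\ell)\}$. One family is the partition into rounds; the other is, for each $\ell$, the chain of prefixes $\{(s,\ell):s<t\}$. Such a matrix is TU; indeed it is a network matrix, which is what your circulation formulation would exhibit. With integral right-hand sides, an optimal basic solution is integral, hence a prefix-balanced pure trajectory. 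Its value is at least that of the stationary point $p_{t,\ell}=x^\star_\ell$, namely $T\sum_\ell x^\star_\ell c_\ell$. It is computable in time polynomial in $nT$ and the encoding length.

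\textbf{Comparison with the paper.} The paper instead derandomizes the pipage rounding of \citet{golz2025apportionment} step by step. Each fractional state is a convex combination of two feasible successors, and since the cost is linear, one successor does not decrease it. That route inherits feasibility and running time from the cited construction and returns literally a cumulative-rounding realization. Yours is self-contained and even returns an optimal prefix-balanced trajectory for the linear cost. Its output need not lie in the support of the randomized procedure, but it satisfies the prefix quotas, which is all Theorem~\ref{thm:app-pure-active} uses.
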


\begin{proof}
At each rounding step, the current fractional
state is a convex combination of two
feasible successors. A linear objective cannot decrease
in both directions, so we can choose
a successor that preserves its value.

Start from the fractional trajectory that plays
$x^\star$ in every round. To guide the
rounding choices, we track its linear value
through
\[
C = \sum_{t=0}^{T-1}\sum_{\ell\in\mathcal A} p_{t,\ell}c_\ell,
\]
where $p_{t,\ell}$ is the current fractional mass
assigned to action $\ell$ in round $t$.
Initially, $p_{t,\ell}=x^\star_\ell$ for every $t,\ell$, so
\[
C_{\mathrm{initial}}
= \sum_{t=0}^{T-1}\sum_{\ell\in\mathcal A}x^\star_\ell c_\ell
= T\sum_{\ell\in\mathcal A}x^\star_\ell c_\ell.
\]

Apply the cumulative-rounding procedure from Lemma~\ref{lem:app-cumulative-rounding}.
An intermediate state encodes the entire
trajectory, with some choices integral and
others still fractional. At a nonintegral
state $P$, a pipage-rounding step produces
feasible successors $P^+$ and $P^-$ satisfying
\[
P = \lambda P^+ + (1-\lambda)P^-,
\qquad \lambda\in[0,1].
\]
Both successors preserve the cumulative-rounding constraints, including
the prefix quotas. We use this feasibility
property of the pipage-rounding construction of \citet{golz2025apportionment}.

Because $C$ is linear in the action-assignment
variables,
\[
C(P) = \lambda C(P^+) + (1-\lambda)C(P^-).
\]
At least one successor therefore satisfies $C(P^\pm)\ge
C(P)$. Choose such a successor deterministically and
continue. At termination, the trajectory is integral
and selects exactly one action in every
round:
\[
\pi_t=e_{\ell_t}, \qquad t=0,\ldots,T-1.
\]
Feasibility at every step gives the required
prefix quotas:
\[
\lfloor t x^\star_\ell\rfloor
\le X_t(\ell)
\le \lceil t x^\star_\ell\rceil
\qquad \text{for every }t,\ell.
\]

The linear value never decreases, so
\[
C_{\mathrm{final}}
\ge C_{\mathrm{initial}}
= T\sum_{\ell\in\mathcal A}x^\star_\ell c_\ell.
\]
In the final trajectory, action $\ell$ appears
$X_T(\ell)$ times, giving
\[
C_{\mathrm{final}}
= \sum_{\ell\in\mathcal A}X_T(\ell)c_\ell.
\]
Therefore,
\[
\sum_{\ell\in\mathcal A}X_T(\ell)c_\ell
\ge
T\sum_{\ell\in\mathcal A}x^\star_\ell c_\ell.
\]

For rational $x^\star$ and $c_\ell$, the
cited rounding construction runs in polynomial
time. Each deterministic choice requires only
comparing two rational linear values, so
the same complexity bound holds here.
\end{proof}

\paragraph{Uniform shadowing of the equilibrium ray.}
We now translate prefix quotas into control
of the learner's probabilities. The one-step Bregman
bound will use the full score segment
between successive rounds, so we need this
control along each segment as well as
at its endpoints. For a sequence supplied
by Lemma~\ref{lem:app-cumulative-rounding}, we follow the response
along this score segment by writing
\[
    y_t(r)
    :=
    Q_h\bigl(\eta_T S_t-r\eta_T A^{\mathsf T} e_{\ell_t}\bigr),
    \qquad r\in[0,1].
\]
Thus $y_t(0)=y_t$ and $y_t(1)=y_{t+1}$. The next lemma
keeps every best-response coordinate in the interval
where the curvature bound $M$ applies.

\begin{lemma}[Shadowing and interiority]
\label{lem:app-shadowing}
Fix $\delta\in(0,1/m)$ and suppose \eqref{eq:app-pure-step-size} holds. Then,
for every realization of the cumulative-rounding sequence,
every round $t$, every $r\in[0,1]$, and every
$p\in\operatorname{br}(x^\star)$,
\[
    y_{t,p}(r)\in[\delta,1-\delta].
\]
\end{lemma}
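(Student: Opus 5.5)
We compare the score along each update segment with the reference ray $-tA^{\mathsf T}x^\star$ and transfer the comparison to probabilities through the Lipschitz continuity of $Q_h$. Fix a realization, a round $t$ and $r\in[0,1]$, and write the segment point as $\eta_T Z$ with $Z:=S_t-rA^{\mathsf T}e_{\ell_t}$. The reference point will be $\eta_T Z^\star$ with $Z^\star:=-(t+r)A^{\mathsf T}x^\star$. The plan has four steps.

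\textbf{Step 1: the reference point has the claimed probabilities.} The first task is to show that $Q_h(\eta_T Z^\star)$ puts at least $1/m$ on every $p\in\operatorname{br}(x^\star)$. The scores satisfy $Z^\star_b=-(t+r)(x^\star)^{\mathsf T}Ae_b$. The best responses minimize $(x^\star)^{\mathsf T}Ae_b$, so they share the same maximal score. By the separable structure and the strict monotonicity of $\theta'$, the KKT conditions force the \ac{FTRL} response to be coordinatewise nondecreasing in the score. Hence the maximal-score coordinates receive equal mass, and that mass is at least that of every other coordinate. It follows that each $p\in\operatorname{br}(x^\star)$ gets at least $1/m$. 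This is the claim recorded before \eqref{eq:app-pure-step-size}. I will verify it through the KKT characterization, taking care at the boundary where coordinates may vanish.

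\textbf{Step 2: bound the score error.} Next we show $\|Z-Z^\star\|_*\le(|\operatorname{supp}(x^\star)|+1)L_A$. Write
\[
Z-Z^\star=-\sum_{\ell}\bigl(X_t(\ell)-tx^\star_\ell\bigr)A^{\mathsf T}e_\ell-r\bigl(A^{\mathsf T}e_{\ell_t}-A^{\mathsf T}x^\star\bigr).
\]
By Lemma~\ref{lem:app-cumulative-rounding}, $X_t(\ell)=0$ off the support and each discrepancy on the support has absolute value less than one. The first sum therefore has dual norm at most $|\operatorname{supp}(x^\star)|L_A$. The second term needs a little care. Since $\ell_t\in\operatorname{supp}(x^\star)$ and $A^{\mathsf T}x^\star$ is a convex combination of support rows, a naive bound gives $2L_A$. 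To get a single $L_A$, I would instead regroup using $X_{t+1}$. Set $\tilde X(\ell):=X_t(\ell)+r\mathbf 1\{\ell=\ell_t\}$. This quantity lies between $X_t(\ell)$ and $X_{t+1}(\ell)$, so it lies within $[\lfloor tx^\star_\ell\rfloor,\lceil (t+1)x^\star_\ell\rceil]$. Then
\[
Z-Z^\star=-\sum_\ell\bigl(\tilde X(\ell)-(t+r)x^\star_\ell\bigr)A^{\mathsf T}e_\ell .
\]
Each coefficient is bounded by $1+x^\star_\ell$, and summing over the support gives at most $|\operatorname{supp}(x^\star)|+1$. The bound then follows by the triangle inequality with $L_A$.

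\textbf{Step 3: transfer to probabilities.} Strong convexity of $h$ with modulus $\mu$ makes $Q_h=\nabla h^*$ $(1/\mu)$-Lipschitz from $\|\cdot\|_*$ to $\|\cdot\|$. Hence
\[
\bigl\|Q_h(\eta_T Z)-Q_h(\eta_T Z^\star)\bigr\|_\infty\le c_{\|\cdot\|}\,\eta_T(|\operatorname{supp}(x^\star)|+1)L_A/\mu\le \frac1m-\delta
\]
by \eqref{eq:app-pure-step-size}. Combined with Step 1, this gives $y_{t,p}(r)\ge\delta$.

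\textbf{Step 4: the upper bound.} The upper bound $y_{t,p}(r)\le1-\delta$ follows from Step 3 applied to the other best responses, provided $|\operatorname{br}(x^\star)|\ge2$. That holds under Assumption~\ref{ass:app-non-identification}, since $i\ne j$ are both in $\operatorname{br}(x^\star)$. Each other best response then carries mass at least $\delta$, so $y_{t,p}(r)\le1-\delta$.

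\textbf{Main obstacle.} I expect the hardest step to be the clean $(|\operatorname{supp}|+1)$ bound in Step 2, including the partial update, together with the boundary case in Step 1's monotonicity argument.
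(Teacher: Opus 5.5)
Your proposal is correct and follows essentially the paper's argument. Both bound the dual-norm distance of the segment point from the max-min ray by $(|\operatorname{supp}(x^\star)|+1)L_A$, transfer this to coordinates through the $1/\mu$-Lipschitz map $Q_h$ and the constant $c_{\|\cdot\|}$, and obtain the upper bound from the second best response.

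The differences are minor. The paper anchors at $-tA^{\mathsf T}x^\star$ rather than $-(t+r)A^{\mathsf T}x^\star$. The partial step then costs only $r\|A^{\mathsf T}e_{\ell_t}\|_*\le L_A$ by the triangle inequality, so your regrouping, which you flagged as the main obstacle, is not needed. The paper also cites \citet[Lemma~3.6]{su2026nocoin} for the $1/m$ bound on the ray, whereas your monotonicity argument makes that step self-contained. The cleanest way to run it is to swap two coordinates and use separability together with uniqueness of the response, which avoids the boundary issues in the KKT conditions.
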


\begin{proof}
Along the fixed max-min score ray, each
best response has probability at least $1/m$.
Prefix quotas bound the distance from this
ray, and Lipschitz continuity of $Q_h$ transfers
that bound to the probabilities. The step-size
condition ensures that this perturbation leaves at
least $\delta$ mass on each best response.

The score of the pure sequence is
\[
    S_t
    =
    -\sum_{\ell\in\operatorname{supp}(x^\star)}
    X_t(\ell)A^{\mathsf T} e_\ell.
\]
Hence the interpolated score differs from the
max-min ray by
\begin{align*}
    &\left\|
        \eta_T S_t-r\eta_T A^{\mathsf T} e_{\ell_t}
        +\eta_T tA^{\mathsf T} x^\star
    \right\|_* \\
    &\le
    \eta_T
    \sum_{\ell\in\operatorname{supp}(x^\star)}
    \bigl|X_t(\ell)-t x_\ell^\star\bigr|
    \|A^{\mathsf T} e_\ell\|_*
    +r\eta_T\|A^{\mathsf T} e_{\ell_t}\|_* \\
    &\le
    \eta_T\bigl(|\operatorname{supp}(x^\star)|+1\bigr)
    \max_{\ell\in\operatorname{supp}(x^\star)}
    \|A^{\mathsf T} e_\ell\|_*,
\end{align*}
where we used Lemma~\ref{lem:app-cumulative-rounding} and the fact
that $\ell_t\in\operatorname{supp}(x^\star)$.

To compare probabilities, the pointwise \ac{FTRL} bound
of \citet[Lemma~3.6]{su2026nocoin} gives
\[
    \left[Q_h\bigl(-\eta_T tA^{\mathsf T} x^\star\bigr)\right]_p
    \ge\frac1m
    \qquad
    \text{for every }p\in\operatorname{br}(x^\star).
\]
Since $Q_h$ is $1/\mu$-Lipschitz,
\[
    \left\|
        y_t(r)-Q_h\bigl(-\eta_T tA^{\mathsf T} x^\star\bigr)
    \right\|
    \le
    \frac{\eta_T\bigl(|\operatorname{supp}(x^\star)|+1\bigr)}{\mu}
    \max_{\ell\in\operatorname{supp}(x^\star)}
    \|A^{\mathsf T} e_\ell\|_*.
\]
Converting this norm bound to a coordinatewise
bound gives
\begin{align*}
    y_{t,p}(r)
    &\ge
    \frac1m
    -
    c_{\|\cdot\|}
    \frac{\eta_T\bigl(|\operatorname{supp}(x^\star)|+1\bigr)}{\mu}
    \max_{\ell\in\operatorname{supp}(x^\star)}
    \|A^{\mathsf T} e_\ell\|_* \\
    &\ge\delta,
\end{align*}
where the last inequality is \eqref{eq:app-pure-step-size}. Because
$\operatorname{br}(x^\star)$ contains the distinct indices $i$ and
$j$, every $p\in\operatorname{br}(x^\star)$ has another best-response coordinate
whose mass is at least $\delta$. Thus
$y_{t,p}(r)\le1-\delta$.
\end{proof}

\paragraph{Payoff drop and surplus along one update.}
We now relate the payoff-drop calculation of
\citet[Lemma~F.8]{su2026nocoin} to the Bregman terms in our reward
decomposition. For an arbitrary optimizer action $x_t$, write
$v_t:=A^{\mathsf T}x_t$ and interpolate the learner's response as
$y_t(r):=Q_h(\eta_T(S_t-rv_t))$, $0\le r\le1$.
The optimizer receives payment at $y_t(0)$, before the learner
absorbs the update. The following estimate quantifies this benefit.

\begin{proposition}[Payoff drop and Bregman surplus in one update]
\label{prop:app-active-ftrl-baseline}
Suppose two distinct learner actions $i,j$ satisfy
$y_{t,i}(r),y_{t,j}(r)\in[\delta,1-\delta]$ throughout the update,
where $\delta\in(0,1/m)$. Set
$M:=\sup_{u\in[\delta,1-\delta]}\theta''(u)<\infty$. Then
\begin{equation}
\label{eq:app-ftrl-payoff-drop}
\begin{aligned}
  \langle v_t,y_t\rangle-\langle v_t,y_{t+1}\rangle
  &\ge \eta_T(v_{t,i}-v_{t,j})^2
      \int_0^1\frac{dr}
      {\theta''(y_{t,i}(r))+\theta''(y_{t,j}(r))}\\
  &\ge \frac{\eta_T}{2M}(v_{t,i}-v_{t,j})^2.
\end{aligned}
\end{equation}
Moreover, the contribution of this update to the reward decomposition
\eqref{eq:app-disc-cont-used} satisfies
\begin{equation}
\label{eq:app-ftrl-bregman-bonus}
  \frac1{\eta_T}D_{h^*}(\eta_TS_{t+1},\eta_TS_t)
  \ge \frac{\eta_T}{4M}(v_{t,i}-v_{t,j})^2.
\end{equation}
\end{proposition}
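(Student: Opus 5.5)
The plan is to treat both inequalities through the scalar path $r\mapsto y_t(r)=Q_h(\eta_T(S_t-rv_t))$, which moves along the segment between $y_t$ and $y_{t+1}$.

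\textbf{Step 1: reduce the payoff drop to a derivative.} Write
\[
\langle v_t,y_t\rangle-\langle v_t,y_{t+1}\rangle=-\int_0^1\frac{d}{dr}\langle v_t,y_t(r)\rangle\,dr .
\]
Separable \ac{FTRL} has a KKT characterization on the support of $y_t(r)$: $\theta'(y_b)=z_b-\lambda$. Differentiating it in $r$ gives, wherever the active set is locally constant,
\[
\dot y_b=-\eta_T\,\frac{v_b-\bar v}{\theta''(y_b)},
\qquad \bar v:=\frac{\sum_{b\in\text{supp}}v_b/\theta''(y_b)}{\sum_{b\in\text{supp}}1/\theta''(y_b)} .
\]
Hence
\[
-\frac{d}{dr}\langle v_t,y_t(r)\rangle=\eta_T\sum_b\frac{(v_b-\bar v)^2}{\theta''(y_b)}=\eta_T\,\mathrm{Var}_w(v),
\]
a weighted variance with weights $w_b=1/\theta''(y_b)$.

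\textbf{Step 2: keep only coordinates $i,j$.} Dropping all other coordinates only lowers a weighted variance, and for two points the weighted variance equals
\[
\frac{w_iw_j}{w_i+w_j}(v_i-v_j)^2=\frac{(v_i-v_j)^2}{\theta''(y_i)+\theta''(y_j)} .
\]
Integrating over $r$ gives the first line of \eqref{eq:app-ftrl-payoff-drop}. The hypothesis $y_i,y_j\in[\delta,1-\delta]$ then bounds each $\theta''$ by $M$, which gives the second line.

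The main obstacle is rigor at the points where the active support of $y_t(r)$ changes, possibly through boundary coordinates where $\theta''$ blows up. The first thing to try is to note that $r\mapsto y_t(r)$ is Lipschitz, hence absolutely continuous, so the identity holds for almost every $r$. I would argue that support changes happen at finitely many $r$ using piecewise monotonicity. If that fails, I would instead invoke the cited payoff-drop estimate of \citet[Lemma~F.8]{su2026nocoin} directly. The restriction inequality should be stated as a lower bound holding for every support, so that $i,j$ always sit in the support (they have mass at least $\delta$).

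\textbf{Step 3: obtain the Bregman bound \eqref{eq:app-ftrl-bregman-bonus}.} Use the integral form of the Bregman divergence along the same path. With $u=\eta_TS_{t+1}$ and $w=\eta_TS_t$, we have $u-w=-\eta_Tv_t$, so
\[
D_{h^*}(u,w)=\int_0^1\langle Q_h(w-r\eta_Tv_t)-Q_h(w),-\eta_Tv_t\rangle\,dr=\eta_T\int_0^1\bigl[\langle v_t,y_t(0)\rangle-\langle v_t,y_t(r)\rangle\bigr]dr .
\]
Applying the Step 2 estimate on the subsegment $[0,r]$ (the hypothesis holds there too) gives
\[
\langle v_t,y_t(0)\rangle-\langle v_t,y_t(r)\rangle\ge\frac{\eta_Tr}{2M}(v_i-v_j)^2 .
\]
Integrating $r$ over $[0,1]$ contributes a factor $1/2$, so $D_{h^*}\ge\frac{\eta_T^2}{4M}(v_i-v_j)^2$. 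Dividing by $\eta_T$ yields \eqref{eq:app-ftrl-bregman-bonus}.
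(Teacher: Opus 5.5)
Your proposal is correct and follows the paper's proof. The paper also writes $-\frac{d}{dr}\langle v_t,y_t(r)\rangle$ as $\eta_T$ times an inverse-curvature-weighted variance, keeps only coordinates $i,j$, bounds $\theta''$ by $M$, and integrates $\langle v_t,y_t(0)\rangle-\langle v_t,y_t(r)\rangle\ge r\eta_T(v_{t,i}-v_{t,j})^2/(2M)$ in the Bregman integral to gain the factor $\int_0^1 r\,dr=1/2$. The only difference is that you derive the variance identity from the KKT conditions, whereas the paper cites \citet[Lemma~F.8]{su2026nocoin}; for that derivation you do not need finitely many support changes. At every differentiability point of the Lipschitz path, off-support coordinates have zero derivative, since they sit at a minimum of $y_b\ge0$. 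On-support coordinates satisfy the differentiated KKT relation, so the identity holds for almost every $r$.
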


\paragraph{One pure step creates a Bregman bonus.}
Lemma~\ref{lem:app-shadowing} ensures that the two distinguished
responses retain mass throughout every pure update. Applying
Proposition~\ref{prop:app-active-ftrl-baseline} therefore gives the
following bound in terms of the chosen row's payoff difference.

\begin{lemma}[One-step pure-action bonus]
\label{lem:app-one-step-bonus}
Under the assumptions of Lemma~\ref{lem:app-shadowing}, every round
$t$ satisfies
\[
    \frac{1}{\eta_T}
    D_{h^*}\bigl(\eta_T S_{t+1},\eta_T S_t\bigr)
    \ge
    \frac{\eta_T}{4M}
    \bigl(A_{\ell_t i}-A_{\ell_t j}\bigr)^2.
\]
\end{lemma}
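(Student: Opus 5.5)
The lemma should follow by applying Proposition~\ref{prop:app-active-ftrl-baseline} to the pure update at round $t$. Lemma~\ref{lem:app-shadowing} supplies the interiority hypothesis that the proposition needs.

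Fix a realization of the cumulative-rounding sequence and a round $t$. The optimizer plays $x_t=e_{\ell_t}$, so the learner's update vector is $v_t=A^{\mathsf T}e_{\ell_t}$ and the score moves from $S_t$ to $S_{t+1}=S_t-v_t$. The interpolation $y_t(r)=Q_h(\eta_T(S_t-rv_t))$ used in Proposition~\ref{prop:app-active-ftrl-baseline} is therefore exactly the segment $y_t(r)=Q_h(\eta_TS_t-r\eta_TA^{\mathsf T}e_{\ell_t})$ studied in Lemma~\ref{lem:app-shadowing}. Assumption~\ref{ass:app-non-identification} gives distinct $i,j\in\operatorname{br}(x^\star)$. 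Lemma~\ref{lem:app-shadowing} then yields $y_{t,i}(r),y_{t,j}(r)\in[\delta,1-\delta]$ for every $r\in[0,1]$, with $\delta\in(0,1/m)$ under the step-size condition \eqref{eq:app-pure-step-size}. The constant $M=\sup_{u\in[\delta,1-\delta]}\theta''(u)$ is finite because $\theta\in C^2((0,1))$ and $[\delta,1-\delta]$ is a compact subinterval. All hypotheses of Proposition~\ref{prop:app-active-ftrl-baseline} are therefore met.

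Applying \eqref{eq:app-ftrl-bregman-bonus} gives
\[
\frac1{\eta_T}D_{h^*}(\eta_TS_{t+1},\eta_TS_t)\ge\frac{\eta_T}{4M}(v_{t,i}-v_{t,j})^2.
\]
Substituting $v_{t,i}-v_{t,j}=(A^{\mathsf T}e_{\ell_t})_i-(A^{\mathsf T}e_{\ell_t})_j=A_{\ell_t i}-A_{\ell_t j}$ gives the claimed bound.

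The only genuine content is checking that the interiority holds along the whole update segment, not just at its endpoints. Lemma~\ref{lem:app-shadowing} already provides this for all $r\in[0,1]$, because its score-error bound includes the partial step $r\eta_TA^{\mathsf T}e_{\ell_t}$. If anything needs care, it is confirming that $\ell_t\in\operatorname{supp}(x^\star)$. That membership is what makes the $L_A$ bound in the shadowing lemma apply, and it holds because the prefix quotas force $X_t(\ell)=0$ for every row outside the support (Lemma~\ref{lem:app-cumulative-rounding}).
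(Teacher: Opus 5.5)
Your proposal is correct and matches the paper's proof: the paper applies Proposition~\ref{prop:app-active-ftrl-baseline} with $v_t=A^{\mathsf T}e_{\ell_t}$, takes interiority of both distinguished coordinates along the whole segment from Lemma~\ref{lem:app-shadowing}, and substitutes $v_{t,i}-v_{t,j}=A_{\ell_t i}-A_{\ell_t j}$. Your extra checks, that $M<\infty$ and that $\ell_t\in\operatorname{supp}(x^\star)$, are ones the paper uses implicitly.
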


\begin{proofof}{Proposition~\ref{prop:app-active-ftrl-baseline} and Lemma~\ref{lem:app-one-step-bonus}.}
We prove the proposition for an arbitrary update $v_t$; the pure
case then follows from Lemma~\ref{lem:app-shadowing}. We first bound
the payoff drop along the update and then integrate it to obtain
the Bregman bonus. Following \citet[Lemma~F.8]{su2026nocoin}, write
\[
    v':=v_t,
    \qquad
    y(r):=y_t(r).
\]
Although that lemma is stated for the
alternating trajectory used there, its calculation only
requires an affine score path of the
form $z-r\eta_T v'$. Its proof therefore applies
verbatim here with $z=\eta_T S_t$. The identity
expresses the rate of payoff change as
a variance weighted by inverse curvature. For
$r\in[0,1]$, we collect the inverse-curvature weights, their
total mass, and their normalized distribution as
\[
    w_r(q)
    :=
    \begin{cases}
        1/\theta''(y_q(r)),&q\in\operatorname{supp}(y(r)),\\
        0,&q\notin\operatorname{supp}(y(r)),
    \end{cases}
    \qquad
    W_r:=\sum_qw_r(q),
    \qquad
    \nu_r(q):=\frac{w_r(q)}{W_r}.
\]
To represent this weighted variance probabilistically, let
$I_r\sim\nu_r$ and $Z_r:=v'_{I_r}$. The variance identity and
its two-coordinate lower bound give, for almost
every $r\in[0,1]$,
\[
    -\frac{d}{dr}\langle v',y(r)\rangle
    =
    \eta_T W_r\operatorname{Var}(Z_r),
    \qquad
    W_r\operatorname{Var}(Z_r)
    \ge
    \frac{(v'_i-v'_j)^2}
    {\theta''(y_i(r))+\theta''(y_j(r))}.
\]
By hypothesis, both distinguished coordinates lie in $[\delta,1-\delta]$,
so each of their second derivatives is
at most $M$. The two-coordinate bound therefore
yields
\[
    -\frac{d}{dr}\langle v',y(r)\rangle
    \ge
    \frac{\eta_T(v'_i-v'_j)^2}{2M}
\]
for almost every $r$. Since $\langle v',y(\cdot)\rangle$
is absolutely continuous,
\[
    \langle v',y(0)\rangle-\langle v',y(r)\rangle
    \ge
    r\frac{\eta_T(v_{t,i}-v_{t,j})^2}{2M}.
\]
Integrating the two-coordinate variance bound gives
\eqref{eq:app-ftrl-payoff-drop}. To obtain the Bregman bonus, use
$S_{t+1}=S_t-v'$ and $\nabla h^*=Q_h$:
\begin{align*}
    \frac{1}{\eta_T}
    D_{h^*}\bigl(\eta_T S_{t+1},\eta_T S_t\bigr)
    &=
    \frac{1}{\eta_T}
    \left(
        h^*\bigl(\eta_T S_t-\eta_T v'\bigr)-h^*\bigl(\eta_T S_t\bigr)
        +\eta_T\left\langle Q_h\bigl(\eta_T S_t\bigr),v'\right\rangle
    \right) \\
    &=
    \int_0^1
    \bigl(\langle v',y(0)\rangle-\langle v',y(r)\rangle\bigr)\,dr \\
    &\ge
    \int_0^1
    r\frac{\eta_T(v_{t,i}-v_{t,j})^2}{2M}\,dr \\
    &=
    \frac{\eta_T(v_{t,i}-v_{t,j})^2}{4M}.
\end{align*}
This proves \eqref{eq:app-ftrl-bregman-bonus}. The endpoint payoff
drop has coefficient $1/(2M)$, whereas averaging it along the
update contributes the extra factor $\int_0^1r\,dr=1/2$.
For a pure action, $v_t=A^{\mathsf T}e_{\ell_t}$, and
Lemma~\ref{lem:app-shadowing} supplies the required interiority;
substitution gives Lemma~\ref{lem:app-one-step-bonus}.
\end{proofof}

\paragraph{Combining the terminal potential and the active
bonus.}
We now assemble these ingredients into the
proof of Theorem~\ref{thm:app-pure-active}.

\begin{proofof}{Theorem~\ref{thm:app-pure-active}.}
We return to the reward decomposition in
\eqref{eq:app-disc-cont-used}. Both the terminal potential and the
sum of one-step bonuses differ from their
target values by terms linear in the
terminal count discrepancies. We retain these terms
and combine them into a single
cost, then choose a rounding outcome
for which their total contribution is
nonnegative. The remaining error is the
quadratic remainder in the potential bound.

First compare the terminal potential with that
of the fixed max-min strategy. Because $h^*$
is $1/\mu$-smooth, $\Phi_{\eta_T}$ is $\eta_T/\mu$-smooth, giving
\begin{align*}
    \Phi_{\eta_T}\bigl(S_T\bigr)
    &\le
    \Phi_{\eta_T}\bigl(-TA^{\mathsf T} x^\star\bigr)
    +\left\langle
        Q_h\bigl(-\eta_T TA^{\mathsf T} x^\star\bigr),
        S_T+TA^{\mathsf T} x^\star
    \right\rangle
    +\frac{\eta_T}{2\mu}
    \left\|S_T+TA^{\mathsf T} x^\star\right\|_*^2.
\end{align*}
The terminal instance of the prefix
quota bounds controls the quadratic remainder:
\[
    \left\|S_T+TA^{\mathsf T} x^\star\right\|_*
    =
    \left\|
        -\sum_{\ell\in\mathcal{A}}
        \bigl(X_T(\ell)-Tx_\ell^\star\bigr)A^{\mathsf T} e_\ell
    \right\|_*
    \le
    |\operatorname{supp}(x^\star)|
    \max_{\ell\in\operatorname{supp}(x^\star)}
    \|A^{\mathsf T} e_\ell\|_*.
\]
The fixed-strategy identity in \eqref{eq:app-zero-sum-continuous-reward} gives
\[
    \Phi_{\eta_T}(0)-\Phi_{\eta_T}\bigl(-TA^{\mathsf T} x^\star\bigr)
    =
    R_{\mathrm{cont}}(x^\star,T)
    =
    T\operatorname{Val}(A)+\operatorname{LAG}_{x^\star}(T).
\]
Using this identity and the score
bound in the potential comparison gives
\begin{equation}
\label{eq:app-terminal-bound}
\begin{aligned}
    \Phi_{\eta_T}(0)-\Phi_{\eta_T}\bigl(S_T\bigr)
    &\ge
    T\operatorname{Val}(A)
    +\operatorname{LAG}_{x^\star}(T) \\
    &\quad-
    \left\langle
        Q_h\bigl(-\eta_T TA^{\mathsf T} x^\star\bigr),
        S_T+TA^{\mathsf T} x^\star
    \right\rangle \\
    &\quad-
    \frac{\eta_T|\operatorname{supp}(x^\star)|^2}{2\mu}
    \left(
        \max_{\ell\in\operatorname{supp}(x^\star)}
        \|A^{\mathsf T} e_\ell\|_*
    \right)^2.
\end{aligned}
\end{equation}

We next control the active contribution. Lemma~\ref{lem:app-one-step-bonus}
lower-bounds each round's Bregman term by
the squared payoff difference of the
chosen row. Summing and separating the
target counts from their discrepancies gives
\begin{align}
    \frac{1}{\eta_T}
    \sum_{t=0}^{T-1}
    D_{h^*}\bigl(\eta_T S_{t+1},\eta_T S_t\bigr)
    &\ge
    \frac{\eta_T}{4M}
    \sum_{t=0}^{T-1}
    (A_{\ell_t i}-A_{\ell_t j})^2
    \notag\\
    &=
    \frac{\eta_T T}{4M}
    \sum_{\ell\in\operatorname{supp}(x^\star)}
    x_\ell^\star(A_{\ell i}-A_{\ell j})^2
    \notag\\
    &\quad+
    \sum_{\ell\in\operatorname{supp}(x^\star)}
    \bigl(X_T(\ell)-Tx_\ell^\star\bigr)
    \frac{\eta_T(A_{\ell i}-A_{\ell j})^2}{4M}.
\label{eq:app-bonus-sum}
\end{align}
Combining \eqref{eq:app-disc-cont-used}, \eqref{eq:app-terminal-bound}, and \eqref{eq:app-bonus-sum}, every cumulative-rounding
realization obeys
\begin{align*}
    R(\pi,T)
    &\ge
    T\operatorname{Val}(A)
    +\operatorname{LAG}_{x^\star}(T) \\
    &\quad+
    \frac{\eta_T T}{4M}
    \sum_{\ell\in\operatorname{supp}(x^\star)}
    x_\ell^\star(A_{\ell i}-A_{\ell j})^2 \\
    &\quad-
    \frac{\eta_T|\operatorname{supp}(x^\star)|^2}{2\mu}
    \left(
        \max_{\ell\in\operatorname{supp}(x^\star)}
        \|A^{\mathsf T} e_\ell\|_*
    \right)^2 \\
    &\quad+
    \sum_{\ell\in\operatorname{supp}(x^\star)}
    \bigl(X_T(\ell)-Tx_\ell^\star\bigr)c_\ell,
\end{align*}
where the coefficient of each terminal discrepancy
collects its effect on both the
terminal potential and the active bonus:
\[
    c_\ell
    :=
    e_\ell^{\mathsf T} A
    Q_h\bigl(-\eta_T TA^{\mathsf T} x^\star\bigr)
    +
    \frac{\eta_T(A_{\ell i}-A_{\ell j})^2}{4M}.
\]
Thus only a linear function of the
terminal discrepancies remains to be controlled. For
the randomized construction, Lemma~\ref{lem:app-cumulative-rounding} gives $\mathbb
E[X_T(\ell)-Tx_\ell^\star]=0$ for every $\ell$, and hence
\[
    \mathbb E\left[
        \sum_{\ell\in\operatorname{supp}(x^\star)}
        \bigl(X_T(\ell)-Tx_\ell^\star\bigr)c_\ell
    \right]=0.
\]
Since $\operatorname{LAG}_{x^\star}(T)\ge0$, discarding this
passive contribution proves \eqref{eq:app-pure-active-main}
in expectation.

For the deterministic claim, apply Lemma~\ref{lem:app-cost-derandomization} to
the coefficients $c_\ell$ above. Its conclusion is
exactly
\[
    \sum_{\ell\in\operatorname{supp}(x^\star)}
    \bigl(X_T(\ell)-Tx_\ell^\star\bigr)c_\ell
    =
    \sum_{\ell\in\mathcal{A}}X_T(\ell)c_\ell
    -T\sum_{\ell\in\mathcal{A}}x_\ell^\star c_\ell
    \ge0.
\]
Substituting this into the preceding pointwise bound and discarding
the nonnegative passive contribution proves
\eqref{eq:app-pure-active-main} for the resulting deterministic pure sequence.
\end{proofof}

Algorithm~\ref{alg:app-active-pure-construction} records the deterministic construction: the linear
costs combine the terminal-potential correction with each
row's contribution to the active bonus.

\begin{algorithm2e}[t]
\DontPrintSemicolon
\caption{Cost-guided cumulative rounding for active extraction}
\label{alg:app-active-pure-construction}
\KwIn{A max-min strategy $x^\star$, horizon $T$, learning
rate $\eta_T$, best responses $i,j$, and curvature
bound $M$.} \KwOut{A deterministic pure trajectory satisfying
the prefix quotas and the active
linear-cost bound.} For every $\ell\in\mathcal{A}$, set
\[
  c_\ell\gets e_\ell^{\mathsf T}A Q_h(-\eta_T T A^{\mathsf T}x^\star)
       +\frac{\eta_T(A_{\ell i}-A_{\ell j})^2}{4M}.
\]
Create $T$ copies of a star with
one leaf for each $\ell\in\operatorname{supp}(x^\star)$, giving its
edge weight $x_\ell^\star$ and linear cost $c_\ell$\;
Apply the cumulative-rounding auxiliary-graph construction of \citet{golz2025apportionment},
assigning zero cost to auxiliary edges\; \While{the
auxiliary graph has a fractional edge}{ Find
a fractional even cycle or maximal fractional
path and form the two feasible pipage
successors by maximal alternating updates\; Select
a successor whose linear objective is
no smaller than the current value\;
}
\For{$t=0,\ldots,T-1$}{ Let $\ell_t$ be the unique leaf
whose original edge in copy $t$ is
rounded to one, and set $\pi_t\gets e_{\ell_t}$\;
}
\Return{$\pi=(\pi_0,\ldots,\pi_{T-1})$}\;
\end{algorithm2e}

Non-identification makes the coefficient of $\eta_T T$
strictly positive. For a sufficiently large horizon,
this term dominates the quadratic rounding loss.
The next corollary records the corresponding threshold.

\begin{corollary}[Asymptotic $\Omega(\eta_T T)$ pure surplus]
\label{cor:app-pure-active-asymptotic}
Under the assumptions of Theorem~\ref{thm:app-pure-active}, if
\[
    T
    \ge
    \frac{
        4M|\operatorname{supp}(x^\star)|^2
        \left(
            \max_{\ell\in\operatorname{supp}(x^\star)}
            \|A^{\mathsf T} e_\ell\|_*
        \right)^2
    }{
        \mu
        \sum_{\ell\in\operatorname{supp}(x^\star)}
        x_\ell^\star(A_{\ell i}-A_{\ell j})^2
    },
\]
then there exists a pure trajectory $\pi$
satisfying
\[
    R(\pi,T)
    \ge
    T\operatorname{Val}(A)
    +
    \frac{\eta_T T}{8M}
    \sum_{\ell\in\operatorname{supp}(x^\star)}
    x_\ell^\star(A_{\ell i}-A_{\ell j})^2.
\]
In particular, for a fixed game and
fixed regularizer geometry, non-identification on the max-min
support implies pure active surplus $\Omega_{A,h}(\eta_T T)$.
\end{corollary}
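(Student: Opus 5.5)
The corollary should follow directly from Theorem~\ref{thm:app-pure-active}. The idea is to show that, under the stated lower bound on $T$, the quadratic rounding loss is at most half of the active bonus. Write
\[
\Sigma:=\sum_{\ell\in\operatorname{supp}(x^\star)}x_\ell^\star(A_{\ell i}-A_{\ell j})^2
\qquad\text{and}\qquad
L_A:=\max_{\ell\in\operatorname{supp}(x^\star)}\|A^{\mathsf T}e_\ell\|_*.
\]
Theorem~\ref{thm:app-pure-active} supplies a deterministic prefix-balanced pure trajectory $\pi$ with
\[
R(\pi,T)\ge T\operatorname{Val}(A)+\frac{\eta_T T}{4M}\Sigma-\frac{\eta_T|\operatorname{supp}(x^\star)|^2}{2\mu}L_A^2 .
\]
The target is to prove
\[
\frac{\eta_T|\operatorname{supp}(x^\star)|^2L_A^2}{2\mu}\le\frac{\eta_T T}{8M}\Sigma .
\]
Subtracting this from the $\eta_TT\Sigma/(4M)$ term then leaves exactly $\eta_TT\Sigma/(8M)$.

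First, Assumption~\ref{ass:app-non-identification} gives some $\ell\in\operatorname{supp}(x^\star)$ with $x_\ell^\star>0$ and $A_{\ell i}\ne A_{\ell j}$. Hence $\Sigma>0$, and the threshold on $T$ in the statement is finite and well defined. Second, I cancel the common factor $\eta_T>0$ and rearrange the target inequality into
\[
T\ge \frac{4M|\operatorname{supp}(x^\star)|^2L_A^2}{\mu\,\Sigma},
\]
which is precisely the hypothesis. Because every manipulation multiplies by positive quantities ($M$, $\mu$, $\Sigma$, $\eta_T$), the inequality directions are preserved.

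For the asymptotic clause, I fix $A$, $x^\star$, $i$, $j$, $h$ and $\delta$. Then $M$, $\mu$, $L_A$, $\Sigma$ and the support size are all constants. The threshold on $T$ is therefore a constant, and the step-size condition holds eventually whenever $\eta_T\to0$. The surplus is then at least $(\Sigma/8M)\,\eta_TT=\Omega_{A,h}(\eta_TT)$.

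There is no real obstacle here. The only points to check carefully are that $M<\infty$, which holds because $\theta\in C^2((0,1))$ and $[\delta,1-\delta]$ is compact, and that $\Sigma>0$; both are already guaranteed by the theorem's hypotheses.
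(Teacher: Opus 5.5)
Your proposal is correct and matches the paper's proof. Both show that the horizon hypothesis is a rearrangement of $\frac{\eta_T|\operatorname{supp}(x^\star)|^2}{2\mu}L_A^2\le\frac{\eta_T T}{8M}\sigma_{ij}^2(x^\star)$, and then substitute this into Theorem~\ref{thm:app-pure-active}, which halves the active term. Your checks that $\sigma_{ij}^2(x^\star)>0$ and $0<M<\infty$ spell out details the paper leaves implicit.
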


\begin{proof}
The horizon condition makes the rounding loss
at most half of the active term
in Theorem~\ref{thm:app-pure-active}:
\[
    \frac{\eta_T|\operatorname{supp}(x^\star)|^2}{2\mu}
    \left(
        \max_{\ell\in\operatorname{supp}(x^\star)}
        \|A^{\mathsf T} e_\ell\|_*
    \right)^2
    \le
    \frac{\eta_T T}{8M}
    \sum_{\ell\in\operatorname{supp}(x^\star)}
    x_\ell^\star(A_{\ell i}-A_{\ell j})^2.
\]
Substituting this inequality into Theorem~\ref{thm:app-pure-active}
gives the stated bound.
\end{proof}

\subsection{Polynomial-time construction and identification}
\label{app:algorithmic}

\begin{remark}[Polynomial-time construction and identification]
\label{rem:app-algorithmic-pure}
To implement the construction, we must find
a max-min strategy with a suitable pair
of best responses and approximate the resulting
linear costs. For rational input, both tasks
can be performed to any prescribed additive
accuracy in polynomial time, subject to polynomial-time
evaluation of $Q_h$ to the required precision.

First compute $\operatorname{Val}(A)$ by linear programming. For
each ordered pair $i\ne j$, solve
\[
\begin{aligned}
    \max_{x\in\mathbb R^{\mathcal{A}}}\quad
    &\sum_{\ell\in\mathcal{A}} x_\ell(A_{\ell i}-A_{\ell j})^2 \\
    \text{subject to}\quad
    &x\ge0,\qquad \mathbf 1^{\mathsf T} x=1,\\
    &A^{\mathsf T} x\ge \operatorname{Val}(A)\mathbf 1,\\
    &(A^{\mathsf T} x)_i=(A^{\mathsf T} x)_j=\operatorname{Val}(A).
\end{aligned}
\]
The feasible points are precisely max-min
strategies for which $i$ and $j$
are learner best responses. Thus a
positive optimum supplies the required $(x^\star,i,j)$.

It remains to obtain the costs used
by the rounding procedure. The proof of
Theorem~\ref{thm:app-pure-active} uses exact coefficients $c_\ell$ for its
existential guarantee. For the computational guarantee, evaluate
$Q_h(-\eta_T T A^{\mathsf T} x^\star)$ accurately enough
to obtain rational coefficients $\widetilde c_\ell$ with
$\max_\ell|\widetilde c_\ell-c_\ell|\le\varepsilon$, and apply Lemma~\ref{lem:app-cost-derandomization} to $\widetilde
c_\ell$. The prefix quota bound then gives
\[
    \sum_{\ell\in\operatorname{supp}(x^\star)}
    \bigl(X_T(\ell)-Tx_\ell^\star\bigr)c_\ell
    \ge-|\operatorname{supp}(x^\star)|\varepsilon.
\]
Thus the lower bound in Theorem~\ref{thm:app-pure-active}
loses at most $|\operatorname{supp}(x^\star)|\varepsilon$, and the
resulting pure sequence is computable in
time polynomial in $T,n,m$, the rational
input length, and the requested precision.
\end{remark}

\subsection{Random games}
\label{app:random-games}

The active-surplus theorem requires a pair of
best responses with different payoff columns on
the max-min support. For random games, the
estimates of \citet[Appendix~F]{su2026nocoin} supply this condition and
a quantitative payoff gap. Lemma~F.2 supplies the
non-identification event, and Lemma~F.5 bounds every row-entry
gap below by $2/(n^2m^3)$ with probability at
least $1-1/(nm)$. Applying the deterministic theorem on
their intersection gives the following guarantee.

\begin{corollary}[Pure active extraction in random games]
\label{cor:app-random-pure}
Let the entries of the $n$-by-$m$ matrix
$A$ be independent and uniformly distributed
on $[-1,1]$. Fix $\delta\in(0,1/m)$ and suppose
\[
    \eta_T
    \le
    \frac{\mu}{
        c_{\|\cdot\|}(n+1)
        \sup_{u\in[-1,1]^m}\|u\|_*
    }
    \left(\frac1m-\delta\right).
\]
Then, with probability at least
\[
    1-
    \frac{n!m!}{(n+m-1)!}
    -\frac1{nm},
\]
for every given horizon $T$, there
exists a pure trajectory $\pi$ satisfying
\[
    R(\pi,T)
    \ge
    T\operatorname{Val}(A)
    +
    \frac{\eta_T T}{Mn^4m^6}
    -
    \frac{\eta_T n^2}{2\mu}
    \left(
        \sup_{u\in[-1,1]^m}\|u\|_*
    \right)^2.
\]
Consequently, if
\[
    T
    \ge
    \frac{Mn^6m^6}{\mu}
    \left(
        \sup_{u\in[-1,1]^m}\|u\|_*
    \right)^2,
\]
then
\[
    R(\pi,T)
    \ge
    T\operatorname{Val}(A)
    +
    \frac{\eta_T T}{2Mn^4m^6}.
\]
\end{corollary}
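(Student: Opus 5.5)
The plan is to reduce Corollary~\ref{cor:app-random-pure} to the deterministic Theorem~\ref{thm:app-pure-active}. First I isolate a good event for the random matrix $A$. Then, on that event, I check the theorem's hypotheses and bound each of its constants by quantities that depend only on $n$, $m$, and the norm.

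\emph{The good event.} Let $E_1$ be the non-identification event of \citet[Lemma~F.2]{su2026nocoin}. I take it to fail with probability at most $n!m!/(n+m-1)!$. On $E_1$ there should be a max-min strategy $x^\star$ and distinct $i,j\in\operatorname{br}(x^\star)$ with $A_{\ell i}\ne A_{\ell j}$ for some $\ell\in\operatorname{supp}(x^\star)$, which is Assumption~\ref{ass:app-non-identification}. Let $E_2$ be the event of \citet[Lemma~F.5]{su2026nocoin} that every row-entry gap satisfies $|A_{\ell i}-A_{\ell j}|\ge 2/(n^2m^3)$ for all rows $\ell$ and all columns $i\ne j$. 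It fails with probability at most $1/(nm)$. A union bound then gives $\mathbb P(E_1\cap E_2)\ge 1-n!m!/(n+m-1)!-1/(nm)$. The step I expect to need the most care is matching the precise statements of Lemmas F.2 and F.5 to this use. Specifically, F.2 must really deliver two distinct best responses to one max-min strategy, not merely generic uniqueness. If F.5 only controls gaps on the support, that suffices as well.

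\emph{Checking the hypotheses and bounding constants on $E_1\cap E_2$.} Fix $(x^\star,i,j)$ from $E_1$. Since $|\operatorname{supp}(x^\star)|\le n$ and each $A^{\mathsf T}e_\ell\in[-1,1]^m$, we have
\[
L_A\le\sup_{u\in[-1,1]^m}\|u\|_*.
\]
So the assumed bound on $\eta_T$ implies the step-size condition of Theorem~\ref{thm:app-pure-active}. On $E_2$, every term in $\sigma^2_{ij}(x^\star)$ satisfies $(A_{\ell i}-A_{\ell j})^2\ge 4/(n^4m^6)$. Since the weights $x^\star_\ell$ sum to one over the support, this gives $\sigma_{ij}^2(x^\star)\ge 4/(n^4m^6)$, and hence
\[
\frac{\eta_TT}{4M}\,\sigma_{ij}^2(x^\star)\ge \frac{\eta_TT}{Mn^4m^6}.
\]
The rounding loss satisfies
\[
\frac{\eta_T|\operatorname{supp}(x^\star)|^2L_A^2}{2\mu}\le \frac{\eta_Tn^2}{2\mu}\Bigl(\sup_{u\in[-1,1]^m}\|u\|_*\Bigr)^2.
\]
Substituting both bounds into \eqref{eq:app-pure-active-main} gives the first displayed bound, for every horizon $T$.

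\emph{The asymptotic form.} This step is arithmetic, in the manner of Corollary~\ref{cor:app-pure-active-asymptotic}. The loss term is at most half the bonus $\eta_TT/(Mn^4m^6)$ exactly when
\[
T\ge \frac{Mn^6m^6}{\mu}\Bigl(\sup_{u\in[-1,1]^m}\|u\|_*\Bigr)^2.
\]
Under this condition the bound becomes $T\operatorname{Val}(A)+\eta_TT/(2Mn^4m^6)$.
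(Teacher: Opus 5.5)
Your proposal is correct and follows essentially the same route as the paper: take the intersection of the events from Lemmas~F.2 and~F.5 of \citet{su2026nocoin} and apply a union bound. On that intersection, use $|\operatorname{supp}(x^\star)|\le n$, $L_A\le\sup_{u\in[-1,1]^m}\|u\|_*$, and $\sigma_{ij}^2(x^\star)\ge 4/(n^4m^6)$, then substitute into Theorem~\ref{thm:app-pure-active}. The only cosmetic difference is that the paper obtains the second bound by checking that the stated horizon implies the threshold of Corollary~\ref{cor:app-pure-active-asymptotic}, whereas you do the same ``loss at most half the bonus'' arithmetic directly on the first bound; the two are equivalent.
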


\begin{proof}
We use the two cited events to
bound the active term from below, then
bound the rounding loss by the total
number of rows. A union bound gives
the stated probability for their intersection. On
this intersection, Lemmas~F.2 and~F.5 of \cite{su2026nocoin} provide
a max-min strategy $x^\star$ and distinct
$i,j\in\operatorname{br}(x^\star)$, and every row $\ell$ satisfies
\[
    |A_{\ell i}-A_{\ell j}|
    \ge
    \frac{2}{n^2m^3}.
\]
Therefore
\[
    \sum_{\ell\in\operatorname{supp}(x^\star)}
    x_\ell^\star(A_{\ell i}-A_{\ell j})^2
    \ge
    \frac{4}{n^4m^6}.
\]
Also, $|\operatorname{supp}(x^\star)|\le n$ and
\[
    \max_{\ell\in\operatorname{supp}(x^\star)}
    \|A^{\mathsf T} e_\ell\|_*
    \le
    \sup_{u\in[-1,1]^m}\|u\|_*.
\]
These estimates also ensure that the assumed
step-size bound implies \eqref{eq:app-pure-step-size}. Substituting them
into Theorem~\ref{thm:app-pure-active} gives
the first reward bound. The stated
horizon condition then allows us to
apply Corollary~\ref{cor:app-pure-active-asymptotic}, giving the second bound.
\end{proof}

% !TEX root = ../main.tex
\section{Extended related work}
\label{app:extended-related-work}

Strategizing against a learner sits between equilibrium
computation, online learning, and optimal control. To
place our results within this landscape, it
helps to keep three choices in view:
what the optimizer knows, which learner is
fixed, and which reward benchmark the
optimizer seeks. Changing any of these
choices can change the computational problem.

\subsection*{Static commitment and dynamic control}

We begin with the classical commitment benchmark.
For an optimizer strategy $x$, let $\operatorname{br}_B(x):=\operatorname*{arg\,max}_{b\in\mathcal{B}}x^{\mathsf
T}Be_b$ be the learner's pure best responses.
With ties resolved in the optimizer's
favor, the one-shot Stackelberg value is
\begin{equation*}
\operatorname{SV}(A,B) :=\max_{x\in\Delta(\mathcal{A})} \max_{b\in\operatorname{br}_B(x)}x^{\mathsf T}Ae_b.
\end{equation*}
For explicitly given two-player payoff matrices, this
value is computable in polynomial time \citep{conitzer2006commitment}.
One considers each follower action $b$ in
turn and optimizes the leader's payoff over
the linear constraints making $b$ a best
response. The favorable tie convention is part
of this benchmark; it does not assert
that a prescribed learning algorithm selects the
optimizer's preferred response at a tie.

The repeated-game problem adds a state to
this static picture. An optimizer action changes
the learner's accumulated experience, which changes later
responses and rewards. Consequently, a profitable trajectory
may deliberately visit states that repeated play
of a commitment strategy would never reach.
\citet{deng2019strategizing} show that mean-based learners can allow
a linear improvement over the Stackelberg benchmark,
whereas no-swap-regret learners cap the optimizer's reward
at that benchmark up to a sublinear
term. Their control formulation exposes the planning
problem, without supplying a general efficient optimizer
for it. Our hardness concerns this full
trajectory objective with a prescribed learner.

The economic literature supplies important structured examples.
\citet{braverman2017sellingnoregretbuyer} show how a seller can extract
almost the entire welfare from a mean-based
buyer, and \citet{guruganesh2024contracting} study utility-maximizing contracts against
a learning agent. \citet{lin2025principalagent} connect repeated principal--agent
interaction to approximate best responses, covering Stackelberg
games, contracting, and persuasion. These settings explain
why exploiting a learner can be valuable;
their structure also matters for the available
algorithms. In Bayesian games, \citet{mansour2022bayesian} study the
corresponding strategic benchmarks and introduce polytope swap
regret, showing that private types require
a more refined treatment of deviations.

\subsection*{Known dynamics and computational hardness}

The closest computational comparisons are the two
papers of Assos and coauthors. \citet{assos2024maximizing} explicitly
anticipate the learner's update rule. In zero-sum
games, they obtain an optimal continuous-time strategy
against replicator dynamics and discrete-time guarantees for
MWU. In general-sum games, they establish hardness
against fictitious play, which need not have
vanishing external regret. This first obstruction concerns
constant additive accuracy, leaving open whether computational
difficulty persists at the scale of the
total reward against standard no-regret dynamics.

\citet{assos2025intractability} answer that question for Hedge/MWU. They
prove an $\Omega(T)$ approximation barrier even when
the game dimensions are bounded by arbitrarily
small fixed powers of the horizon and
the updates are slow. Thus, efficient simulation
of a familiar learning rule does not
imply efficient optimization of the sequence that
drives it. Theorem~\ref{thm:main} broadens the learner class
through the mean-based condition and allows explicitly represented
mixed outputs while comparing their reward with
the best pure trajectory. The extraction argument
must therefore accommodate fractional interleavings, rather than
only schedules selecting one action per round.

Several positive results help locate the boundary
of this obstruction. \citet{guo2023hedge} analyze optimal play
against Hedge in finitely repeated $2\times2$ zero-sum
games, exploiting the low-dimensional state structure. \citet{brown2024control}
give algorithms for locally controllable nonlinear systems
with convex state losses; their repeated-game application
concerns projected-gradient learners and optimizer rewards that
are approximately functions of the learner's state
alone. Such controllability and reward assumptions provide
a route to tractability. They do
not give an unrestricted algorithm for
the arbitrary trajectory benchmark used here.
A useful comparison therefore asks both
how the learner evolves and which
control objective an algorithm actually optimizes.

\subsection*{Unknown payoffs and local Stackelberg targets}

When the learner's payoffs are unknown, even
attaining the static commitment benchmark becomes an
information problem. A best-response oracle can answer
independent queries, but a learner carries its
history from one interaction to the next.
Moving between two query strategies may require
many rounds simply to change the historical
average. \citet{brown2023learners} exhibit games where learning the
Stackelberg strategy is polynomial against no-swap-regret learners
but requires exponentially many rounds against mean-based
no-regret learners. This barrier concerns access
through interaction, whereas our optimizer knows
the game and the learning rule.

\citet{ananthakrishnan2025local} address this difficulty by seeking approximate
local Stackelberg strategies. Their algorithm makes small
changes to historical play and explores adjacent
best-response regions. Under their geometric assumptions and
smoothed analysis, it gives a polynomial-time approximation
scheme: the complexity is polynomial in game
size for fixed accuracy, but exponential in
inverse accuracy. They also prove an exponential
interaction lower bound in inverse accuracy
against fictitious play with unknown learner
payoffs. It should therefore be kept
distinct from hardness of local optimization
with full knowledge of the game.

Locality also has different meanings across models.
\citet{fiez2020implicit} study differential Stackelberg equilibria in continuous
games and derive learning dynamics through the
follower's local response, with convergence guarantees under
the relevant stability conditions. A differential equilibrium,
a local optimum of a piecewise-linear commitment
objective, and a local optimum of a
finite-horizon trajectory are different targets. In particular,
with known two-player matrices and favorable
ties, a global static Stackelberg optimum
is already efficiently computable and is
also locally optimal. The extension suggested
by our hardness is to investigate
locality in the controlled trajectory itself.

Learning to steer an unknown opponent creates
further distinctions. \citet{zhang2025steer} recover information about the
learner's payoffs under additional assumptions on its
dynamics, then use it to steer play.
In a complementary direction, \citet{arunachaleswaran2025unknown} design the
learning algorithm itself to maximize the learner's
utility against a distribution of strategic optimizer
types. Their menu-based formulation treats commitment in
algorithm space. Our computational problem instead fixes
the learner in advance and asks how
well an informed optimizer can respond.

\subsection*{Regret guarantees and learner design}

Regret restrictions constrain attainable rewards, but they
do not generally identify which rewards a
fixed update rule can realize. \citet{brown2023learners} use
generalized equilibria to characterize payoff regions under
asymmetric regret constraints, including the maximal reward
attainable against some no-regret learner. Optimizing over
such a region allows the learner to
vary within its regret class. Computing the
best trajectory against a given Hedge or
\ac{FTRL} rule retains the additional dynamical constraint.

\citet{cai2025proximal} introduce proximal regret and proximal correlated
equilibria. Under their comparator assumptions, projected gradient
descent minimizes this stronger regret notion, and
mirror descent minimizes its Bregman counterpart. These
results concern efficient learner updates and regret
guarantees. To connect them to our problem,
one must also ask whether an optimizer
can efficiently plan against those updates. Combining
their Appendix~C with Theorem~\ref{thm:main} shows that, for
the entropy-based Bregman formulation and admissible comparators,
MWU can have the corresponding regret
guarantee while its optimizer still faces
our approximation barrier. We discuss the
sharper Euclidean projected-gradient question in Appendix~\ref{app:open-questions}.

\subsection*{Zero-sum surplus and long-run objectives}

In zero-sum games, both trajectory benchmarks lie between \(T\operatorname{Val}(A)\) and \(T\operatorname{Val}(A)+\operatorname{Regret}_{\mathsf L}(T)\), so their surplus is sublinear against a no-regret learner. The finer question is how
much of that term can be realized
in a given game. Building on the
entropy-specific analysis of \citet{assos2024maximizing}, \citet{su2026nocoin} develop a
geometric account for separable \ac{FTRL}: fixed max-min
play collects surplus while suboptimal learner actions
disappear, and discrete updates create an additional
Bregman contribution. Their bounds distinguish inverse-rate
passive gains from active gains accumulating
over time, and show how boundary
behavior of the regularizer affects extraction.

Our zero-sum results address the role of
pure control in these mechanisms. Terminal rounding
preserves the passive contribution up to a
support-size loss; prefix balancing also retains
the active contribution under the stated
regularity and step-size conditions. This complements
the general-sum obstruction: the two regimes
differ both in computational difficulty and
in the role of mixed actions.

Finally, finite-horizon reward and long-run average reward
pose different questions. \citet{masoumian2026average} study average-reward reinforcement
learning with model selection and apply it
to opponents with unknown bounded memory. The
induced state space, its memory dependence, and
the required MDP assumptions are central to
their guarantees. Cumulative-score learners can have unbounded
state even in a fixed finite game.
Moreover, an $\Omega(T)$ finite-horizon approximation gap does
not automatically give hardness for an infinite-horizon
value: one must fix a single learning
rule and ensure that the reward
gap survives in the long-run average.

% !TEX root = ../main.tex
\section{Open questions and future work}
\label{app:open-questions}

The results suggest four directions in which
the benchmark, the learner, the time scale,
or the input model changes. Each requires
a separate complexity analysis. Section~\ref{sec:discussion} highlights
directions~(B) and~(D).

\paragraph{(A) Local optima of controlled trajectories.}
The local Stackelberg results of \citet{ananthakrishnan2025local} motivate
an extension from static commitments to dynamic
control. With known matrices and favorable ties,
a global static Stackelberg optimum is already
computable by linear programming. A more demanding
target is local optimality of the full
reward $R(x,T)$. For a prescribed local radius
$\rho\in(0,1)$, can we efficiently find a mixed
trajectory $x$ such that $R(x',T)\le R(x,T)+\varepsilon T$
whenever $\max_{t<T}\|x'_t-x_t\|_1\le\rho$? These deviations change every round
only slightly, but their effects can accumulate
through the learner's state. Does a polynomial-time
approximation scheme exist for fixed $\varepsilon$
and $\rho$, or can this local
target retain computational hardness? For smooth
learners, first-order stationarity is a natural
starting point, but it does not
by itself certify this finite-neighborhood guarantee.

\paragraph{(B) Strategizing against proximal-regret learners.}
\citet{cai2025proximal} already give efficient algorithms for minimizing
proximal regret. The open computational target is
the optimizer's planning problem against a fixed
learner with such a guarantee. The entropy-based
Bregman version includes MWU under suitable comparator
assumptions, so it already intersects our hard
class. A sharper test case is recursive
Euclidean projected gradient descent, distinct from the
Euclidean \acs{FTRL} covered by our hardness result:
does its proximal-regret guarantee
permit efficient near-optimal steering in arbitrary general-sum
games, beyond the controllable instances studied by
\citet{brown2024control}? A related value question is whether
the Euclidean rule caps reward at $T\operatorname{SV}(A,B)+o(T)$
or can allow a linear excess. The achievable
reward and the complexity of finding a
trajectory attaining it should be examined separately;
efficient regret minimization settles neither question.

\paragraph{(C) Hardness beyond a prescribed horizon.}
Fix one game, one initialization, and one
horizon-independent learner $\mathsf L$, such as Hedge
with a prescribed anytime step-size schedule. Writing
$R_{\mathsf L}$ for reward under this rule,
consider
\begingroup
\setlength{\abovedisplayskip}{3pt}
\setlength{\belowdisplayskip}{3pt}
\setlength{\abovedisplayshortskip}{0pt}
\setlength{\belowdisplayshortskip}{3pt}
\begin{equation*}
  V_\infty(A,B;\mathsf L)
  :=\sup_{x\in\Delta(\mathcal{A})^{\mathbb N}}
    \liminf_{T\to\infty}\frac{R_{\mathsf L}(x_{<T},T)}{T}.
\end{equation*}
\endgroup
Can this value be approximated from the
game encoding and $\varepsilon$, without an explicit
horizon parameter? When do polynomial-size periodic or
finite-memory controls suffice for $\varepsilon$-optimality? Our reduction
does not settle this: its learner parameters
may depend on $T$, and a profitable
finite transient may vanish in the average.
For an anytime no-external-regret learner in a
zero-sum game, the value is already $\operatorname{Val}(A)$
by the regret sandwich. General-sum cumulative-score dynamics
are therefore the main target, with bounded-memory
average-reward methods \citep{masoumian2026average} providing a useful comparison.

% \Needspace{22\baselineskip}
\paragraph{(D) Smoothed complexity.}
Theorem~\ref{thm:main} is a worst-case statement over exactly
specified payoffs. The score identities~\eqref{eq:score-identities} hold with
equality, and the local and edge alarms
are tuned so that the near-leading constraints~\eqref{eq:local-comparisons}
hold within the mean-based slack $\gamma_TT$. Following
the smoothed-analysis framework of \citet{spielman2004smoothed}, consider adding
independent uniform noise in $[-\tau,\tau]$ to every
entry of $A$ and $B$. Along any
fixed optimizer trajectory, this shifts each cumulative
score by at most $\tau T$.
The first question is therefore the largest
$\tau$, as a function of $\gamma_T$ and
the game size, for which the extraction
argument of Section~\ref{sec:soundness} and the $\Omega(T)$ reward
gap survive. For larger $\tau$, do perturbed
instances admit a polynomial-time approximation scheme? The
smoothed guarantees of \citet{ananthakrishnan2025local} for local Stackelberg
targets, under their geometric assumptions and Gaussian
perturbations, provide a comparison for this different
control objective.

The zero-sum side poses a complementary question.
Here we perturb $A$ and set $B=-A$
to preserve the zero-sum structure. Corollary~\ref{cor:app-random-pure} treats
independent uniform entries, corresponding to the zero-centered
case with $\tau=1$. A strict pure saddle
point can survive all sufficiently small bounded
perturbations, so active surplus cannot be guaranteed
uniformly over arbitrary base games. Which base
games and noise scales make Assumption~\ref{ass:app-non-identification} and
a quantitative lower bound on $\sigma_{ij}^2(x^\star)$ hold
with high probability? Such conditions would extend
the active-surplus guarantee of Theorem~\ref{thm:app-pure-active} to smoothed
instances. Payoffs can be rescaled after perturbation
to retain the theorem's normalization.

\end{document}